\renewcommand{\emptyset}{\varnothing}
\renewcommand{\models}{\vDash}
\newcommand{\aprolog}{{\ensuremath \alpha}{Pro\-log}\xspace}
\newcommand{\ab}[1]{\langle #1 \rangle}
\newcommand{\NLseq}{\ensuremath{NL^\seq}\xspace}
\newcommand{\INLseq}{\ensuremath{INL^\seq}\xspace}
\newcommand{\Andd}{\bigwedge}
\newcommand{\Orr}{\bigvee}
\newcommand{\andd}{\wedge}
\newcommand{\orr}{\vee}
\newcommand{\impp}{\supset}
\newcommand{\nott}{\neg}
\newcommand{\true}{\top}
\newcommand{\false}{\bot}
\newcommand{\eq}{\approx}
\newcommand{\trueR}{{\top}R}
\newcommand{\falseL}{{\bot}L}
\newcommand{\impL}{{\impp}L}
\newcommand{\impR}{{\impp}R}
\newcommand{\andL}{{\andd}L}
\newcommand{\andR}{{\andd}R}
\newcommand{\orL}{{\orr}L}
\newcommand{\orR}{{\orr}R}
\newcommand{\allL}{{\forall}L}
\newcommand{\allR}{{\forall}R}
\newcommand{\exL}{{\exists}L}
\newcommand{\exR}{{\exists}R}
\newcommand{\newL}{{\new}L}
\newcommand{\newR}{{\new}R}
\newcommand{\hyp}[2][]{\infer[#1]{#2}{}}
\newcommand{\new}[1][]{\reflectbox{\sf{#1{}N}}}
\newcommand{\fresh}{\mathrel{\#}}
\newcommand{\abs}[2]{{\ab{#1}{#2}}}
\newcommand{\name}[1]{\mathsf{#1}}
\newcommand{\Aa}{\name{a}}
\newcommand{\Ab}{\name{b}}
\newcommand{\Ac}{\name{c}}
\newcommand{\swap}[3]{(#1~#2)\act#3}
\newcommand{\act}{\boldsymbol{\cdot}}
\newcommand{\labelSec}[1]{\label{sec:#1}}
\newcommand{\refSec}[1]{Section~\ref{sec:#1}}
\newcommand{\labelFig}[1]{\label{fig:#1}}
\newcommand{\refFig}[1]{Figure~\ref{fig:#1}}
\newcommand{\labelThm}[1]{\label{thm:#1}}
\newcommand{\refThm}[1]{Theorem~\ref{thm:#1}}
\newcommand{\labelLem}[1]{\label{lem:#1}}
\newcommand{\refLem}[1]{Lemma~\ref{lem:#1}}
\newcommand{\labelProp}[1]{\label{prop:#1}}
\newcommand{\refProp}[1]{Proposition~\ref{prop:#1}}
\newcommand{\subs}[2]{\{#1/#2\}}
\newcommand{\seq}{\Rightarrow}
\newcommand{\nd}{\vdash}
\newcommand{\FOLNabla}{\ensuremath{FO\lambda^\nabla}\xspace}
\newtheorem{theorem}{Theorem}[section]
\newtheorem{proposition}[theorem]{Proposition}
\newtheorem{lemma}[theorem]{Lemma}
\newtheorem{corollary}[theorem]{Corollary}
\theoremstyle{remark}
\newtheorem{remark}[theorem]{Remark}
\theoremstyle{definition}
\newtheorem{definition}[theorem]{Definition}
\renewcommand{\phi}{\varphi}
\newcommand{\FLseq}{\ensuremath{ FL_{Seq}}\xspace}
\newcommand{\FL}{\ensuremath{FL}\xspace}
\newcommand{\NL}{\ensuremath{NL}\xspace}
\renewcommand{\bar}[1]{\overline{#1}}
\renewcommand{\vec}{\bar}
\newcommand{\propp}{o}
\newcommand{\Var}{\mathbb{V}}
\newcommand{\Name}{\mathbb{A}}
\begin{document}

\title{A Simple Sequent Calculus for Nominal Logic}
\author{James Cheney}
\maketitle
\begin{abstract}
  Nominal logic is a variant of first-order logic that provides
  support for reasoning about bound names in abstract syntax.  A key
  feature of nominal logic is the new-quantifier, which quantifies
  over \emph{fresh names} (names not appearing in any values
  considered so far).  Previous attempts have been made to develop
  convenient rules for reasoning with the new-quantifier, but we argue
  that none of these attempts is completely satisfactory.

  In this article we develop a new sequent calculus for nominal logic in
  which the rules for the new-quantifier are much simpler than in previous
  attempts.  We also prove several structural and metatheoretic
  properties, including cut-elimination, consistency, and
 equivalence to Pitts' axiomatization of nominal
  logic.
\end{abstract}

\section{Introduction}

Nominal logic~\cite{pitts03ic} is a variant of first-order logic with
additional constructs for dealing with \emph{names} and \emph{binding}
(or \emph{name-abstraction}) based on the primitive notions of
bijective renaming (\emph{swapping}) and name-independence
(\emph{freshness}).  It was introduced by Pitts~\cite{pitts03ic} as a
first-order and reasonably well-behaved fragment of
\emph{Fraenkel-Mostowski set theory}, the setting for Gabbay and
Pitts' earlier foundational work on formalizing names, freshness, and
binding using swapping~\cite{gabbay02fac}.

One of the most interesting features of nominal logic is the presence
of a novel form of quantification over \emph{fresh names}.  The
formula $\new \Aa.\phi$ means, intuitively, ``for fresh names
$\Aa$, 
$\phi$ holds''.  The intended semantics of nominal logic interprets
expressions as values in \emph{finitely-supported nominal sets}, or
sets acted upon by name-swapping and such that each value depends on
at most finitely many names.  The inspiration for the
$\new$-quantifier is the fact that in the presence of infinitely many
names, a fresh name can be chosen for any finitely-supported value,
and equally-fresh names are indistinguishable.  As a result, a
property $\phi(a)$ holds for \emph{some} fresh name $a$ if and only if
it holds for \emph{all} fresh names; in either case, we say that $\new
a.\phi$ holds.

Several formalizations of nominal logic have been investigated.  Pitts
introduced nominal logic as a Hilbert-style axiomatic system.
Gabbay~\cite{gabbay07jal} proposed Fresh Logic (\FL), an
intuitionistic Gentzen-style natural deduction system.  Gabbay and
Cheney~\cite{gabbay04lics} presented \FLseq, a sequent calculus
version of Fresh Logic.  Sch\"opp and Stark have developed a dependent
type theory of names and binding that contains nominal logic as a
special case~\cite{schoepp04csl}.

However, none of these formalizations is ideal.  Hilbert systems have
well-known deficiencies for computer science applications.  \FL and
\FLseq rely on a complicated technical device called \emph{slices} for
the rules involving $\new$.  Sch\"opp and Stark's system is much more
powerful than seems necessary for many applications of nominal logic,
and there are many unresolved issues, such as proof normalization and
the decidability of the equality and typechecking judgments.

In this article we present a new and simpler sequent calculus for
nominal logic.  Its main novelty is the use of freshness information
in typing contexts needed in reasoning
about $\new$-quantified formulas, rather than the technically more
cumbersome \emph{slices} used in \FL and \FLseq.  We prove basic
proof-theoretic results such as cut-elimination, establishing that
this calculus is proof-theoretically sensible.  In addition, we prove
that \NLseq is consistent and equivalent to Pitts' original
axiomatization of nominal logic.

This article incorporates some revised material from a previous
conference publication~\cite{cheney05fossacs}, extended with detailed
proofs and additional results concerning conservativity. That paper
also gave a sound and complete embedding of Miller and Tiu's
\FOLNabla~\cite{miller05tocl} in \NLseq, extending an
earlier result by Gabbay and Cheney~\cite{gabbay04lics} which gave a
sound, but nonconservative translation from \FOLNabla to \FLseq.
These results are not presented in this article.

\section{Background}

\subsection{Pitts' axiomatization}\labelSec{background-nl}

As presented by Pitts, nominal logic consists of typed first-order
logic with equality and with a number of special types, type
constructors, and function and relation symbols formalized by a
collection of axioms.  In particular, the basic sort symbols of
nominal logic are divided into \emph{data types} $\delta,\delta'$ and
\emph{atom types} $\nu,\nu'$ (which we shall also preferentially call
\emph{name types}).  In addition, whenever $\nu$ is a name type and
$\tau$ is a type, there exists another type $\abs{\nu}{\tau}$ called
the \emph{abstraction} of $\tau$ by $\nu$.  

\begin{figure}[tb]
  \[
  \begin{array}{rc}
    \multicolumn{2}{l}{\text{Swapping}}\\
    (CS_1)& \forall a{:}\nu, x{:}\tau.\; (a~a)\act x \eq x\\
    (CS_2)& \forall a,a'{:}\nu,x{:}\tau.\; (a~a')\act(a~a')\act x \eq x\\
    (CS_3)& \forall a,a'{:}\nu.\; (a~a')\act a \eq a'\\
    \multicolumn{2}{l}{\text{Equivariance}}\\
    (CE_1)& \forall a,a'{:}\nu,b,b'{:}\nu',x{:}\tau.\; (a~a')\act (b~b')\act x \eq ((a~a')\act b~(a~a')\act b')\act(a~a')\act x\\
    (CE_2)& \forall a,a'{:}\nu,b{:}\nu',x{:}\tau.\; b\fresh x \impp (a~a')\act b \fresh (a~a') \act x\\
    (CE_3)& \forall a,a'{:}\nu,\vec{x}{:}\vec{\tau}.\; (a~a')\act f(\bar{x}) \eq f((a~a')\act \vec{x}) \\
    (CE_4)& \forall a,a'{:}\nu,\vec{x}{:}\vec{\tau}.\; p(\vec{x}) \impp p((a~a')\act \vec{x}) \\
    (CE_5) & \forall b,b'{:}\nu', a{:}\nu, x{:}\tau.\; (b~b')\act (\abs{a}x) \eq \abs{(b~b')\act a} ((b~b')\act x)\\
    \multicolumn{2}{l}{\text{Freshness}} \\
    (CF_1) & \forall a,a'{:}\nu, x{:}\tau.\; a\fresh x \andd a' \fresh x \impp (a~a')\act x \eq x\\
    (CF_2) & \forall a,a'{:}\nu.\;a \fresh a' \iff a \not\eq a'\\
    (CF_3) & \forall a{:}\nu,a'{:}\nu'.\; a\fresh a'\\
    (CF_4) & \forall \vec{x}{:}\vec{\tau}.\; \exists a{:}\nu.~ a \fresh \vec{x}\\
    \multicolumn{2}{l}{\text{$\new$-quantifier}} \\
    (CQ)& \forall \vec{x}. (\new a{:}\nu.\;\phi) \iff (\exists a{:}\nu.\; a\fresh \vec{x} \andd \phi)\\
    \multicolumn{2}{l}{\text{where $FV(\new a.\phi) \subseteq \{\vec{x}\}$}}\\
    \multicolumn{2}{l}{\text{Abstraction}} \\
    (CA_1)& \begin{array}{rcl}
      \forall a,a'{:}\nu, x,x'{:}\tau.\; \abs{a}x \eq \abs{a'}x' &\iff& (a \eq a' \andd x \eq x')\\
      &\orr& (a' \fresh x \andd x' \eq (a~a')\act x)
    \end{array}\\
    (CA_2)& \forall y{:}\abs{\nu}{\tau}.\exists a{:}\nu,x{:}\tau.~ y \eq \abs{a}x
  \end{array}
  \]
\hrule
  \caption{Axioms of Classical Nominal Logic}\labelFig{nl-ax}
\end{figure}

Besides possessing equality at every type, nominal logic includes a
binary \emph{freshness} relation symbol $fresh_{\nu\tau} : \nu,
\tau \to o$ for each name type $\nu$ and type $\tau$. In addition, nominal
logic includes two special function symbols $swap_{\nu\tau} : \nu
, \nu , \tau \to \tau$ and $abs_{\nu\tau}: \nu , \tau
\to \abs{\nu}{\tau}$, called \emph{swapping} and \emph{abstraction}
respectively.  When there is no risk of confusion, we abbreviate
formulas of the form $fresh_{\nu\tau}(a,t)$ as $a \fresh t$, and terms
of the form $swap_{\nu\tau}(a,b,t)$ and $abs_{\nu\tau}(a,t)$ as
$\swap{a}{b}{t}$ and $\abs{a}{t}$ respectively.
In addition, besides the ordinary $\forall$ and $\exists$ quantifiers,
nominal logic possesses a third quantifier, called the
\emph{fresh-name quantifier} and written $\new$.  A $\new$-quantified
formula $\new x{:}\nu.\phi$ may be constructed for any name-type
$\nu$.

Pitts presented a Hilbert-style axiom system for nominal logic shown
in \refFig{nl-ax}.  The axioms are divided into five groups:
\begin{itemize}
\item \emph{Swapping axioms ($CS$)}: describe the behavior of the
  swapping operation: swapping a name for itself has no effect ($CS_1$),
  swapping is involutive ($CS_2$), and swapping exchanges names ($CS_3$).

\item \emph{Equivariance axioms ($CE$)}: prescribe the
  \emph{equivariance} property, namely that all relations are
  preserved by and all function symbols commute with swapping.  In
  particular, ($CE_1$) says that the swapping function symbol itself is
  equivariant; ($CE_2$) says that freshness is equivariant, ($CE_3$) says that all
  other function symbols are equivariant, and ($CE_4$) says that all other
  relation symbols are equivariant.  Also, ($CE_5$) says that abstraction is
  equivariant.

\item \emph{Freshness axioms ($CF$)}: describe the behavior of the
  freshness relation (and its interaction with swapping).  $(CF_1)$
  says that two names fresh for a value can be exchanged without
  affecting the value.  ($CF_2$) says that freshness coincides with
  inequality for names.  ($CF_3$) says that distinct name-types are
  disjoint.  Finally, ($CF_4$) expresses the \emph{freshness
    principle}, namely, that for any finite collection of values, a
  name fresh for all the values simultaneously may be chosen.

\item \emph{$\new$-quantifier axiom scheme $(CQ)$}: Pitts' original
  formalization introduced no new inference rules for $\new$.
  Instead, $\new$ was defined using the axiom scheme $Q$, which
  asserts $\forall\vec{x}.(\new a.\phi \iff \exists a.a \fresh \bar{x}
  \andd \phi)$, where $FV(\phi) \subseteq \{a,\vec{x}\}$.

\item \emph{Abstraction axioms $(CA)$}: These define special
  properties of the abstraction function symbol.  Specifically,
  $(CA_1)$ defines equality on abstractions as either structural
  equality or equality up to ``safe'' renaming of bound names.  Gabbay
  and Pitts showed that this generalizes $\alpha$-equivalence in,
  for example, the lambda-calculus~\cite{gabbay02fac}; we shall not
  repeat the argument here.  Axiom $(CA_2)$ states a surjectivity
  property for abstraction: any value of abstraction type
  $\abs{\nu}{\tau}$ can be written as $\abs{a}{x}$ for some name
  $a:\nu$ and value $x:\tau$.
\end{itemize}

\subsection{Gentzen systems}\labelSec{the-problem}

While admirable from a reductionist point of view, Hilbert systems
have well-known deficiencies: Hilbert-style proofs can be highly
nonintuitive and circuitous.  Instead, Gentzen-style \emph{natural
  deduction} and \emph{sequent} systems provide a more intuitive
approach to formal reasoning in which logical connectives are
explained as \emph{proof-search} operations.  Gentzen systems are
especially useful for computational applications, such as automated
deduction and logic programming.  Such systems are also convenient for
relating logics by proof-theoretic translations.

Gentzen-style rules for $\new$ have been considered in previous work.
Pitts~\cite{pitts03ic} proposed sequent and natural deduction rules
for $\new$ based on the observation that
\[ \forall a{:}\nu. (a \fresh
\bar{x} \impp \phi(a,\bar{x}))\impp \new a{:}\nu.\phi(a,\bar{x}) \impp \exists a{:}\nu. (a \fresh \bar{x} \andd \phi(a,\bar{x})) \;.\]
These rules (see \refFig{evolution-rules}(NL)) are symmetric,
emphasizing $\new$'s self-duality.  However, they are not closed under
substitution, which complicates proofs of  cut-elimination or
proof-normalization properties.

Gabbay~\cite{gabbay07jal} introduced an intuitionistic natural
deduction calculus called Fresh Logic (\FL) and studied semantic
issues including soundness and completeness as well proving
proof-normalization. Gabbay and Cheney~\cite{gabbay04lics}
presented a similar sequent calculus called \FLseq.  
Both \FL and \FLseq had complex rules for $\new$.  In \FL, Gabbay
introduced a technical device called \emph{slices} for obtaining rules
that are closed under substitution.  (For the purpose of this
discussion, it is not necessary to go into the details of what slices
are, since we will show that we can do without them.)
Technically, a slice $\phi[a \#\bar{u}]$
of a formula $\phi$ is a decomposition of the formula as
$\phi(a,\bar{x})[\bar{u}/\bar{x}]$ for fresh variables $\bar{x}$, such
that $a$ does not appear in any of the $\bar{u}$.  Slices were used in
both \FL and \FLseq to deal with $\new$ (see
\refFig{evolution-rules}(\FL,\FLseq)).  The slice-based rules shown in
\refFig{evolution-rules}(\FLseq) \emph{are} closed under substitution,
so proving cut-elimination for these rules is relatively
straightforward once several technical lemmas involving slices have
been proved.  Noting that the \FLseq rules are structurally similar to
$\allL$ and $\exR$, respectively, Gabbay and Cheney observed that
alternate rules in which $\newL$ was similar to $\exL$ and $\newR$
similar to $\allR$ were possible (see
\refFig{evolution-rules}($FL_{Seq}'$)).  These rules seem simpler and
more deterministic; however, they still involve slices.

Experience gained in the process of implementing \aprolog, a logic
programming language based on nominal logic~\cite{cheney04iclp},
suggests a much simpler reading of $\new$ as a proof-search operation
than that implied by the $\FL$-style rules.  In \aprolog, when a
$\new$-quantifier is encountered (either in a goal or program clause),
proof search proceeds by generating a fresh name $\Aa$ to be used for
the $\new$-quantified name.  Besides satisfying a syntactic freshness
requirement (like
eigenvariables in $\forall$-introduction or $\exists$-elimination
rules), the fresh name is also required to be \emph{semantically
  fresh}, that is, fresh for all values appearing in the derivation up
to the point at which it is generated.  In contrast, the proof-search
interpretation suggested by $\FL$-style rules is to search for a
suitable slice of the $\new$-quantified formula.  This reading seems
much less deterministic than that employed in \aprolog.  

\begin{figure}[tb]
\[
\begin{array}{ccl}
\infer[\newR]{\Gamma \seq \new a{:}\nu.\phi,\Delta}{\Gamma,a \fresh \bar{x} \seq \phi,\Delta& (\bar{x} = FV(\Gamma,\new a.\phi,\Delta))}
& 
\infer[\newL]{\Gamma,\new a{:}\nu.\phi\seq \Delta }{\Gamma,a \fresh \bar{x},\phi \seq \Delta& (\bar{x} = FV(\Gamma,\new a{:}\nu.\phi,\Delta))}
& (NL)\smallskip\\
\infer[\new{I}]{\Gamma \nd \new a{:}\nu. \phi[a\#\bar{t}]}{\Gamma \nd u \fresh \bar{t} & \Gamma \nd \phi[a\#\bar{t}][u/a]}
&
\infer[\new{E}]{\Gamma\seq \psi }{\Gamma \nd \new \Aa{:}\nu.\phi[a\#\bar{t}] & \Gamma \nd u \fresh \bar{t} & \Gamma,\phi[u/a] \nd \psi}
& (FL)\smallskip\\
\infer[\newR]{\Gamma,u \fresh \bar{t} \seq \new a{:}\nu.\phi[a\#\bar{t}]}{\Gamma,u \fresh \bar{t} \seq \phi[u/a]}\quad
&
\infer[\newL]{\Gamma,u \fresh \bar{t},\new a{:}\nu. \phi[a\#\bar{t}]\seq \psi }{\Gamma,u \fresh \bar{t},\phi[u/a] \seq \psi}
& (\FLseq)\smallskip\\
\infer[\newR]{\Gamma \seq \new a{:}\nu. \phi[a\#\bar{t}]}{\Gamma,a \fresh \bar{t} \seq \phi & (a \not\in FV(\Gamma,\psi))}&
\infer[\newL]{\Gamma,\new a{:}\nu. \phi[a\#\bar{t}]\seq \psi }{\Gamma,a \fresh \bar{t},\phi \seq \psi&(a \not\in FV(\Gamma,\psi))}
& (FL_{Seq}')
\smallskip\\
\infer[\newR]{\Sigma;\Gamma \seq \new \Aa{:}\nu.\phi}
        {\Sigma\#\Aa{:}\nu;\Gamma \seq \phi & (\Aa \notin \Sigma)}
&
\infer[\newL]{\Sigma;\Gamma, \new \Aa{:}\nu.\phi \seq \psi}
        { \Sigma\#\Aa{:}\nu;\Gamma, \phi \seq \psi & (\Aa \notin \Sigma)}
&(\NLseq)
\end{array}\]
\hrule
\caption{Evolution of rules for $\new$}\labelFig{evolution-rules}
\end{figure}

In this article we present a simplified sequent calculus for nominal
logic, called \NLseq, in which slices are not needed in the rules for
$\new$ (or anywhere else).  Following Urban, Pitts, and
Gabbay~\cite{urban04tcs,gabbay07jal}, and our prior
work~\cite{cheney06jsl}, we employ a new syntactic class of
\emph{name-symbols} $\Aa,\Ab,\ldots$ different from ordinary variables
$x,y,z,\ldots$.  Like variables, such name-symbols may be bound (by
$\new$), but unlike variables, two distinct name-symbols always denote
distinct name values.  As explained in our previous
paper~\cite{cheney06jsl}, name-symbols can be used to construct ground
terms, which is convenient form the perspective of studying Herbrand
models and consistency.  In place of slices, we introduce contexts
that encode information about freshness as well as identifying the
types of variables and name-symbols.  Specifically, contexts
$\Sigma\#\Aa{:}\nu$ may be formed by adjoining a \emph{fresh
  name-symbol} $\Aa$ which is also assumed to be semantically fresh
for any value mentioned in $\Sigma$.  Our rules for $\new$
(\refFig{evolution-rules}(\NLseq)) are in the spirit of the original
rules and are very simple.

Besides presenting the sequent calculus and proving structural
properties such as cut-elimination, we verify that \NLseq and Pitts'
axiomatization \NL are equivalent.  We also present a syntactic proof
of the consistency of the nonlogical rules, which together with
cut-elimination implies consistency of the whole system.

The structure of this article is as follows: \refSec{sequent-calculus}
presents the sequent calculus \NLseq along with proofs of structural
properties.  \refSec{applications} discusses several applications,
including proofs of consistency and equivalence of \NLseq
to \NL.  \refSec{concl} concludes.  

This article builds upon prior work by Gabbay and
Cheney~\cite{gabbay04lics} and Gabbay~\cite{gabbay07jal}, which
introduced sequent and natural-deduction calculi for nominal logic,
based on slices.  The closest-related prior publication is
Cheney~\cite{cheney05fossacs}, which introduced a single-conclusion,
intuitionistic version of $\NLseq$ with the simpler rules for
$\new$-quantifiers shown above.  This article generalizes the approach taken there
and provides detailed proofs of the main results, along with proofs of
new results including equivalence to classical nominal
logic.

\section{Sequent Calculus}\labelSec{sequent-calculus}
\subsection{Syntax}

The types $\tau$, terms $t$, and formulas $\phi$ of \NLseq are
generated by the following grammar:
\[\begin{array}{rcll}
  \tau,\sigma &::=& \delta \mid \nu 
\mid \abs{\nu}{\tau}\\
  t,u &::=& x\mid \Aa 
\mid c  \mid f(\vec{t})
& \|~ \swap{a}{b}{t} \mid \abs{a}{t}\\
  \phi,\psi &::=& \true \mid \false \mid p(\vec{t}) \mid \phi \andd
  \psi \mid \phi \orr \psi \mid \phi \impp \psi \mid \forall
  x{:}\tau.\phi \mid \exists x{:}\tau.\phi \mid \new \Aa{:}\nu.\phi
  &\|~  t \eq u \mid
  t \fresh u
\end{array}\]
The constructs to the right of $\|$ are syntactic sugar that are
definable in terms of the core language as explained below; we list
them in the grammar for ease of reference.
The base types are datatypes $\delta$ and name-types $\nu$; additional
types are formed using the abstraction type constructor.  Terms are
first-order, with variables $x,y$ are drawn from a countably infinite
set $\Var$; also, name-symbols $\Aa,\Ab$ are drawn from a countably
infinite set $\Name$ disjoint form $\Var$.  The letters $a,b$ are typically
used for terms of some name-type
$\nu$.  
Negation and
logical equivalence are defined as follows:
\[\nott \phi = (\phi \impp \bot)\qquad \phi \iff \psi = (\phi \impp
\psi) \andd (\psi \impp \phi)\]

We assume given a signature that maps constant symbols $c$ to types
$\delta$, function symbols $f$ to sorts $\tau_1,\ldots,\tau_n \to
\delta$, and relation symbols to sorts $\tau_1,\ldots,\tau_n \to o$, and containing at least the following declarations:
\[\begin{array}{rclcrcl}
  swap_{\nu\tau} &:& \nu, \nu , \tau \to \tau &&
  abs_{\nu\tau} &:& \nu , \tau \to \abs{\nu}{\tau}\\
eq_{\tau} &:& \tau,\tau \to o && fresh_{\nu,\tau} &:& \nu,\tau \to o
\end{array}\]
for name-types $\nu$ and types $\tau$.  The subscripts are dropped
when clear from context. The notations $\swap{a}{b}{t}$ and
$\abs{t}{u}$ are syntactic sugar for the terms $swap(a,b,t)$ and
$abs(t,u)$, respectively.  Likewise, $t \eq u$ and $t \fresh u$ are
syntactic sugar for $eq(t,u)$ and $fresh(t,u)$,
respectively.  The functions
$FV(\cdot)$, $FN(\cdot)$, $FVN(\cdot)$ calculate the sets of free
variables, name-symbols, or both variables and name-symbols of a term
or formula (see \refFig{free-vars-names}).  We lift the swapping
operation to formulas as shown in \refFig{formula-swapping}.

\begin{figure}[tb]
\[\begin{array}{rcll}
  FV(x) &=& \{x\}\\
  FV(\Aa) &=& \emptyset\\
  FV(Qx{:}\sigma.\phi) &=& FV(\phi) - \{x\} & (Q \in \{\forall, \exists\})\\
  FV(\new \Aa{:}\nu.\phi) &=& FV(\phi)\\
\\
  FN(x) &=& \emptyset\\
  FN(\Aa) &=& \{\Aa\}\\
  FN(Qx{:}\sigma.\phi) &=& FN(\phi) & (Q \in \{\forall, \exists\})\\
  FN(\new\Aa{:}\nu.\phi) &=& FN(\phi) - \{\Aa\}\\
  \\
  F\alpha(c) = F\alpha(\top) = F\alpha(\bot) &=& \emptyset\\
  F\alpha(f(\vec{t})) = F\alpha(p(\vec{t})) &=& \bigcup F\alpha(t_i)\\
 F\alpha(\phi \circ \psi) &=& F\alpha(\phi) \cup F\alpha(\psi) & (\circ \in \{\andd, \orr, \impp\})\\
  F\alpha(\swap{a}{b}{t}) &=& F\alpha(a) \cup F\alpha(b) \cup F\alpha(t)\\
  F\alpha(\abs{a}{t}) &=& F\alpha(a) \cup F\alpha(t)\\
  \\
  FVN(t) &=& FV(t) \cup FN(t)
\end{array}\]
\hrule
\caption{Free variables and names (note $F\alpha$ stands for either
  $FV$ or $FN$)}\labelFig{free-vars-names}
\[\begin{array}{rcll}
\swap{a}{b}{\phi} &=& \phi & (\phi \in \{\top,\bot\})\\
\swap{a}{b}{p(\vec{t})} &=& p(\swap{a}{b}{\vec{t}})\\
\swap{a}{b}{\phi \circ \psi} &=& \swap{a}{b}{\phi} \circ
\swap{a}{b}{\psi} & (\circ \in \{\andd,\orr,\impp\})\\
\swap{a}{b}{Qx{:}\sigma.\phi} &=& Qx{:}\sigma.\swap{a}{b}{\phi} &
(Q \in \{\forall,\exists\}, x \notin FV(a)\cup FV(b))\\
\swap{a}{b}{\new \Aa{:}\nu.\phi} &=& \new \Aa{:}\nu.\swap{a}{b}{\phi}
& (\Aa \notin FN(a) \cup FN(b))
\end{array}
\]
\hrule
\caption{Swapping for formulas}\labelFig{formula-swapping}
\end{figure}

The \emph{typing contexts} used in \NLseq are generated by the
grammar:
\[\Sigma ::= \cdot \mid \Sigma,x{:}\tau \mid \Sigma\# \Aa{:}\nu\]
We often write  $\cdot,x{:}\tau$ and $\cdot\#\Aa{:}\nu$ to
$x{:}\tau$ and $\Aa{:}\nu$ respectively. We write $\omega$ for a term that may be
either a name-symbol $\Aa$ or a variable $x$.  The $\Sigma\#\Aa{:}\nu$
binding indicates that $\Aa$ is a name of type $\nu$ and is assumed to
be fresh with respect to all names and variables in $\Sigma$.
We write $\omega {:}\tau \in
\Sigma$ if the binding $\omega{:}\tau$ is present in $\Sigma$.
We write $\Sigma,\Sigma'$ for the result of concatenating two contexts
such that $FVN(\Sigma) \cap FVN(\Sigma') = \emptyset$.

We write $\Sigma \nd t : \tau$ or $\Sigma \nd \phi : \propp$ to indicate
that $t$ is a well-formed term of type $\tau$ or $\phi$ is a
well-formed formula.  From the point of view of typechecking, the
additional freshness information in the context is irrelevant.  The
rules for typechecking 
(shown in \refFig{well-formedness}) are standard, except for the rules
for freshness and the $\new$-quantifier.  Quantification using
$\forall$ and $\exists$ is only allowed over types not mentioning $o$;
$\new$-quantification is only allowed over name-types.

\begin{figure}[tb]
\begin{mathpar}
\infer{\Sigma\nd c : \delta}{c : \delta}
\and
\infer{\Sigma \nd f(\vec{t}) : \delta}{ f:\tau_1,\ldots,\tau_n \to
  \delta & \Sigma \nd t_i : \tau_i}
%
%
\and
\infer{\Sigma\nd \omega:\tau}{\omega:\tau \in
  \Sigma}
%
%
\and
\infer{\Sigma\nd \top : \propp}{} 
\\
\infer{\Sigma\nd \bot : \propp}{} 
\and
\infer{\Sigma \nd \phi \circ \psi : \propp}{\Sigma\nd \phi, \psi : \propp & (\circ \in \{\andd,\orr,\impp\})}
\and
\infer{\Sigma \nd a \fresh t : \propp}{\Sigma \nd a : \nu & \Sigma \nd t : \tau}
\and
\infer{\Sigma \nd t \eq u : \propp}{\Sigma \nd t,u : \tau}
\and
\infer{\Sigma\nd \forall x{:}\tau.\phi: \propp}{\Sigma,x{:}\tau \nd \phi : \propp}
\and
\infer{\Sigma\nd \exists x{:}\tau.\phi : \propp}{\Sigma,x{:}\tau \nd \phi : \propp}
\and
\infer{\Sigma\nd \new \Aa{:}\nu .\phi : \propp}{\Sigma\#\Aa{:}\nu\nd \phi : \propp}
\end{mathpar}
\hrule
\caption{Well-formedness rules}\labelFig{well-formedness}
\end{figure}

\begin{definition}
Let $Tm_\Sigma = \{t \mid \Sigma \nd t : \tau\}$ be the set of
    well-formed terms in context $\Sigma$. 
  \begin{itemize}
  \item We associate a set of
    freshness formulas $|\Sigma|$ to each context $\Sigma$ as follows:
    \[|\cdot| = \emptyset \qquad |\Sigma,x:\tau| = |\Sigma|\qquad
    |\Sigma\#\Aa:\nu| = |\Sigma| \cup \{\Aa \fresh t \mid t \in
    Tm_\Sigma\}\]
    For example, $\Aa \fresh x$, $\Ab \fresh \Aa$ and $\Ab \fresh
    f(x,y)$ are in $|x{:}\tau\#\Aa{:}\nu,y{:}\sigma\#\Ab{:}\nu'|$.  
\item We
    say that $\Sigma'$ is stronger than $\Sigma$ ($\Sigma \leq
    \Sigma'$) if $Tm_{\Sigma} \subseteq Tm_{\Sigma'}$ and $|\Sigma|
    \subseteq |\Sigma'|$.  For example, $\Aa{:}\nu,x{:}\tau \leq
  x{:}\tau\#\Aa{:}\nu,y{:}\sigma$.
\item We say that $\Aa{:}\nu \in \Sigma$ if $\Sigma =
  \Sigma'\#\Aa{:}\nu,\Sigma''$ for some contexts $\Sigma',\Sigma''$
  and similarly $x{:}\tau \in \Sigma$ means that $\Sigma =
  \Sigma',x{:}\tau,\Sigma''$ for some contexts $\Sigma',\Sigma''$.
\item We say that $\Aa$ is fresh for $\Sigma$ if $\Aa$ is not among the names 
 appearing in $\Sigma$; we write $\Aa \notin\Sigma$ to indicate that
 this is the case. Similarly, we write $x \notin \Sigma$ to indicate
 that variable $x$ does not appear in $\Sigma$.
\end{itemize}
\end{definition}
The following routine properties hold:
\begin{lemma}[Term Weakening]
If $\Sigma\nd t : \tau$ and $\Sigma \leq \Sigma'$ then $\Sigma' \nd t : \tau$.
\end{lemma}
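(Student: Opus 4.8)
The plan is to prove this by a routine induction on the structure of the term $t$ (equivalently, on the derivation of $\Sigma \nd t : \tau$ using the well-formedness rules of \refFig{well-formedness}), appealing to the defining clauses of $\Sigma \leq \Sigma'$, namely $Tm_\Sigma \subseteq Tm_{\Sigma'}$ and $|\Sigma| \subseteq |\Sigma'|$. Strictly speaking only the first clause is needed for a term, since freshness information plays no role in typechecking. A preliminary observation I will record first is \emph{uniqueness of typing}: in any fixed context, a term has at most one type (immediate induction on terms, using that each variable or name-symbol occurs at most once in a well-formed context by the disjointness side condition on context concatenation). This is what lets the variable/name-symbol case go through cleanly.

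The case analysis is short. If $t = c$ is a constant, then $\tau = \delta$ is determined by the signature and the derivation $\Sigma' \nd c : \delta$ makes no demand on the context. If $t = f(t_1,\ldots,t_n)$, the last rule gives $f : \tau_1,\ldots,\tau_n \to \delta$ and $\Sigma \nd t_i : \tau_i$; the induction hypothesis yields $\Sigma' \nd t_i : \tau_i$, and reapplying the rule gives $\Sigma' \nd f(\vec{t}) : \delta$ (the sugared forms $\swap{a}{b}{t}$ and $\abs{a}{t}$ are just instances of this case). If $t = \omega$ is a variable $x$ or a name-symbol $\Aa$, then $\omega{:}\tau \in \Sigma$, so $\omega \in Tm_\Sigma \subseteq Tm_{\Sigma'}$; hence $\Sigma' \nd \omega : \tau'$, i.e. $\omega{:}\tau' \in \Sigma'$, for some $\tau'$, and uniqueness of typing forces $\tau' = \tau$, giving $\Sigma' \nd \omega : \tau$.

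I do not expect a genuine obstacle: the whole argument is structural bookkeeping. The only place any care is needed is the variable/name-symbol case, where without uniqueness of typing one would only obtain $\Sigma' \nd \omega : \tau'$ for a possibly different $\tau'$. An alternative that sidesteps uniqueness entirely is to prove the slightly stronger auxiliary fact that $\Sigma \leq \Sigma'$ implies every binding $\omega{:}\tau$ occurring in $\Sigma$ also occurs in $\Sigma'$ — this follows directly from the grammar of contexts together with the definitions of $Tm_{\Sigma}$ and $|\Sigma|$ — after which the lemma is immediate. Either route works; I would use whichever best fits the order in which uniqueness of typing is otherwise needed in the development.
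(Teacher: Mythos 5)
The paper itself gives no proof of this lemma (it is filed under ``routine properties''), so the only question is whether your induction is sound. The overall plan is the intended one, and you correctly isolate the variable/name-symbol case as the only place needing care --- but the step you use there does not work as written. Uniqueness of typing is a statement about a \emph{single} context: it tells you $\omega$ has at most one type in $\Sigma'$ (namely $\tau'$) and at most one type in $\Sigma$ (namely $\tau$), but it provides no bridge between the two contexts, so nothing forces $\tau'=\tau$. Concluding $\tau'=\tau$ is precisely the content of the lemma in this case, so the appeal to uniqueness is a non sequitur.

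Your fallback --- that $\Sigma\leq\Sigma'$ implies every binding $\omega{:}\tau$ of $\Sigma$ occurs in $\Sigma'$ --- is the right idea, but it does \emph{not} ``follow directly'' from the paper's literal definition, in which $Tm_\Sigma$ is a bare set of terms and $\Sigma\leq\Sigma'$ only demands $Tm_\Sigma\subseteq Tm_{\Sigma'}$ and $|\Sigma|\subseteq|\Sigma'|$. In a degenerate signature with two data types $\delta_1,\delta_2$ and no constants or user function symbols one has $Tm_{x{:}\delta_1}=\{x\}=Tm_{x{:}\delta_2}$ and $|x{:}\delta_1|=\emptyset=|x{:}\delta_2|$, so $x{:}\delta_1\leq x{:}\delta_2$, yet $x{:}\delta_2\nd x:\delta_1$ is not derivable --- so under that reading both your auxiliary fact and the lemma itself fail. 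What is evidently intended (cf.\ the paper's example $\Aa{:}\nu,x{:}\tau\leq x{:}\tau\#\Aa{:}\nu,y{:}\sigma$) is that $Tm_\Sigma\subseteq Tm_{\Sigma'}$ be read as inclusion of typing \emph{judgments}, i.e.\ $\Sigma'$ assigns every variable and name-symbol of $\Sigma$ the same type. You should state that reading (or the binding-preservation property) explicitly as the hypothesis you use; with it the variable case is immediate, no uniqueness-of-typing detour is needed, and the remainder of your induction (constants, function applications, and the sugared swapping/abstraction terms as instances of the application case) is fine.
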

\begin{lemma}[Term Substitution]
  If $\Sigma \vdash t : \tau$ and $\Sigma,x{:}\tau,\Sigma' \vdash u :
  \tau'$ then $\Sigma,\Sigma' \vdash u[t/x] : \tau'$.
\end{lemma}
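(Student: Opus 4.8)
The plan is to prove this by induction on the structure of the typing derivation of $\Sigma,x{:}\tau,\Sigma' \nd u : \tau'$; since the well-formedness rules for terms are syntax-directed, this amounts to a case analysis on the shape of $u$. There are only three cases, because terms carry no binding structure: $u$ is a constant $c$, $u$ is a variable or name-symbol $\omega$, or $u = f(\vec{t})$ is a function application (the sugared forms $\swap{a}{b}{t}$ and $\abs{a}{t}$ are instances of the last case, via the signature declarations for $swap$ and $abs$).

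The constant case is immediate: $c[t/x] = c$, and $\Sigma,\Sigma' \nd c : \delta$ follows from the same declaration $c:\delta$. For $u = f(\vec{t})$, inversion gives $f : \tau_1,\ldots,\tau_n \to \delta$ with $\Sigma,x{:}\tau,\Sigma' \nd t_i : \tau_i$ for each $i$; the induction hypothesis yields $\Sigma,\Sigma' \nd t_i[t/x] : \tau_i$, and since $f(\vec{t})[t/x] = f(t_1[t/x],\ldots,t_n[t/x])$, re-applying the function rule gives the conclusion. The interesting case is $u = \omega$ with $\omega{:}\tau' \in \Sigma,x{:}\tau,\Sigma'$. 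If $\omega \neq x$, then the binding $\omega{:}\tau'$ already occurs in $\Sigma$ or in $\Sigma'$, so $\omega{:}\tau' \in \Sigma,\Sigma'$ and $\omega[t/x] = \omega$ is well-typed by the variable rule. If $\omega = x$, then $\tau' = \tau$, $x[t/x] = t$, and we must derive $\Sigma,\Sigma' \nd t : \tau$ from the hypothesis $\Sigma \nd t : \tau$.

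For that last step I would invoke Term Weakening, which requires $\Sigma \leq \Sigma,\Sigma'$, i.e.\ $Tm_{\Sigma} \subseteq Tm_{\Sigma,\Sigma'}$ and $|\Sigma| \subseteq |\Sigma,\Sigma'|$. Both inclusions are routine: appending $\Sigma'$ on the right of $\Sigma$ only adds bindings, so every term-formation derivation valid in $\Sigma$ remains valid in $\Sigma,\Sigma'$ (the first inclusion), and the definition of $|\cdot|$ accumulates freshness formulas processing a context left-to-right, so $|\Sigma|$ is literally a subset of $|\Sigma,\Sigma'|$. One should note at the outset that $\Sigma,\Sigma'$ is a legal concatenation — $FVN(\Sigma) \cap FVN(\Sigma') = \emptyset$ holds because $\Sigma,x{:}\tau,\Sigma'$ was — and that, as remarked when the typing rules were introduced, the freshness component of a context is irrelevant to the term-formation judgment, so dropping $x{:}\tau$ cannot disturb the typings recorded by $\Sigma'$. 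There is no genuine obstacle here; the only points needing care are keeping straight which binding lives in which part of the context and recording the (trivial) monotonicity of $Tm_{(\cdot)}$ and $|\cdot|$ under context extension, since that monotonicity is used freely elsewhere in the paper.
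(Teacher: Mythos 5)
Your proof is correct: the induction on the typing derivation (equivalently, on the structure of $u$), with the variable case split on $\omega = x$ versus $\omega \neq x$ and Term Weakening via $\Sigma \leq \Sigma,\Sigma'$ handling the substituted occurrence, is exactly the standard argument, and your side remarks (legality of the concatenation $\Sigma,\Sigma'$, monotonicity of $Tm_{(\cdot)}$ and $|\cdot|$, irrelevance of freshness information to typing) cover the only points needing care. The paper states this lemma as a routine property without giving a proof, so there is no alternative argument to compare against; yours is the intended one.
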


\subsection{The Rules}
\label{sec:rules}

Judgments are of the form $\Sigma;\Gamma \seq \Delta$, where $\Sigma$
is a typing context and $\Gamma,\Delta$ are multisets of formulas.  We
define classical and intuitionistic versions of \NLseq.
\emph{Classical \NLseq} is based on the classical sequent calculus
$\mathbf{G3c}$ (see \refFig{seqrules-fole}).  The new rules defining
\NLseq are defined in Figures~\ref{fig:nonlogical-axioms} and
\ref{fig:nonlogical-rules}.  \NLseq includes two additional
\emph{logical rules}, $\newL$ and $\newR$, as already shown in
\refFig{evolution-rules}.  In addition, \NLseq includes several new
\emph{nonlogical rules} defining the properties of swapping, equality,
freshness and abstraction.  (The standard rules involving equality in
\refFig{seqrules-fole} are also considered nonlogical rules.)

Many of the nonlogical rules correspond to first-order universal
axioms of nominal logic (\refFig{nonlogical-axioms}), which may be
incorporated into sequent systems in a uniform fashion using the $Ax$
rule schema without affecting
cut-elimination~\cite{negri01structural}.  Here, we write an axiom of the
form $P_1 \andd \cdots \andd P_n \impp Q_1 \vee \cdots \vee Q_m$ as
$\Andd\vec{P} \impp \Orr \vec{Q}$.  To illustrate, the instances of
this scheme  for axioms $F_3$ and $F_4$ are:
\[
\infer[F_3]{\Sigma;\Gamma,a \fresh a  \seq \Delta}{}
\qquad \infer[F_4]{\Sigma;\Gamma \seq \Delta}{\Sigma;\Gamma,a \fresh b \seq
  \Delta & \Sigma;\Gamma,a \eq b \seq \Delta}
\]
The key point of this treatment of nonlogical rules is that they act
only on the hypothesis set $\Gamma$, so they do not  introduce new
principal cut cases in the proof of cut-elimination.

The remaining nonlogical rules are as follows.  Rule $A_2$ expresses
an invertibility property for abstractions: two abstractions are equal
only if they are structurally equal or equal by virtue of $A_1$.
$A_3$ says that all values of abstraction type are formed using the
abstraction function symbol.  The $F$ rule expresses the freshness
principle: that a name fresh for a given context may always be chosen.
It is important to note that the fresh name chosen in $F$ may be of
any name type $\nu$, and thus, all name types are inhabited; however,
base data types $\delta$ could be empty, and an abstraction type
$\abs{\nu}{\tau}$ is inhabited if and only if $\tau$ is.  Finally, the
$\Sigma\#$ rule allows freshness information to be extracted from the
context $\Sigma$.  It states that in context $\Sigma$, any constraint
in $|\Sigma|$ is valid.

\begin{remark}
  Although we have motivated some choices in \NLseq in terms of
  proof-search behavior based on experience with \aprolog, some rules,
  such as $A_3$ and $\Sigma\#$, do not have particularly pleasant
  proof-search properties. It is fair to say that \NLseq addresses
  only the proof search complexity arising from the $\new$-quantifier
  and (to some extent) freshness but does not help very much with the
  complexity arising from equational/freshness reasoning.  In
  \aprolog, special cases of these problems are dealt with using
  nominal unification and freshness constraint solving; in this paper
  we aim to deal with full nominal logic.
\end{remark}

The naming of the nonlogical rule groups corresponds to that used by
Pitts: the axioms are divided into groups for swapping $(S)$,
equivariance $(E)$, freshness $(F)$, and abstraction $(A)$.  The $(Q)$
axiom is replaced by the logical rules $\newL$ and $\newR$.

\begin{figure}[tb]
\[
\begin{array}{cc}
\hyp[hyp]{\Sigma ;\Gamma,p(\vec{t}) \seq p(\vec{t}),\Delta}
\smallskip\\
\infer[\trueR]{\Sigma;\Gamma \seq \top,\Delta}{} 
& 
\infer[\falseL]{\Sigma;\Gamma,\bot \seq \Delta}{}\smallskip\\
\infer[\andR]{\Sigma;\Gamma \seq \phi \andd \psi,\Delta}
{\Sigma;\Gamma \seq \phi,\Delta & \Sigma;\Gamma \seq \psi,\Delta} 
& 
\infer[\andL]{\Sigma;\Gamma, \phi_1 \andd \phi_2 \seq \Delta}
{\Sigma;\Gamma, \phi_1 , \phi_2 \seq \Delta}\smallskip\\
\infer[\orR]{\Sigma;\Gamma \seq \phi_1 \orr \phi_2,\Delta}{\Sigma;\Gamma \seq \phi_1,\phi_2,\Delta}
& 
\infer[\orL]{\Sigma;\Gamma,\phi \orr \psi \seq \Delta}{\Sigma;\Gamma ,\phi\seq \Delta & \Gamma, \psi \seq \Delta} \smallskip\\
\infer[\impR]{\Sigma;\Gamma \seq \phi \impp \psi,\Delta}{\Sigma;\Gamma,\phi \seq\psi,\Delta}
& 
\infer[\impL]{\Sigma;\Gamma,\phi \impp \psi \seq \Delta}
      {\Sigma;\Gamma \seq \phi,\Delta & 
        \Sigma;\Gamma, \psi \seq \Delta}\smallskip\\
\infer[\allR]{\Sigma;\Gamma \seq \forall x{:}\sigma.\phi,\Delta}
        {\Sigma,x{:}\sigma ;\Gamma \seq \phi,\Delta & 
        (x \not\in \Sigma)}
&
\infer[\allL]{\Sigma;\Gamma, \forall x{:}\sigma.\phi\seq \Delta}
        {\Sigma \vdash t : \sigma & 
        \Sigma;\Gamma, \forall x{:}\sigma.\phi,\phi\subs{t}{x} \seq \Delta}
\smallskip\\
\infer[\exR]{\Sigma;\Gamma \seq \exists x{:}\sigma.\phi,\Delta}
        {\Sigma \vdash t : \sigma
&       \Sigma;\Gamma \seq \exists x{:}\sigma.\phi,\phi\subs{t}{x},\Delta}
&
\infer[\exL]{\Sigma;\Gamma,\exists x{:}\sigma.\phi \seq \Delta}
        {\Sigma,x{:}\sigma;\Gamma, \phi\seq \Delta & 
          (x \not\in \Sigma)}\smallskip\\
  \infer[{\eq}R]{\Sigma;\Gamma \seq \Delta}{\Sigma;\Gamma, t \eq t \seq \Delta}
  &
  \infer[{\eq}S]{\Sigma;\Gamma, t \eq u, P(t) \seq\Delta}
  {\Sigma;\Gamma,t\eq u,P(t),P(u) \seq \Delta}

\end{array}\]
\hrule
\caption{Classical first-order equational sequent calculus
  ($\mathbf{G3c}$)}\labelFig{seqrules-fole}
\end{figure}
\begin{figure}[tb]
\[
\begin{array}{lc}
(S_1) & \swap{a}{a}{x} \eq x  
\\
(S_2) & \swap{a}{b}{\swap{a}{b}{x}} \eq x
\\
(S_3) & \swap{a}{b}{a} \eq b 
 \\
(E_1) & \swap{a}{b}{c} \eq c
\\
(E_2) & \swap{a}{b}{f(\vec{t})} \eq f(\swap{a}{b}{\vec{t}})
\\
(E_3) & p(\vec{t}) \impp p(\swap{a}{b}{\vec{t}})
\end{array}
\begin{array}{lc}
(F_1) & a \fresh x \andd b \fresh x \impp \swap{a}{b}{x} \eq x
\\
(F_2) & a \fresh b \quad (a:{\nu},b:{\nu'},\nu \not\equiv \nu')
\\
(F_3) & a \fresh a \impp \false
  \\
(F_4) & a \fresh b \orr a \eq b
\\
(A_1) & a \fresh y \andd x \eq \swap{a}{b}{y} \impp \abs{a}{x} \eq \abs{b}{y}
\end{array}\]
\hrule
\caption{Equational and freshness axioms}\labelFig{nonlogical-axioms}
\[\begin{array}{c}
  \infer[Ax\qquad \text{$\Andd \bar{P} \impp \Orr \bar{Q}$ an axiom
    instance in \refFig{nonlogical-axioms}}]{\Sigma;\Gamma, \Andd \bar{P}\seq \Delta}
  {\Sigma;\Gamma, \Andd \bar{P},Q_1 \seq \Delta  &\cdots&  \Sigma;\Gamma, \Andd \bar{P},Q_n \seq
    \Delta}
  \smallskip\\
  \infer[A_2]{\Sigma;\Gamma,\abs{a}{t} \eq \abs{b}{u} \seq \Delta}{\Sigma;\Gamma,\abs{a}{t} \eq \abs{b}{u},a \eq b,t\eq u \seq \Delta & \Sigma;\Gamma,\abs{a}{t} \eq \abs{b}{u}, a \fresh u, t \eq\swap{a}{b}{u} \seq \Delta}\smallskip\\
  \infer[A_3]{\Sigma;\Gamma \seq \Delta}
  {\Sigma \vdash t: \abs{\nu}{\sigma} & \Sigma,a{:}\nu,x{:}\sigma;
    \Gamma,t \eq \abs{a}{x} \seq \Delta& (a,x \notin \Sigma)}\smallskip\\
  \infer[F]{\Sigma;\Gamma \seq \Delta}
  {\Sigma\#\Aa{:}\nu;\Gamma \seq  \Delta 
    & (\Aa \notin \Sigma) }
  \qquad
  \infer[\Sigma\#]{\Sigma;\Gamma \seq \Delta}
  { \Sigma;\Gamma,\Aa\fresh t \seq \Delta&(\Aa \fresh t \in |\Sigma|) }\smallskip\\
\infer[\newR]{\Sigma;\Gamma \seq \new \Aa{:}\nu.\phi,\Delta}
        {\Sigma\#\Aa{:}\nu;\Gamma \seq \phi,\Delta & (\Aa \notin \Sigma)}
\qquad
\infer[\newL]{\Sigma;\Gamma, \new \Aa{:}\nu.\phi \seq \Delta}
        { \Sigma\#\Aa{:}\nu;\Gamma, \phi \seq \Delta & (\Aa \notin \Sigma)}
\end{array}\]
\hrule
\caption{Nonlogical and $\new$-quantifier
  rules}\labelFig{nonlogical-rules}
\end{figure}
\begin{figure}[tb]
\[\begin{array}{ccc}
\infer[W]{\Sigma;\Gamma,\phi \seq \Delta}{\Sigma;\Gamma \seq \Delta}
&
\hyp[hyp^*]{\Sigma;\Gamma,\phi \seq \phi,\Delta}
&
\infer[cut]{\Sigma;\Gamma,\Gamma' \seq \Delta,\Delta'}
        {\Sigma;\Gamma \seq \phi,\Delta & \Sigma;\Gamma',\phi \seq \Delta'}
\medskip\\
\infer[C]{\Sigma;\Gamma,\phi\seq \Delta}{\Sigma;\Gamma,\phi,\phi \seq \Delta}
&
\infer[EVL]{\Sigma;\Gamma,\phi \seq \Delta}{\Sigma;\Gamma,\swap{a}{b}\phi \seq \Delta}
&
\infer[EVR]{\Sigma;\Gamma \seq \Delta,\phi}{\Sigma;\Gamma \seq \swap{a}{b}\phi,\Delta}
\end{array}\]
\hrule
\caption{Some admissible rules of \NLseq}\labelFig{admissible-rules}
\end{figure}

\subsection{Structural Properties}

\refFig{admissible-rules} lists some additional rules, including weakening,
contraction, general form of hypothesis and equivariance rules, and
cut.  We will now prove
their admissibility.  Note that these rules are not part of the
definition of \NLseq, and so in proving admissibility, it suffices to
consider only derivations using the core rules introduced in Section~\ref{sec:rules}.

We now list some routinely-verified syntactic properties of \NLseq.
We write $\nd_n J$ to indicate that judgment $J$ has a derivation of
height at most $n$.

\begin{lemma}[Weakening]
  If $\nd_n\Sigma;\Gamma \seq \Delta$ is derivable then so is
  $\nd_n\Sigma;\Gamma,\phi \seq \Delta$.  Similarly,
  $\nd_n\Sigma;\Gamma \seq \Delta,\phi$.
\end{lemma}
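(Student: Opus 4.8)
The plan is to argue by induction on the height $n$ of the given derivation of $\Sigma;\Gamma\seq\Delta$, proving both the left and the right version simultaneously. The key observation is that the weakening formula $\phi$ is adjoined to the hypothesis multiset $\Gamma$ (or the conclusion multiset $\Delta$), never to the typing context $\Sigma$, so it plays a purely passive role: in no rule of \NLseq\ is $\phi$ principal, and no side condition mentions $\Gamma$ or $\Delta$.

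\textbf{Base cases.} These are the zero-premise rules: $hyp$, $\trueR$, $\falseL$, the axiom $F_2$, and the zero-premise instances of the $Ax$ schema such as $F_3$. In each case the conclusion with $\phi$ inserted into $\Gamma$ (resp.\ $\Delta$) is again an instance of the very same axiom, since these axioms are insensitive to the presence of extra side formulas. So the weakened sequent is derivable in height $0 \le n$.

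\textbf{Inductive step.} Suppose the last rule applied is $R$, with premises $\Sigma_i;\Gamma_i\seq\Delta_i$ each derivable in height $< n$. For the left weakening, $\phi$ occurs in the $\Gamma$ of the conclusion of $R$; since $R$ does not act on $\phi$, the same $\phi$ occurs in the $\Gamma_i$ of each premise (taking into account that $\forall L$, $\exists R$, $\Sigma\#$, $\eq R$, $\eq S$, $A_2$, $A_3$, $Ax$, and $\newL$/$\newR$ may duplicate or rewrite other formulas, but leave a side formula like $\phi$ untouched). Apply the induction hypothesis to each premise to obtain height-$(<n)$ derivations of $\Sigma_i;\Gamma_i,\phi\seq\Delta_i$ (or with $\phi$ on the right, for right weakening), then reapply $R$. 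The only rules that modify the context --- $\allR$, $\exL$, $\newR$, $\newL$, $F$, and $A_3$ --- carry side conditions of the form $x\notin\Sigma$, $\Aa\notin\Sigma$, or $\Aa\fresh t\in|\Sigma|$, all of which concern $\Sigma$ alone and are therefore unaffected by the addition of $\phi$; hence $R$ remains applicable and the resulting derivation has height at most $n$.

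\textbf{Main obstacle.} There is no real obstacle; the proof is routine. The only point requiring a moment's care is to confirm that the eigenvariable and fresh-name side conditions on the context-extending rules survive weakening, which they do trivially because weakening changes only the formula multisets and not $\Sigma$. (One should also note that, unlike in some presentations, the structural rules $W$, $C$, $EVL$, $EVR$ are \emph{not} among the core rules of \NLseq, so they need not be handled as cases here.) \qed
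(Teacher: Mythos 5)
Your proof is correct and is exactly the routine height-preserving induction the paper has in mind (the paper states this lemma without proof, listing it among the ``routinely-verified syntactic properties''). The one implicit point worth keeping in mind is that $\phi$ is assumed well-formed in $\Sigma$, so it cannot mention the eigenvariables or fresh name-symbols introduced by $\allR$, $\exL$, $A_3$, $F$, $\newL$, $\newR$, which is why those cases go through without any renaming.
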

\begin{lemma}[Context Weakening]
  If $\nd_n\Sigma;\Gamma \seq \Delta$ and $\Sigma \leq \Sigma'$ then 
  $\nd_n\Sigma' ; \Gamma \seq \Delta$.
\end{lemma}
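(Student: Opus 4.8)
The plan is to induct on the height $n$ of the derivation of $\Sigma;\Gamma \seq \Delta$, proceeding by case analysis on the last rule applied. The general strategy is that almost every rule of \NLseq is "local" in the typing context: a rule either leaves $\Sigma$ untouched, extends it with a fresh variable or name-symbol, or merely requires a typing side-condition of the form $\Sigma \vdash t : \tau$. In each case, the hypothesis $\Sigma \leq \Sigma'$ (meaning $Tm_\Sigma \subseteq Tm_{\Sigma'}$ and $|\Sigma| \subseteq |\Sigma'|$) lets us transport the rule application to $\Sigma'$ without increasing height, after which the induction hypothesis applies to each premise.

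First I would dispatch the axiom-style leaves. For $\mathit{hyp}$, $\trueR$, $\falseL$, and the $Ax$ instances on $F_2$, $F_3$ (which are height-zero or have the principal formula already present), nothing depends on $\Sigma$ beyond well-formedness, so the same rule instance is directly available in $\Sigma'$. For the propositional rules ($\andL,\andR,\orL,\orR,\impL,\impR$, the $Ax$ scheme, $A_2$, ${\eq}R$, ${\eq}S$), $\Sigma$ is passed unchanged to the premises; we simply apply the induction hypothesis to each premise with the same $\Sigma \leq \Sigma'$ and reassemble. For the rules with a term side-condition --- $\allL$, $\exR$, and the first premise of $A_3$ --- the side-condition $\Sigma \vdash t : \sigma$ upgrades to $\Sigma' \vdash t : \sigma$ by Term Weakening, and the remaining premises are handled by the induction hypothesis.

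The interesting cases are the rules that bind a new variable or name-symbol: $\allR$, $\exL$, $\newR$, $\newL$, $F$, and the second premise of $A_3$. Here the premise is in context $\Sigma, x{:}\sigma$ (resp. $\Sigma\#\Aa{:}\nu$), with a side-condition that $x$ (resp. $\Aa$) is not already in $\Sigma$. The point to check is that $\Sigma \leq \Sigma'$ implies $\Sigma, x{:}\sigma \leq \Sigma', x{:}\sigma$ --- and likewise $\Sigma\#\Aa{:}\nu \leq \Sigma'\#\Aa{:}\nu$ --- which is immediate for the $Tm$ component and, for the freshness component, follows because $|\Sigma\#\Aa{:}\nu| = |\Sigma| \cup \{\Aa \fresh t \mid t \in Tm_\Sigma\}$ is monotone in both $|\Sigma|$ and $Tm_\Sigma$. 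A mild subtlety is the freshness of the bound name with respect to the \emph{larger} context: to apply the rule over $\Sigma'$ we need $x \notin \Sigma'$ (resp. $\Aa \notin \Sigma'$), which need not hold for the originally-chosen witness. We resolve this by first $\alpha$-renaming the bound variable/name to one genuinely fresh for $\Sigma'$ (using the height-preserving renaming/equivariance properties available for \NLseq derivations), then applying the induction hypothesis. Finally, the $\Sigma\#$ rule is trivial: if $\Aa \fresh t \in |\Sigma|$ then $\Aa \fresh t \in |\Sigma'|$ by $|\Sigma| \subseteq |\Sigma'|$, so the same rule instance is available in $\Sigma'$.

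The main obstacle is precisely the bound-name freshness side-condition in the $\new$-, $\forall R$-, $\exists L$-, $F$-, and $A_3$-cases: naively one wants to reuse the same eigenvariable, but it may clash with a name or variable that is new in $\Sigma'$. The clean fix is to invoke a height-preserving $\alpha$-conversion lemma for eigenvariables (a standard companion to Weakening, which the phrase "routinely-verified syntactic properties" signals is available), or equivalently to strengthen the induction to quantify over all sufficiently fresh renamings of the derivation; with that in hand the case goes through exactly like the others.
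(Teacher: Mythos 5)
Your proposal is correct and is essentially the routine height-preserving induction that the paper has in mind (it states this lemma without proof as a "routinely-verified syntactic property"): the only delicate point is exactly the one you identify, namely that eigenvariables and fresh name-symbols introduced by $\allR$, $\exL$, $A_3$, $F$, $\newL$, $\newR$ may clash with $\Sigma'$, and your fix---height-preserving renaming of the bound variable/name before applying the induction hypothesis, together with monotonicity of $Tm_\Sigma$ and $|\Sigma|$ under $\leq$---is the standard and adequate resolution.
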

\begin{lemma}[Substitution]
  If $\nd_n\Sigma \vdash t : \tau$ and $\Sigma,x{:}\tau,\Sigma';\Gamma
  \seq \Delta$ then $\nd_n\Sigma,\Sigma' ; \Gamma[t/x] \seq \Delta[t/x]$.
\end{lemma}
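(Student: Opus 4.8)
The plan is to argue by induction on the height $n$ of the derivation of $\Sigma,x{:}\tau,\Sigma';\Gamma \seq \Delta$, showing that each rule application can be replayed after substituting $t$ for $x$ throughout. First I would set up notation: write $\theta = \subs{t}{x}$, and observe that by the Term Substitution lemma already proved, if $\Sigma,x{:}\tau,\Sigma'' \vdash u : \tau'$ then $\Sigma,\Sigma'' \vdash u\theta : \tau'$, and similarly for formulas via the well-formedness rules; this guarantees that every side condition of the form $\Sigma'' \vdash u : \sigma$ appearing in a premise (e.g.\ in $\allL$, $\exR$, $A_3$) is preserved under $\theta$. I would also note the purely syntactic fact that substitution commutes with the formula constructors and with formula-swapping, so that, e.g., $(\phi\andd\psi)\theta = \phi\theta \andd \psi\theta$ and $(\swap{a}{b}{\phi})\theta = \swap{a\theta}{b\theta}{\phi\theta}$ — this is what lets the propositional and equational rules go through transparently.

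For the inductive step I would go rule by rule, but most cases are uniform. For the propositional rules, the $Ax$ schema, the equality rules ${\eq}R$, ${\eq}S$, and the abstraction rules $A_2$, the rule acts only on $\Gamma$ (and the context $\Sigma$ is untouched), so applying $\theta$ to the premises and invoking the induction hypothesis, then reapplying the same rule to the substituted principal formulas, closes the case; the height bound is maintained since we reuse the same rule. The $\hyp$ axiom is immediate since $p(\vec t)\theta = p(\vec t\,\theta)$ on both sides. For the quantifier rules and $A_3$, $F$, $\newL$, $\newR$ that extend the context, I would use $\alpha$-renaming (Barendregt convention) to assume the freshly introduced variable or name-symbol is distinct from $x$ and from $FVN(t)$; then the extended context has the form $\Sigma,x{:}\tau,\Sigma'',y{:}\sigma$ (or $\Sigma,x{:}\tau,\Sigma''\#\Aa{:}\nu$), the induction hypothesis applies with $\Sigma''$ replaced by $\Sigma'',y{:}\sigma$, and the side condition ``$y\notin\Sigma,\Sigma''$'' still holds after removing $x$. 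For $\allL$ and $\exR$ the witness term $u$ becomes $u\theta$, well-typed by Term Substitution, and the premise's principal formula $\phi\subs{u}{x'}$ substitutes correctly because substitutions for distinct variables commute (again using the renaming convention so $x'\neq x$ and $x'\notin FVN(t)$).

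The one case needing a little care is $\Sigma\#$: its side condition is $\Aa\fresh u \in |\Sigma,x{:}\tau,\Sigma'|$, and I must check that after substitution the corresponding constraint $\Aa\fresh u\theta$ lies in $|\Sigma,\Sigma'|$. By the definition of $|\cdot|$, $\Aa\fresh u\in|\Sigma,x{:}\tau,\Sigma'|$ means $\Aa$ is introduced by some $\#$-binding in the context — say $\Sigma = \Sigma_0\#\Aa{:}\nu,\Sigma_1$ with $x{:}\tau,\Sigma'$ all to the right — and $u\in Tm_{\Sigma_0\#\Aa{:}\nu,\Sigma_1,\ldots}$ up to the point of that binding; here I will need the hypothesis $\Sigma\vdash t:\tau$, which ensures $FVN(t)\subseteq FVN(\Sigma)$, so $t$ mentions no variable bound strictly after $\Aa$, hence $u\theta$ is still a term over the same truncated context, giving $\Aa\fresh u\theta\in|\Sigma,\Sigma'|$. (If instead $\Aa$ is introduced inside $\Sigma'$, the argument is the same and $\theta$ does not interfere since $x$ occurs before that binding.) I expect this interaction between substitution and the $|\Sigma|$ machinery to be the main obstacle; everything else is a routine induction. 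Finally I would remark that the hypotheses as stated implicitly require the derivation to be over a well-formed context, so that the $\Sigma\#$ bookkeeping makes sense — this is consistent with how the earlier Term Substitution and Term Weakening lemmas are used.
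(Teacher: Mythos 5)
Your overall strategy is the same as the paper's: induct on the derivation, replay every rule under the substitution $[t/x]$, and isolate the only genuinely new cases, namely the nonlogical rules (especially $\Sigma\#$) and the context-extending rules $F$, $\newL$, $\newR$, where the key facts are that $\Aa \fresh u \in |\Sigma,x{:}\tau,\Sigma'|$ yields $\Aa \fresh u[t/x] \in |\Sigma,\Sigma'|$, and that the bound name-symbol avoids $FN(t)$ so the substitution commutes with $\new$ and with re-applying $F$. (For the latter the paper simply observes that $\Aa \notin \Sigma$ and $\Sigma \nd t : \tau$ already force $\Aa \notin FN(t)$; your appeal to a renaming convention is harmless but unnecessary.)

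One local slip in your $\Sigma\#$ case: the justifications for your two subcases are swapped. If the binding $\#\Aa{:}\nu$ occurs inside $\Sigma$, say $\Sigma = \Sigma_0\#\Aa{:}\nu,\Sigma_1$, then the claim that ``$t$ mentions no variable bound strictly after $\Aa$'' is false: $\Sigma \nd t : \tau$ only gives $FVN(t) \subseteq FVN(\Sigma)$, and $t$ may perfectly well mention $\Aa$ itself or variables of $\Sigma_1$. Fortunately in this subcase $t$ is irrelevant: the constraint $\Aa \fresh u$ requires $u \in Tm_{\Sigma_0}$, and since $x \notin \Sigma_0$ we get $x \notin FV(u)$, hence $u[t/x] = u$ and the constraint already lies in $|\Sigma,\Sigma'|$. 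Your ``$FVN(t) \subseteq FVN(\Sigma)$'' argument is instead exactly what is needed in the other subcase, where $\#\Aa{:}\nu$ occurs inside $\Sigma'$: there $u$ may genuinely contain $x$, and $u[t/x]$ remains a term over the truncated context preceding the binding (in particular it cannot mention $\Aa$) precisely because all of $FVN(t)$ lies in $\Sigma$, which comes before that binding --- equivalently, by the Term Substitution lemma. With the two justifications reassigned to the right subcases, your argument coincides with the paper's proof.
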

\begin{proof}
  The interesting cases are for the new rules, specifically,
  nonlogical rules, $\newL$, and $\newR$.  All of the nonlogical rules
  are closed under substitution; in particular, for $\Sigma\#$ we have
  $\Aa \fresh u \in |\Sigma,x,\Sigma'|$ then $\Aa \fresh u[t/x] \in
  |\Sigma,\Sigma'|$.  

  For $F$ we have a derivation 
  \[\infer[F]{\Sigma,x{:}\tau,\Sigma';\Gamma\seq \Delta}{\Sigma,x{:}\tau,\Sigma'\#\Aa{:}\nu;\Gamma\seq\Delta}\]
  By induction we have
  $\Sigma,\Sigma'\#\Aa{:}\nu;\Gamma[t/x]\seq\Delta[t/x]$, so we can use $F$
  again to derive $\Sigma,\Sigma';\Gamma[t/x]\seq \Delta[t/x]$.  This
  requires the observation that since $\Aa \notin \Sigma$ and $\Sigma \nd t : \tau$, we must
  have $\Aa\notin FN(t)$.  The proofs for $\newL$ and $\newR$ are
  similar, requiring the additional observation that $(\new
  \Aa{:}\nu.\phi)[t/x] = \new \Aa{:}\nu.(\phi[t/x])$ since $\Aa \not\in FN(t)$.
\end{proof}
The remaining structural transformations do not preserve the height of
derivations.  However, they do preserve the logical
height of the derivation, which is defined as follows.
\begin{definition}
  The \emph{logical height} of a derivation is the maximum number of
  logical rules in any branch of the derivation.  We write $\nd^l_n J$
  to indicate that $J$ has a derivation of logical height $\leq n$.
\end{definition}

Now we consider some structural properties specific to \NLseq.  In the
following, recall the definition of $\swap{a}{b}{\phi}$ given in
\refFig{formula-swapping}.
\begin{lemma}[Admissibility of $EVL$, $EVR$]
  The $EVL$ and $EVR$ rules
\[\infer[EVL]{\Sigma;\Gamma,\phi \seq \Delta}{\Sigma;\Gamma,\swap{a}{b}{\phi} \seq \Delta}\qquad
\infer[EVR]{\Sigma;\Gamma \seq \phi,\Delta}{\Sigma;\Gamma \seq
  \swap{a}{b}{\phi},\Delta}\] where $\phi$ is an arbitrary formula in $Tm_\Sigma$,
are admissible; if the antecedent of $EVL$ or $EVR$ is derivable, then
the respective conclusion has a derivation of the same logical height.
\end{lemma}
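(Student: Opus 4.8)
The plan is to proceed by induction on the logical height of the given derivation of the antecedent, proving both $EVL$ and $EVR$ simultaneously (they are mirror images, and a $\new$- or $\forall$-rule on one side may force us to invoke the other). First I would record two facts that make the induction go through. The first is an \emph{equivariance invariance} for typing: if $\Sigma \nd \phi : o$ and $\Sigma \nd a,b : \nu$ then $\Sigma \nd \swap{a}{b}{\phi} : o$, which follows by a routine induction on $\phi$ using \refFig{formula-swapping}. The second is that swapping commutes with substitution and distributes over the logical connectives in the expected way; in particular $\swap{a}{b}{\swap{a}{b}{\phi}}$ is provably equal to $\phi$ via axiom $S_2$ applied under all the term constructors, and $\swap{a}{b}{(\phi\subs{t}{x})} = (\swap{a}{b}{\phi})\subs{\swap{a}{b}{t}}{x}$ when $x$ is suitably fresh. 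These are the lemmas one reaches for in each inductive case.

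The base cases are the axioms $hyp$, $\trueR$, $\falseL$, and the nonlogical rule instances. For $hyp$ on $\Sigma;\Gamma,p(\vec t)\seq p(\vec t),\Delta$: applying $EVL$ replaces the hypothesis $p(\vec t)$ by $\swap{a}{b}{p(\vec t)} = p(\swap{a}{b}{\vec t})$, and we must rederive $\Sigma;\Gamma,p(\swap{a}{b}{\vec t})\seq p(\vec t),\Delta$; this is obtained from $hyp$ on $p(\swap{a}{b}{\vec t})$ together with an application of the equivariance axiom $E_3$ (and $S_2$) to move between $p(\vec t)$ and $p(\swap{a}{b}{\vec t})$ in the antecedent — these are nonlogical rules and do not increase the logical height. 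The nonlogical axiom cases (and $A_2$, $A_3$, $F$, $\Sigma\#$) are handled by the observation, already noted in the paper's Substitution proof, that every nonlogical rule is closed under the relevant syntactic operation: each axiom group $S$, $E$, $F$, $A$ is itself equivariant, so a swapped instance of a nonlogical rule application is again a (possibly differently-instantiated) nonlogical rule application, and we push the $EVL$/$EVR$ into the premises by the induction hypothesis.

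For the inductive step, most logical and nonlogical rules commute straightforwardly with $EVL$/$EVR$: since $\swap{a}{b}{\cdot}$ distributes over $\andd,\orr,\impp$ and passes under $\forall,\exists,\new$ (renaming bound names/variables if necessary so they avoid $a,b$), a rule acting on the swapped formula has premises that are exactly the swaps of the premises of the same rule acting on the original formula, and we apply the induction hypothesis (at strictly smaller logical height for logical rules, at equal logical height for nonlogical ones — so the induction is really on the lexicographic pair (logical height, ordinary height), or one simply notes nonlogical rules can be permuted down first). The one case that needs genuine care, and which I expect to be the main obstacle, is $\newR$ (dually $\newL$, and analogously $\allR$/$\exL$): here the rule extends the context to $\Sigma\#\Ac{:}\nu$ with $\Ac\notin\Sigma$, and we need $\Ac\notin FN(a)\cup FN(b)$ in order for $\swap{a}{b}{\new\Ac{:}\nu.\phi} = \new\Ac{:}\nu.\swap{a}{b}{\phi}$ to hold definitionally. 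Since $\Sigma\nd a,b:\nu'$, freshness of $\Ac$ for $\Sigma$ gives $\Ac\notin FN(a)\cup FN(b)$, so this side condition is automatic — but one must state it carefully and, if the bound name collides, first rename it via an $\alpha$-conversion step (itself justified by $\newL/\newR$ plus the nonlogical rules, or taken as part of the definition of the calculus on equivalence classes). With that in hand, we apply the induction hypothesis to the premise in context $\Sigma\#\Ac{:}\nu$ and reassemble with $\newR$, preserving the logical height. This completes the induction.
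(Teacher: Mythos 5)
Your plan is correct and is essentially the paper's own proof: a simultaneous (mutual) induction on the lexicographic pair of logical height and total height, with all non-principal and commuting cases handled routinely and the atomic/$hyp$ case repaired using the nonlogical rules $E_3$ and $S_2$ plus the equality rules, which keeps the logical height unchanged. The only difference is one of emphasis: the paper singles out $hyp$ and $\impL$/$\impR$ as the genuinely nontrivial principal cases (since $\impp$ moves the swapped subformula across the sequent arrow, forcing the use of $EVR$ inside the $EVL$ argument and vice versa), whereas you flag the context-extending rules ($\newL$/$\newR$, $\allR$/$\exL$, $F$); both concerns are in fact covered by your simultaneous induction together with your observation that $\Ac\notin\Sigma$ and $a,b\in Tm_\Sigma$ make the freshness side conditions automatic.
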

\begin{proof}
  We proceed by induction on the lexicographic product of logical
  height and total height to show that if the hypothesis of an
  instance of $EVL$ or $EVR$ has a derivation then the conclusion of
  the respective rule has a derivation of the same logical height.
  
  We first consider $EVL$.  The only interesting cases are when
  $\swap{a}{b}{\phi}$ is principal on the left, otherwise the
  induction step is straightforward.  Furthermore, only the cases for
  $hyp$ and $\impL$ are nontrivial.
  
  If the derivation is of the form
  \[\infer[hyp]{\Sigma;\Gamma,\swap{a}{b}{A} \seq \swap{a}{b}{A},\Delta}{}\]
  then we may derive $\Gamma,A \seq \swap{a}{b}{A},\Delta$ as follows:
  \[\infer[E_3]{\Sigma;\Gamma,A \seq \swap{a}{b}{A},\Delta}
  {\infer[\eq,hyp]{\Sigma;\Gamma,\swap{a}{b}{A} \seq \swap{a}{b}{A},\Delta}{}}\]
  This derivation has the same logical height, 1, as the first.
  
  If the derivation is of the form
  \[\infer[\impL]{\Sigma;\Gamma,\swap{a}{b}{\phi} \impp \swap{a}{b}\psi \seq \Delta}
  {\Sigma;\Gamma, \swap{a}{b}{\phi} \impp \swap{a}{b}{\psi} \seq
    \swap{a}{b}{\phi},\Delta & \Sigma;\Gamma, \swap{a}{b}{\psi} \seq \Delta}
  \]
  then using the admissibility of $EVR$ and $EVL$ on the left and
  $EVR$ on the right (on derivations of smaller logical height) we obtain
  \[\infer[\impL]{\Sigma;\Gamma,{\phi} \impp \psi \seq \Delta}
  {\infer[EVL,EVR]{\Sigma;\Gamma, {\phi} \impp {\psi} \seq {\phi},\Delta}
    {\Sigma;\Gamma,\swap{a}{b}{\phi} \impp \swap{a}{b}{\psi} \seq
      \swap{a}{b}{\phi},\Delta} & \infer[EVL]{\Sigma;\Gamma, {\psi} \seq
      \Delta} {\Sigma;\Gamma, \swap{a}{b}{\psi} \seq \Delta}}\] This
  transformation is obviously logical height-preserving by induction.
  
  For $EVR$, the interesting cases are those for $hyp$ and $\impR$
  where $\swap{a}{b}{\phi}$ is principal on the right.  Suppose the
  derivation is of the form
  \[\infer[hyp]{\Sigma;\Gamma,\swap{a}{b}{A} \seq \swap{a}{b}{A}, \Delta}{}\]
  Then we can derive
  \[\infer[E_3]{\Sigma;\Gamma,\swap{a}{b}{A} \seq A,\Delta}
  {\infer[\eq,hyp]{\Sigma;\Gamma,\swap{a}{b}\swap{a}{b}{A}\seq A,\Delta}{}}\]
  This derivation has the same logical height, 1, as the first.
  
  If the derivation is of the form
  \[\infer[\impR]{\Sigma;\Gamma \seq \swap{a}{b}\phi \impp \swap{a}{b}\psi,\Delta}{\Sigma;\Gamma,\swap{a}{b}\phi \seq \swap{a}{b}\psi,\Delta}\]
  then since $EVL$ and $EVR$ are admissible for all subderivations of
  this derivation, by induction we can derive
  \[\infer[\impR]{\Sigma;\Gamma \seq \phi \impp \psi,\Delta}
  {\infer[EVL,EVR]{\Sigma;\Gamma,\phi \seq \psi,\Delta} {\Sigma;\Gamma,\swap{a}{b}{\phi} \seq
      \swap{a}{b}{\psi},\Delta}}\] This transformation is obviously
  logical height-preserving by induction.
\end{proof}

\begin{lemma}[Swapping Fresh Names]\labelLem{swapping-fresh-names}
Suppose $\Sigma\#\Aa{:}\nu \nd \phi(\Aa) : \propp$ and $\Ab \notin FN(\Sigma\#\Aa{:}\nu)$.  Then the rule
\[\infer{\Sigma\#\Aa{:}\nu \#\Ab{:}\nu ;\Gamma,\phi(\Aa) \seq
  \Delta}{\Sigma\#\Aa{:}\nu \#\Ab{:}\nu ;\Gamma,\phi(\Ab) \seq \Delta}\]
is admissible using nonlogical axioms only.
\end{lemma}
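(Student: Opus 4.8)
The plan is to argue by induction on the given derivation of $\Sigma\#\Aa{:}\nu\#\Ab{:}\nu;\Gamma,\phi(\Ab)\seq\Delta$ (ordered by the lexicographic product of its logical height with the size of $\phi$, as will be needed for the quantifier case below), transforming it into a derivation of $\Sigma\#\Aa{:}\nu\#\Ab{:}\nu;\Gamma,\phi(\Aa)\seq\Delta$ that introduces only nonlogical rules ($\Sigma\#$, $Ax$-instances, ${\eq}R$, ${\eq}S$) and the already-established, logical-height-preserving rules $EVL$ and $EVR$. The semantic point to exploit is that in the context $\Sigma\#\Aa{:}\nu\#\Ab{:}\nu$ \emph{both} $\Aa$ and $\Ab$ are fresh for every term of $Tm_\Sigma$: using the $\Sigma\#$ rule together with $F_1$ one gets $\swap{\Aa}{\Ab}{t}\eq t$ for each $t\in Tm_\Sigma$, and then, by a routine induction on the structure of a term $t\in Tm_{\Sigma\#\Aa}$ using $S_3$, $E_1$, $E_2$ and the equality rules, one gets $\swap{\Aa}{\Ab}{t}\eq t\subs{\Ab}{\Aa}$, the renaming being capture-free since $\Ab\notin FN(\Sigma\#\Aa)$. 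One also records the matching freshness facts, e.g.\ $\Aa\fresh u\in|\Sigma\#\Aa\#\Ab|$ whenever $\Ab\fresh u$ is (using $F_2$, $F_3$, $F_4$ when $u$ mentions the other name-symbol).

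First I would handle every case in which $\phi(\Ab)$ — or one of its descendants obtained by decomposing it — is not principal, by simply replaying the same rule and invoking the induction hypothesis on the premises; the side conditions are unaffected by the facts just noted. When $\phi(\Ab)$ is principal and \emph{atomic}, say $\phi(\Aa)=p(\bar t)$ and hence $\phi(\Ab)=p(\bar t\subs{\Ab}{\Aa})$, I would use the equations $\swap{\Aa}{\Ab}{t_i}\eq t_i\subs{\Ab}{\Aa}$ together with ${\eq}S$ to rewrite each argument back, reducing the hypothesis to $p(\swap{\Aa}{\Ab}{\bar t})$, which is exactly $\swap{\Aa}{\Ab}{\phi(\Aa)}$; applying the admissibility of $EVL$ then replaces it by $\phi(\Aa)$. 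All of this preserves logical height. The propositional cases ($\andL$, $\orL$, $\impL$) and $\newL$ follow the same pattern, recursing on immediate subformulas.

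The case I expect to be the real obstacle is when $\phi(\Ab)$ is a quantified formula $\forall x{:}\sigma.\,\psi(\Ab)$ (or the $\exists$ analogue) and the derivation applies $\allL$ (resp.\ $\exL$): the argument terms of atoms inside $\psi$ may contain the bound variable $x$, for which $\swap{\Aa}{\Ab}{x}\eq x$ is \emph{not} derivable, so the simple term-rewriting above breaks. The idea is instead to replay the $\allL$ step on $\forall x{:}\sigma.\,\psi(\Aa)$ using the instantiation term $\swap{\Aa}{\Ab}{t}$ in place of $t$, and then convert the resulting hypothesis back into the one used in the original subderivation, using equivariance of the relation and function symbols ($E_3$, $E_2$) together with $S_2$ and $S_3$ — crucially $S_2$ is unconditional, hence still usable at positions that were under the (now instantiated) binder. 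The delicate part is making this rigorous: one must strengthen the statement so that it also covers the formulas $\psi(\swap{\Aa}{\Ab}{t},\Aa)$ that arise after instantiation (these are well-formed only in $\Sigma\#\Aa\#\Ab$, not in $\Sigma\#\Aa$), and verify that the measure strictly decreases, which it does because the conversion steps are all nonlogical and the single logical rule introduced, $\allL$, is matched one-for-one against the $\allL$ step of the original derivation. Once this bookkeeping is in place, the $EVL/EVR$-commutation subcases and the remaining connectives are routine.
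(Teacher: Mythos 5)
Your proposal is correct in substance, but it takes a genuinely different route from the paper. The paper does not induct on the given derivation at all: it builds the new derivation directly at the root, first using $\Sigma\#$ to put $\Aa\fresh\vec{x}$ and $\Ab\fresh\vec{x}$ (for $\vec{x}=FV(\Sigma)$) into the hypotheses, then invoking the previously established admissibility of $EVL$ on the \emph{whole} formula to reduce the goal hypothesis $\phi(\Aa)$ to $\swap{\Aa}{\Ab}{\phi(\Aa)}$, and finally bridging $\swap{\Aa}{\Ab}{\phi(\Aa)}$ and the given hypothesis $\phi(\Ab)$ by repeated nonlogical steps ($F_1$, $S_3$, $E_2$ and the equality rules). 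You instead perform the derivation-level surgery yourself, by induction on the derivation of $\Sigma\#\Aa{:}\nu\#\Ab{:}\nu;\Gamma,\phi(\Ab)\seq\Delta$, using $EVL$ only at atoms and handling quantified hypotheses by switching instantiation terms to $\swap{\Aa}{\Ab}{t}$ and appealing to $S_2$; in effect you inline a relativized form of the $EVL$ admissibility argument rather than applying it once as a black box. The paper's factoring buys brevity and modularity, since the commutation of swapping past binders is already packaged into $EVL$; your route buys explicitness exactly where the paper is terse, namely the bound-variable mismatch ($\swap{\Aa}{\Ab}{x}$ versus $x$ under a binder) that ${\eq}S$ alone cannot repair, which is compressed in the paper into the phrase ``$F_1$ and equational reasoning is used repeatedly.'' If you adopt the paper's decomposition (apply $EVL$ to all of $\phi(\Aa)$ first, then argue interchangeability of $\swap{\Aa}{\Ab}{\phi(\Aa)}$ and $\phi(\Ab)$ under the recorded freshness facts), your quantifier analysis is precisely the content still needed there; do also state your induction hypothesis for contexts of the shape $\Sigma\#\Aa{:}\nu\#\Ab{:}\nu,\Sigma'$, since rules such as $\allR$, $\exL$, $A_3$, $F$, $\newL$, $\newR$ extend the context before you recurse.
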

\begin{proof}
  Let $\vec{x} = FV(\Sigma)$.  The derivation is as follows:
\[\infer[\Sigma\#]{\Sigma\#\Aa{:}\nu \#\Ab{:}\nu ;\Gamma,\phi(\Aa) \seq  \Delta}
{\infer[EVL]{\Sigma\#\Aa{:}\nu \#\Ab{:}\nu ;\Gamma,\Aa \fresh \vec{x},\Ab \fresh \vec{x}, \phi(\Aa)\seq  \Delta}
{\infer[Ax]{\Sigma\#\Aa{:}\nu \#\Ab{:}\nu ;\Gamma,\Aa \fresh \vec{x},\Ab \fresh \vec{x}, \swap{\Aa}{\Ab}\phi(\Aa)\seq  \Delta}
{\Sigma\#\Aa{:}\nu \#\Ab{:}\nu ;\Gamma,\Aa \fresh \vec{x},\Ab \fresh \vec{x}, \phi(\Ab)\seq  \Delta}}}\]
where $F_1$ and equational reasoning is used repeatedly to show that
$\swap{\Aa}{\Ab}\phi(\Aa) \impp \phi(\Ab)$.
\end{proof}

\begin{lemma}[Admissibility of $hyp^*$]\labelLem{admissibility-of-hyp}
  The $hyp^*$ rule
\[\infer[hyp^*]{\Sigma;\Gamma,\phi \seq \phi,\Delta}{}\]
where $\phi$ is an arbitrary formula in $Tm_\Sigma$, is admissible.
\end{lemma}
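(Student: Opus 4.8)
**Proof plan for Lemma (Admissibility of $hyp^*$).**

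The plan is to prove this by induction on the structure of the formula $\phi$, showing that for any context $\Sigma$ with $\Sigma \nd \phi : \propp$, the sequent $\Sigma;\Gamma,\phi \seq \phi,\Delta$ is derivable. The base cases are $\phi = \top$ (use $\trueR$), $\phi = \false$ (use $\falseL$), and $\phi = p(\vec t)$ (use the $hyp$ axiom directly). For the atomic cases $\phi = (t \eq u)$ and $\phi = (t \fresh u)$, recall that these are sugar for $eq(t,u)$ and $fresh(t,u)$, so they are already covered by the $hyp$ case for relation symbols.

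For the inductive cases on the propositional connectives $\andd, \orr, \impp$, I would apply the corresponding left and right rules to decompose $\phi$ on both sides simultaneously and then invoke the induction hypothesis on the immediate subformulas; for instance, for $\phi_1 \impp \phi_2$ I would apply $\impR$ on the right and then $\impL$ on the left, reducing to $hyp^*$ for $\phi_1$ and $hyp^*$ for $\phi_2$ (using Weakening to line up the side formulas). For the first-order quantifiers $\forall x{:}\sigma.\phi$ and $\exists x{:}\sigma.\phi$, the standard approach applies: apply $\allR$ (resp.\ $\exL$) to introduce a fresh eigenvariable $x$ into $\Sigma$, then apply $\allL$ (resp.\ $\exR$) instantiating with that same $x$ (which is well-typed since $x{:}\sigma \in \Sigma,x{:}\sigma$), and appeal to the induction hypothesis for $\phi$ in the extended context.

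The one genuinely new case is the $\new$-quantifier, $\phi = \new \Aa{:}\nu.\psi$. Here I would apply $\newR$ on the right, which (choosing $\Aa \notin \Sigma$, possibly after renaming) extends the context to $\Sigma\#\Aa{:}\nu$ and leaves the goal $\Sigma\#\Aa{:}\nu;\Gamma, \new \Aa{:}\nu.\psi \seq \psi,\Delta$; then I would apply $\newL$ on the left, extending the context again to $\Sigma\#\Aa{:}\nu\#\Ab{:}\nu$ with a fresh $\Ab$, yielding $\Sigma\#\Aa{:}\nu\#\Ab{:}\nu;\Gamma,\psi(\Ab) \seq \psi(\Aa),\Delta$. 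The mismatch between the name-symbol $\Ab$ introduced on the left and $\Aa$ on the right is exactly what Lemma~\ref{lem:swapping-fresh-names} (Swapping Fresh Names) is for: applying it lets me replace $\psi(\Ab)$ by $\psi(\Aa)$ in the antecedent, reducing to the induction hypothesis $hyp^*$ for $\psi(\Aa)$ in context $\Sigma\#\Aa{:}\nu\#\Ab{:}\nu$.

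The main obstacle is precisely this $\new$ case: one must be careful that the side condition $\Aa \notin \Sigma$ for $\newR$ and $\newL$ can be met (by $\alpha$-renaming the bound name-symbol, which is sound since $FN(\new\Aa{:}\nu.\psi)$ excludes $\Aa$), and that Lemma~\ref{lem:swapping-fresh-names} applies with the right hypotheses ($\Sigma\#\Aa{:}\nu \nd \psi(\Aa):\propp$, which follows from $\Sigma \nd \new\Aa{:}\nu.\psi : \propp$ by the well-formedness rule, and $\Ab \notin FN(\Sigma\#\Aa{:}\nu)$, which holds by the freshness of $\Ab$). Everything else is a routine structural induction using the already-established Weakening and Context Weakening lemmas.
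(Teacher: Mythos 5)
Your proposal is correct and takes essentially the same route as the paper: induction on the structure of $\phi$, standard treatment of the first-order connectives, and for $\new \Aa{:}\nu.\psi$ an application of $\newR$ and $\newL$ followed by \refLem{swapping-fresh-names} to reconcile the two fresh name-symbols, reducing to the induction hypothesis for the smaller formula. The only (harmless) difference is that you apply $\newR$ below $\newL$ while the paper does the reverse, so you need the swapping lemma with the roles of $\Aa$ and $\Ab$ interchanged relative to its literal statement; this variant holds by the same $\Sigma\#$/$EVL$/$F_1$ argument, or you can simply switch the order of the two quantifier rules to invoke the lemma exactly as stated.
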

\begin{proof}
  The proof is by induction on the construction of $\phi$.  The cases
  for the ordinary connectives of first-order logic are standard.  The
  case for $\phi = \new \Aa{:}\nu.\phi'$ is as follows.  By induction, we may
  assume that $\Sigma\#\Aa{:}\nu \#\Ab{:}\nu ;\Gamma,\phi(\Ab) \seq
  \phi(\Ab),\Delta$ is derivable.  We derive
  \[\infer[\newL]{\Sigma;\Gamma,\new \Aa{:}\nu .\phi' \seq \new \Aa{:}\nu.\phi',\Delta}{
    \infer[\newR]{\Sigma\#\Aa{:}\nu ;\Gamma,\phi(\Aa) \seq \new \Aa{:}\nu.\phi',\Delta}{
      \infer[\text{\refLem{swapping-fresh-names}}]{\Sigma\#\Aa{:}\nu \#\Ab{:}\nu ;\Gamma, \phi(\Aa) \seq
        \phi(\Ab),\Delta} {\Sigma\#\Aa{:}\nu \#\Ab{:}\nu ;\Gamma,\phi(\Ab) \seq
        \phi(\Ab),\Delta}}}
\]
Using the induction hypothesis, the judgment
$\Sigma\#\Aa{:}\nu \#\Ab{:}\nu ;\Gamma,\phi(\Ab) \seq \phi(\Ab),\Delta$ is derivable,
since it is an instance of $hyp^*$ with a smaller principal formula.
\end{proof}

\begin{lemma}[Inversion]\labelLem{inversion}
  The $\andL$, $\orL$, $\impL$, $\exL$, $\allR$, $\newL$, and $\newR$
  rules are invertible; that is,
  \begin{enumerate}
    \item If $\nd^l_n \Sigma;\Gamma,\phi \andd \psi \seq \Delta$ then 
      $\nd^l_n\Sigma;\Gamma,\phi , \psi \seq \Delta$.
    \item If $\nd^l_n \Sigma;\Gamma,\phi \orr \psi \seq \Delta$ then
      $\nd^l_n\Sigma;\Gamma,\phi \seq \Delta$ and
      $\nd^l_n\Sigma;\Gamma,\psi \seq \Delta$.
    \item If $\nd^l_n \Sigma;\Gamma,\phi \impp \psi \seq \Delta$ then
      $\nd^l_n\Sigma;\Gamma,\psi \seq \Delta$.
    \item If $\nd^l_n \Sigma;\Gamma,\exists x.\phi\seq \Delta$ then
      $\nd^l_n\Sigma,y;\Gamma,\phi[y/x] \seq \Delta$.
    \item If $\nd^l_n \Sigma;\Gamma\seq \Delta,\forall x.\phi$ then
      $\nd^l_n\Sigma,y;\Gamma \seq \Delta,\phi[y/x]$.
    \item If $\nd^l_n\Sigma;\Gamma,\new \Aa{:}\nu .\phi \seq \Delta$ then
      $\nd^l_{n}\Sigma\#\Aa{:}\nu ;\Gamma,\phi \seq \Delta$ for fresh $\Aa\notin\Sigma$.
    \item If $\nd^l_n\Sigma;\Gamma \seq \Delta,\new \Aa{:}\nu .\phi$ 
      then  $\nd^l_n\Sigma\#\Aa{:}\nu ;\Gamma \seq \Delta,\phi$ for
      fresh $\Aa \notin\Sigma$.
  \end{enumerate}
\end{lemma}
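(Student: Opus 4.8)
The plan is to prove all seven invertibility statements by induction on the lexicographic product of the logical height and the total height of the given derivation, exactly as in the proof of admissibility of $EVL$ and $EVR$. Since weakening and contraction are admissible rather than primitive (\refFig{admissible-rules}), it suffices to consider derivations built only from the core rules of \refSec{rules}. For each of the seven claims I would inspect the last rule $R$ of the derivation and distinguish three situations: (i) $R$ is $hyp$, $\trueR$, $\falseL$, or a zero-premise nonlogical axiom instance; (ii) $R$ acts on a formula other than the one being inverted; (iii) $R$ is precisely the rule being inverted, applied to the distinguished formula.

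In situation (i) the inverted formula occurs only as a side formula --- recall that $hyp$ is restricted to atoms --- so the corresponding premise of the inversion is again an instance of the same zero-premise rule. In situation (ii) I would apply the induction hypothesis to each premise of $R$ (logical height strictly decreases when $R$ is a logical rule, total height strictly decreases when $R$ is a nonlogical rule) and then reapply $R$; this is logical-height-preserving because $R$ contributes the same amount to the logical height of its conclusion as to that of its premises. In situation (iii) one of the premises of $R$ is, up to a renaming of the eigenvariable or eigen-name-symbol, exactly the judgment we need, and that premise has logical height no greater than the conclusion.

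The first-order connectives are handled by the standard commuting conversions for G3-style calculi, so the only cases demanding care are the two $\new$-inversions, (6) and (7), where the premise of the inversion lives in the extended context $\Sigma\#\Aa{:}\nu$. The difficulty is bookkeeping around eigen-name-symbols: when we commute such an inversion past another rule that itself extends or inspects the context --- $F$, $A_3$, $\allR$, $\exL$, and above all $\newL$ or $\newR$ on a different formula --- we must keep the newly introduced name-symbols mutually distinct and fresh (using alpha-convertibility of bound name-symbols together with the Context Weakening lemma), and we must reconcile the order in which two adjacent $\#$-bindings were adjoined. This last reordering is where $\nd^l$ rather than $\nd$ matters: swapping $\Sigma\#\Aa{:}\nu\#\Ab{:}\nu'$ to $\Sigma\#\Ab{:}\nu'\#\Aa{:}\nu$ in a derivable sequent is sound because the only affected element of $|\Sigma|$, namely $\Aa\fresh\Ab$ versus $\Ab\fresh\Aa$, is recoverable from the other using only nonlogical machinery --- the $\Sigma\#$ rule together with axioms $F_2$--$F_4$ --- and likewise the renaming of eigen-name-symbols needed in situation (iii) is effected by equivariance reasoning in the style of \refLem{swapping-fresh-names} (together with $EVL$/$EVR$); both kinds of step add to the total height but not to the logical height. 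I expect this eigen-name and context-order bookkeeping in the $\new$ cases to be the main obstacle; everything else is a routine permutation-of-rules argument.
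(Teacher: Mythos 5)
Your overall strategy is the paper's: induct on the lexicographic pair (logical height, total height), treat the first-order rules by the standard G3-style commuting conversions, and isolate as the hard cases for the $\new$-inversions the rules that extend the typing context ($\allR$, $\exL$, $A_3$, $F$, and $\new$ on another formula), resolving them with Context Weakening, \refLem{swapping-fresh-names}-style equivariance, and the observation that the repair steps are nonlogical and hence do not increase logical height. In that sense the proposal is essentially the paper's proof.

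The one place you diverge is also the one place your justification is too casual: the explicit exchange of adjacent context bindings. It is not true that $|\Sigma\#\Aa{:}\nu\#\Ab{:}\nu'|$ and $|\Sigma\#\Ab{:}\nu'\#\Aa{:}\nu|$ differ only in $\Aa\fresh\Ab$ versus $\Ab\fresh\Aa$: the second binding contributes $\Ab\fresh t$ for every $t\in Tm_{\Sigma\#\Aa}$, so the sets differ on all constraints whose right-hand side mentions the other name-symbol (e.g.\ $\Ab\fresh f(\Aa,x)$), and recovering these in the reordered context needs more than $\Sigma\#$ plus $F_2$--$F_4$; you need the swapping/equivariance axioms ($S_1$--$S_3$, $E_1$--$E_3$, $F_1$), i.e.\ exactly the kind of reasoning in \refLem{swapping-fresh-names}, possibly introducing an auxiliary fresh name with rule $F$. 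Since all of this is nonlogical, your measure still works, so the plan is repairable; but note that the paper avoids any reordering lemma altogether: in the problematic cases it uses Context Weakening to insert $\#\Aa{:}\nu$ at the desired position alongside the name-symbol $\Ab$ produced by the induction hypothesis, renames $\phi(\Ab)$ to $\phi(\Aa)$ by equivariance and $F_1$ (\refLem{swapping-fresh-names}), and then discharges the now-unused $\Ab$ by reading rule $F$ bottom-up. Adopting that weaken--rename--discharge pattern would let you drop the binding-exchange step and its extra bookkeeping entirely.
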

\begin{proof}
  The proofs for the rules $\andL$, $\orL$, $\impL$, $\exL$, $\allR$
  are similar to those for the systems $\mathbf{G3c}$ and
  $\mathbf{G3im}$~\cite{negri01structural}.

  For $\newL$, the proof is by induction on the height of the derivation.  Most
  cases are straightforward.  Only cases such as $\allR,\exL,A_3,F$
  that introduce variables or name-symbols into $\Sigma$ are
  exceptions.  We show the reasoning for $\allR$.
  
  If the derivation is of the form
  \[\infer{\Sigma;\Gamma,\new \Aa{:}\nu.\phi \seq \forall
    x{:}\tau.\psi}{\Sigma,x{:}\tau;\Gamma,\new \Aa{:}\nu.\phi\seq \psi}\]
  then using the induction hypothesis, we have
  $\Sigma,x{:}\tau\#\Ab{:}\nu ;\Gamma,\phi(\Ab) \seq \psi$.  Using
  structural weakening we have $\Sigma\#\Aa{:}\nu,x{:}\tau\#\Ab{:}\nu
  ; \Gamma,\phi(\Ab) \seq \psi$.  Since $\Aa$ and $\Ab$ are fresh with
  respect to all terms in $Tm_\Sigma$, it is straightforward to show
  that $\Sigma\#\Aa{:}\nu,x{:}\tau\#\Ab{:}\nu :
  \Gamma,\swap{\Aa}{\Ab}{\phi(\Aa)} \seq \psi$.  Thus, by
  equivariance, we can derive
  $\Sigma\#\Aa{:}\nu,x{:}\tau\#\Ab{:}\nu;\Gamma,\phi(\Aa) \seq \psi$.
  Now $\Ab$ is not mentioned in the sequent so using $F$  we can
  derive $\Sigma\#\Aa{:}\nu,x{:}\tau ; \Gamma,\phi(\Aa) \seq \psi$,
  and using  $\allR$
  we can derive $\Sigma\#\Aa{:}\nu ; \Gamma,\phi(\Aa) \seq \forall
  x{:}\tau.\psi$, as desired.

  The proof for the invertibility of $\newR$ is symmetric.
\end{proof}
\begin{lemma}[Contraction]\labelLem{contraction}
  If $\nd^l_n\Sigma;\Gamma,\phi,\phi \seq \Delta$ is derivable then so
  is $\nd^l_n\Sigma;\Gamma,\phi \seq \Delta$.  Similarly, if
  $\nd^l_n\Sigma;\Gamma \seq \Delta,\phi,\phi$ is derivable then
  $\nd^l_n\Sigma;\Gamma \seq \Delta,\phi$.
\end{lemma}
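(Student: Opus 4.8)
The plan is to prove both halves simultaneously by induction on the logical height $n$ of the given derivation, following the standard Negri–von Plato strategy (as used for $\mathbf{G3c}$), but paying attention to the extra structure specific to \NLseq: the typing context $\Sigma$, the nonlogical rules, and $\newL$/$\newR$. The statement to prove is: if $\nd^l_n \Sigma;\Gamma,\phi,\phi \seq \Delta$ then $\nd^l_n \Sigma;\Gamma,\phi \seq \Delta$, and dually on the right. Since the induction is on logical height, I should be careful: nonlogical rules and the structural rules $W$, $\Sigma\#$, $F$, $hyp$ do not increase logical height, so I cannot assume the last rule is a logical one. The argument therefore splits on the last rule $r$ applied, with the key distinction being whether one of the two copies of $\phi$ is principal in $r$.

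First I would dispatch the easy cases. If $r$ is an axiom ($hyp$, $\trueR$, $\falseL$, or a nonlogical axiom instance $Ax$ with a principal formula in $\bar P$ that is not $\phi$), the conclusion with one fewer copy of $\phi$ is still an instance of the same axiom. If neither copy of $\phi$ is principal in $r$, I apply the induction hypothesis to the premise(s) — first using Weakening/Context Weakening where a premise lives in an extended context $\Sigma,x{:}\tau$ or $\Sigma\#\Aa{:}\nu$ (as for $\allR$, $\exL$, $A_3$, $F$), so that the two copies of $\phi$ are still syntactically present and untouched — and then reapply $r$; logical height is preserved because the IH preserves it and $r$ contributes at most one logical step, matching the original. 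The genuinely interesting cases are when $\phi$ is principal in the last rule. Here I use Inversion (\refLem{inversion}): for each left rule that is invertible ($\andL$, $\orL$, $\impL$, $\exL$, $\newL$, and $\allR$ on the right), if $\phi$ is the principal formula of $r$, I first invert $r$ on the \emph{other} copy of $\phi$ as well, obtaining a premise in which $\phi$ has been replaced by its immediate subformula(s) — possibly duplicated — then apply the induction hypothesis at the lower logical height to contract the duplicated subformulas, and finally reapply $r$ once. For $\impL$ I must be slightly more careful since it has two premises and $\phi = \psi_1 \impp \psi_2$ appears on the right in the first premise and as $\psi_2$ in the second; I contract each premise by the IH and reassemble. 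For $\newL$ with $\phi = \new\Aa{:}\nu.\phi'$ principal: by inversion I get $\nd^l_n \Sigma\#\Aa{:}\nu;\Gamma,\phi',\phi' \seq \Delta$ (using the same fresh $\Aa$ for both copies, justified since inversion lets me choose the fresh name), contract by the IH to $\nd^l_n \Sigma\#\Aa{:}\nu;\Gamma,\phi' \seq \Delta$, and reapply $\newL$. The right-contraction cases are symmetric, additionally using invertibility of $\newR$ and of $\forall$ on the right, plus the non-invertible right rules ($\orR$, $\impR$, $\andR$, $\exR$, $\allL$-style cases) handled as in $\mathbf{G3c}$ — e.g. for $\exR$ with two copies of $\exists x.\psi$ on the right, one principal, I invert (or rather push the instantiation through) and contract at lower logical height.

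The main obstacle I anticipate is the interaction of contraction with the rules that introduce fresh name-symbols or variables into $\Sigma$ — namely $F$, $A_3$, $\allR$, $\exL$, $\newL$, $\newR$ — in the \emph{non-principal} case, together with the rule $\Sigma\#$. When the last rule extends the context, the premise mentions $|\Sigma\#\Aa{:}\nu|$, which grows as $Tm_\Sigma$ grows; I need the observation (already implicit in the Substitution and Context Weakening lemmas) that contracting $\phi$ in $\Gamma$ does not affect $Tm_\Sigma$ or $|\Sigma|$, so side conditions like $\Aa\fresh t \in |\Sigma|$ and $\Aa\notin\Sigma$ remain satisfied after contraction. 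A secondary subtlety: when $\phi$ itself was used to justify a $\Sigma\#$ step — it cannot be, since $\Sigma\#$ reads freshness from the context, not from $\Gamma$ — so this rule is always in the non-principal case and poses no difficulty. I expect that once these bookkeeping points about $\Sigma$ are made explicit, the remaining cases are entirely routine and mirror the classical treatment, so the proof can reasonably be stated as "by induction on logical height, using Inversion and Weakening, analogously to $\mathbf{G3c}$, with the new rules handled as above."
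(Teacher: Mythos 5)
Your overall architecture (induction with inversion, special attention to rules that extend $\Sigma$) matches the paper's, but there is a genuine gap at exactly the one case the paper considers interesting: $\newL$ principal. You claim that inversion lets you unfold the second copy of $\new\Aa{:}\nu.\phi'$ ``using the same fresh $\Aa$ for both copies, justified since inversion lets me choose the fresh name.'' It does not: after the last $\newL$ the ambient context is already $\Sigma\#\Aa{:}\nu$, and \refLem{inversion}(6) only yields the unfolding at a name fresh for the \emph{current} context, so the second copy becomes $\phi'(\Ab)$ with $\Ab\neq\Aa$ (indeed $\Sigma\#\Aa{:}\nu\#\Aa{:}\nu$ would not even be a legal context, as it would put $\Aa\fresh\Aa$ into $|\Sigma|$). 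You are thus left with $\Sigma\#\Aa{:}\nu\#\Ab{:}\nu;\Gamma,\phi'(\Aa),\phi'(\Ab)\seq\Delta$, two formulas that are not syntactically equal, and the induction hypothesis for contraction does not apply to them. Re-using $\Aa$ is semantically justified only because $\Aa$ is recorded as fresh for everything in $\Sigma$, and turning that into a derivation is precisely the missing step: the paper uses \refLem{swapping-fresh-names} (admissible from the nonlogical rules $\Sigma\#$, equivariance and $F_1$) to replace $\phi'(\Ab)$ by $\phi'(\Aa)$, then contracts by the induction hypothesis at the smaller logical height, then uses the $F$ rule to discard the now-unused $\Ab$ before reapplying $\newL$ (and symmetrically for $\newR$ on the right). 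Without this swapping-plus-$F$ step your $\newL$ case does not go through.

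Two smaller points. First, your induction is stated on logical height alone, but in the non-principal cases where the last rule is nonlogical or is $F$/$\Sigma\#$ the premise has the \emph{same} logical height, so the induction hypothesis is not available; you need the paper's lexicographic measure (logical height, then total height). Second, a nonlogical rule can itself have both contracted occurrences principal: the $F_1$ instance with $a=b$ has principal formulas $a\fresh x, a\fresh x$, and after contraction that rule instance can no longer be reapplied; the paper handles this by replacing the $F_1$ step with $S_1$. Your case split (axiom with principal formula ``not $\phi$'', non-principal, principal-in-a-logical-rule) does not cover this situation.
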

\begin{proof}
  The proof is by induction on the lexicographic product of logical
  height and total height.  That is, the induction hypothesis
  applies to all derivations of smaller logical height and to all
  derivations of equal logical height but smaller total height.  Most
  cases are similar to any standard proof. The only new cases involve
  nonlogical rules and $\new \Aa{:}\nu.\phi$.  For the nonlogical rules it
  suffices to show that for each nonlogical rule that has a
  contractable instance, there is a nonlogical rule corresponding to
  the contraction.  The only such rule is $F_1$.  If the derivation is
  of the form
  \[\infer[F_1]{\Sigma;\Gamma,a\fresh x,a \fresh x \seq
    \Delta}{\Sigma;\Gamma,a\fresh x,a \fresh x,\swap{a}{a}{x} \eq x \seq \Delta}\]
  then we can transform the derivation to
  \[\infer[S_1]{\Sigma;\Gamma,a \fresh x \seq \Delta}{\Sigma;\Gamma,a
    \fresh x, \swap{a}{a}{x} \eq x \seq \Delta}\]

  Most of the remaining cases are standard.  The only interesting new
  case is when the contracted formula is derived using $\newL$:
  \[\infer[\newL]{\Sigma;\Gamma,\new \Aa{:}\nu.\phi(\Aa), \new \Ab{:}\nu.\phi(\Ab) \seq \Delta}{\Sigma\#\Aa{:}\nu;\Gamma,\phi(\Aa),\new \Ab{:}\nu.\phi(\Ab) \seq \Delta}\]
  Then using inversion we have $\nd^l_{n-1}\Sigma\#\Aa{:}\nu\#\Ab{:}\nu:
  \Gamma,\phi(\Aa),\phi(\Ab)\seq \Delta$.  Now using nonlogical rules we
  can derive $\nd^l_{n-1}\Sigma\#\Aa{:}\nu\#\Ab{:}\nu
  ;\Gamma,\phi(\Aa),\phi(\Aa)\seq \Delta$.  Then using the induction
  hypothesis we have $\nd^l_{n-1}\Sigma\#\Aa{:}\nu\#\Ab{:}\nu ;\Gamma,\phi(\Aa)\seq
  \Delta$.  Finally we can derive
\[\infer[\newL]{\Sigma;\Gamma,\new \Aa{:}\nu.\phi(\Aa) \seq \Delta}
{\infer[F]{\Sigma\#\Aa{:}\nu;\Gamma,\phi(\Aa) \seq \Delta}
{\Sigma\#\Aa{:}\nu\#\Ab{:}\nu;\Gamma,\phi(\Aa) \seq \Delta}}\]

The proof for right-contraction is symmetric, using the invertibility
of $\newR$.
\end{proof}

\subsection{Cut-Elimination}

As usual for sequent systems, the most important property to check to
verify that the system is sensible is cut-elimination.  

\begin{lemma}[Admissibility of Cut]
  If $\nd\Sigma;\Gamma \seq \Delta,\phi$ and
  $\nd\Sigma;\Gamma',\phi \seq \Delta$ then
  $\nd\Sigma;\Gamma,\Gamma' \seq \Delta,\Delta'$.  
\end{lemma}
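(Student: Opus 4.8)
The plan is to establish admissibility of cut by a principal induction on the complexity of the cut formula $\phi$ (the number of connectives and quantifiers it contains), with a subsidiary induction on the sum of the heights of the two premise derivations. So a cut on a strictly simpler formula may be invoked freely, while a cut on the same $\phi$ is available only when the two subderivations are jointly shorter. The workhorses will be the structural results already in hand: Weakening, Context Weakening and Substitution (all height-preserving), Inversion (\refLem{inversion}), Contraction (\refLem{contraction}), admissibility of $EVL$/$EVR$, and the Swapping Fresh Names lemma (\refLem{swapping-fresh-names}).

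The proof proceeds by cases on the last rules of the two premises, organized as usual into: (i) an axiom case, where one premise is an initial sequent ($hyp$, $\trueR$, or $\falseL$) and the conclusion follows from the other premise by Weakening, just as in $\mathbf{G3c}$; (ii) the permutation cases, where $\phi$ is not principal in at least one premise; and (iii) the principal cases, where $\phi$ is principal in both. In case (ii) I push the cut above the last rule $r$ of the premise in which $\phi$ is not principal and appeal to the subsidiary induction hypothesis. For the nonlogical rules ($Ax$, $A_2$, $\Sigma\#$, ${\eq}R$, ${\eq}S$, and axiom instances such as $F_1$) this is immediate, since---as stressed in the remark following \refFig{nonlogical-rules}---these rules modify only the antecedent multiset; this is exactly why the nonlogical rules contribute no principal-cut cases. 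The one point of genuine care is when $r$ is a context-enlarging rule, namely $\allR$, $\exL$, $\newR$, $\newL$, $F$, or $A_3$: here I first apply Context Weakening to the \emph{other} premise to lift it into the enlarged context $\Sigma' \geq \Sigma$---legitimate because the eigenvariable or fresh name-symbol introduced by $r$ does not occur in $\Sigma$, hence in none of the formulas of the other premise---then cut at the (now strictly shorter) premise of $r$, and finally reapply $r$; its freshness side condition survives because the newly combined formulas are all well-formed over $\Sigma$.

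In case (iii) the cut formula is logical and the two last rules form a matching introduction pair. For $\andd$, $\orr$, $\impp$, $\forall$ and $\exists$ the reductions are the standard $\mathbf{G3c}$ ones: the cut is replaced by one or two cuts on immediate subformulas (or on $\psi\subs{t}{x}$, obtained from the $\allR$/$\exL$ premise via the Substitution lemma), using the subsidiary hypothesis for the residual cut on $\phi$ itself in the $\allL$/$\exR$ case, and Contraction to absorb the duplicated context. The genuinely new case is that of a cut formula $\new\Aa{:}\nu.\phi(\Aa)$, with left premise ending in $\newR$ and right premise in $\newL$; in the notation of the contraction proof, the two subderivations prove $\Sigma\#\Aa{:}\nu;\Gamma \seq \phi(\Aa),\Delta$ and $\Sigma\#\Ab{:}\nu;\Gamma',\phi(\Ab) \seq \Delta'$ for fresh $\Aa,\Ab\notin\Sigma$, where $\new\Aa.\phi(\Aa)$ and $\new\Ab.\phi(\Ab)$ denote the same formula. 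Using Context Weakening to adjoin $\Ab$, then the Swapping Fresh Names lemma (which replaces $\phi(\Ab)$ by $\phi(\Aa)$), then the $F$ rule to discard $\Ab$, the right subderivation is brought to the form $\Sigma\#\Aa{:}\nu;\Gamma',\phi(\Aa) \seq \Delta'$. A cut on $\phi(\Aa)$, strictly simpler than $\new\Aa{:}\nu.\phi(\Aa)$ and so licensed by the principal induction hypothesis, yields $\Sigma\#\Aa{:}\nu;\Gamma,\Gamma' \seq \Delta,\Delta'$, and since $\Aa$ occurs in none of $\Gamma,\Gamma',\Delta,\Delta'$ the $F$ rule gives $\Sigma;\Gamma,\Gamma' \seq \Delta,\Delta'$. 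Pleasantly, this case patterns after the $\allR$/$\exL$ reduction rather than the $\impp$/$\orr$ one, and so needs no contraction.

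The main obstacle is the bookkeeping concentrated in the context-enlarging subcases of (ii) and in the $\new$ subcase of (iii): one must verify, rule by rule, that Context Weakening legitimately lifts the opposite premise into the enlarged context and that the eigen-freshness side conditions are preserved under the permutation, and---in the $\new$ principal case---that the two $\new$-introductions can be aligned on a single fresh name-symbol before cutting on the subformula. Once these manipulations are checked against the structural lemmas already established, the remaining cases are entirely routine.
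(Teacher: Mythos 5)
Your proposal is correct and follows essentially the same route as the paper's proof: primary induction on the cut formula with a subsidiary induction on the subderivations, delegation of the first-order and nonlogical cases to the standard $\mathbf{G3c}$/Negri--von Plato results, commuting cases for $\newL$/$\newR$ handled by context weakening of the opposite premise, and the principal $\new$ case reduced to a cut on the body followed by the $F$ rule. The only difference is cosmetic: you explicitly align the two fresh name-symbols via \refLem{swapping-fresh-names} before cutting, whereas the paper takes both premises to use the same bound name $\Aa$ of the cut formula, so no realignment is needed.
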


\begin{proof}
  Following the proof of cut-elimination for similar systems such as
  $\mathbf{G3c}$ or $\mathbf{G3im}$ of~\cite{negri01structural}, we
  prove the lemma by induction on the structure of the cut-formula
  $\phi$ and then by a sub-induction on the sizes of the
  subderivations $\Pi$ of $\Sigma;\Gamma \seq \Delta,\phi$ and $\Pi'$
  of $\Sigma;\Gamma',\phi \seq \Delta$.  Thus, for the induction
  hypothesis, we may assume that the lemma holds for any instances
  with a less complex cut-formula or for all instances with the same
  cut-formula but with a smaller derivation of one or the other of
  $\Pi,\Pi'$.

  As in other proofs of cut-elimination for similar systems, there are
  four categories of cases:

  \begin{itemize}
  \item Base cases in which $\Pi$ or $\Pi'$ is an axiom or initial sequent.
    
  \item Left-commuting cases in which $\Pi$ starts with a rule in
    which $\phi$ is not principal.
  \item Right-commuting cases in which $\Pi'$ starts with a rule in
    which $\phi$ is not principal.
  \item Principal cases in which $\Pi$ and $\Pi'$ both start with a
    rule in which $\phi$ is principal.
  \end{itemize}

  All cases involving first-order rules exclusively are standard, and
  are shown in any standard proof of cut-elimination
  (e.g.~\cite{negri01structural} or~\cite{troelstra00basic}); their
  proofs rely upon the properties established in the previous section,
  including weakening, admissibility of $hyp^*$, contraction, and
  inversion.  In addition, Negri and von
  Plato~\cite{negri01structural} showed that nonlogical rules of the
  form we consider can be added to sequent systems like $\mathbf{G3c}$
  or $\mathbf{G3im}$ without damaging cut-elimination.  Hence, it will
  suffice to consider only the new cases involving the
  $\new$-quantifier rules.

  \begin{itemize}
  \item Base cases: There are no new base cases.
  \item Left-commuting cases: There are two new cases in which $\Pi$
    begins with $\newR$ or $\newL$.
    
    In the first case, we have
    \[\infer[\newL]{\Sigma;\Gamma,\new
      \Aa{:}\nu.\psi\seq\Delta,\phi}{\deduce{\Sigma\#\Aa{:}\nu;\Gamma,\psi \seq
        \Delta,\phi}{\Pi}}\]
    where $\Aa \not\in \Sigma$.  We can weaken $\Pi'$ to obtain
    a derivation $W(\Pi')$ of $\Sigma\#\Aa{:}\nu;\Gamma',\phi \seq \Delta'$, and by
    induction, we have $\Sigma\#\Aa{:}\nu;\Gamma,\psi,\Gamma' \seq
    \Delta,\Delta'$.  Then we may derive $\Sigma;\Gamma,\new
    \Aa{:}\nu.\psi,\Gamma' \seq \Delta,\Delta'$ using $\newL$.
      
    In the second case, we have
    \[\infer[\newR]{\Sigma;\Gamma\seq \Delta,\new \Aa{:}\nu.\psi
      ,\phi}{\deduce{\Sigma\#\Aa{:}\nu;\Gamma \seq \Delta,\psi,\phi}{\Pi}}\]
    where $\Aa \not\in \Sigma$.  We can weaken $\Pi'$ to
    get $W(\Pi')$ deriving $\Sigma\#\Aa{:}\nu;\Gamma',\phi\seq \Delta'$ and then by
    induction obtain $\Sigma\#\Aa{:}\nu;\Gamma',\Gamma \seq
    \Delta,\Delta',\psi$.  Using $\newR$ we can derive $\Sigma :
    \Gamma',\Gamma \seq \Delta,\Delta',\new \Aa{:}\nu.\psi$.

  \item Right-commuting cases.  These cases are exactly symmetric to
    the left-commuting cases.

    In the first case, we have
    \[\infer[\newL]{\Sigma;\Gamma',\phi,\new \Aa{:}\nu.\psi\seq\Delta'}{\deduce{\Sigma\#\Aa{:}\nu;\Gamma',\phi,\psi \seq
        \Delta'}{\Pi'}}\]
    where $\Aa \not\in \Sigma$.  We can weaken $\Pi$ to obtain a
    derivation $W(\Pi)$ of $\Sigma\#\Aa{:}\nu;\Gamma \seq \Delta,\phi$, and
    by induction, we have $\Sigma\#\Aa{:}\nu;\Gamma,\psi,\Gamma' \seq
    \Delta,\Delta'$.  Then we may derive $\Sigma;\Gamma,\new
    \Aa{:}\nu.\psi,\Gamma' \seq \Delta,\Delta'$ using $\newL$.
      
    In the second case, we have
    \[\infer[\newR]{\Sigma;\Gamma',\phi\seq \Delta',\new \Aa{:}\nu.\psi
      }{\deduce{\Sigma\#\Aa{:}\nu;\Gamma',\phi \seq \Delta',\psi}{\Pi'}}\]
    where $\Aa \not\in \Sigma$.  We can weaken $\Pi$ to obtain a
    derivation 
    $W(\Pi)$ of $\Sigma\#\Aa{:}\nu;\Gamma\seq \Delta,\phi$ and then by induction
    obtain $\Sigma\#\Aa{:}\nu;\Gamma',\Gamma \seq \Delta,\Delta',\psi$.
    Using $\newR$ we can derive $\Sigma ; \Gamma',\Gamma \seq
    \Delta,\Delta',\new \Aa{:}\nu.\psi$.
  \item Principal cases.  In this case, both $\Pi$ and $\Pi'$ decompose
    the cut formula.  The only new rule for decomposing formulas on
    the right is $\newR$, so the only new principal cut case is when
    we have
    \[\infer[\newR]{\Sigma;\Gamma\seq \Delta,\new \Aa{:}\nu.\phi}{\deduce{\Sigma\#\Aa{:}\nu;\Gamma\seq \Delta,\phi}{\Pi}} \quad \infer[\newL]{\Sigma;\Gamma',\new\Aa{:}\nu.\phi \seq \Delta'}{\deduce{\Sigma\#\Aa{:}\nu;\Gamma',\phi \seq \Delta'}{\Pi'}}\]
    for some $\Aa \not\in \Sigma$.  By induction we have
    $\Sigma\#\Aa{:}\nu;\Gamma,\Gamma' \seq \Delta,\Delta'$, and we may
    conclude $\Sigma;\Gamma,\Gamma' \seq \Delta,\Delta'$ by an
    application of the freshness rule.
  \end{itemize}
This completes the proof.
\end{proof}

\begin{theorem}
  Any derivable \NLseq sequent has a cut-free derivation; there is an
  algorithm for producing such derivations.
\end{theorem}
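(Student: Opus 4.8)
The plan is to obtain the theorem as an easy consequence of the Admissibility of Cut lemma just proved, by a routine induction on the structure of a derivation that may use the $cut$ rule in addition to the core rules of \NLseq. The real content of that lemma is an explicit, terminating procedure: given derivations of $\Sigma;\Gamma \seq \Delta,\phi$ and $\Sigma;\Gamma',\phi \seq \Delta'$ that are themselves cut-free, it returns a cut-free derivation of $\Sigma;\Gamma,\Gamma' \seq \Delta,\Delta'$. Inspecting the four families of cases in its proof, every recursive appeal to cut-admissibility is on a strictly smaller cut-formula (the principal cases, which are the only ones that create new cuts, produce cuts on proper subformulas of $\phi$) or on a strictly smaller pair of subderivations (the commuting cases), while the base cases introduce no cuts; hence if the two inputs are cut-free, so is the output. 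I would record this as: the lemma defines a total function $\mathrm{cut}(\Pi_1,\Pi_2)$ taking two cut-free derivations with matching end-sequents to a cut-free derivation of the corresponding cut conclusion.

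Given this, I would define the elimination algorithm $\mathcal{E}$ by recursion on the derivation tree. If the last inference of $\Pi$ is a rule $r$ other than $cut$, with immediate subderivations $\Pi_1,\ldots,\Pi_k$, set $\mathcal{E}(\Pi) = r(\mathcal{E}(\Pi_1),\ldots,\mathcal{E}(\Pi_k))$; this is well-formed because $\mathcal{E}$ preserves end-sequents and typing contexts, and it is cut-free by the induction hypothesis. If the last inference is $cut$, with left subderivation $\Pi_1$ of $\Sigma;\Gamma \seq \Delta,\phi$ and right subderivation $\Pi_2$ of $\Sigma;\Gamma',\phi \seq \Delta'$, first form the cut-free derivations $\mathcal{E}(\Pi_1)$ and $\mathcal{E}(\Pi_2)$ by the induction hypothesis, and then set $\mathcal{E}(\Pi) = \mathrm{cut}(\mathcal{E}(\Pi_1),\mathcal{E}(\Pi_2))$, a cut-free derivation of the same end-sequent. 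The outer recursion is on the finite derivation tree and each call to $\mathrm{cut}$ terminates (by the lexicographic measure already used in the proof of Admissibility of Cut), so $\mathcal{E}$ is a total algorithm, and $\mathcal{E}(\Pi)$ is always a cut-free derivation of the end-sequent of $\Pi$. Applying $\mathcal{E}$ to an arbitrary derivation of a sequent yields both halves of the statement.

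There is no genuine obstacle here, since the substantive work was done in the cut-admissibility lemma; the only point needing a sentence of justification is the claim in the first paragraph — that feeding cut-free premises to the cut-admissibility construction yields a cut-free derivation — and this follows simply by noting that the new cuts introduced in the principal cases are on proper subformulas, which is exactly the induction structure of that proof, so no fresh argument is required. The remainder is bookkeeping: checking that $\mathcal{E}$ preserves end-sequents and contexts, and that termination of the whole procedure follows by combining well-foundedness of the derivation-tree order with the termination of $\mathrm{cut}$.
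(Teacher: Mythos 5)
Your proposal is correct and takes essentially the same route as the paper, which likewise reduces the theorem to the admissibility-of-cut lemma and proceeds by induction on the number of cuts, repeatedly removing an uppermost cut---just a different bookkeeping of your structural recursion $\mathcal{E}$. One small simplification: since $cut$ is not among the core rules of \NLseq, the admissibility lemma by definition returns a derivation in the cut-free system whenever its premises are, so your case-by-case inspection of that proof to check cut-freeness, while sound, is not needed.
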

\begin{proof}
  Proof by induction on the number of cuts.  Given a derivation using
  cut, we can always find an uppermost use of cut in the derivation
  tree and remove it. This reduces the number of cuts by one.
\end{proof}

\subsection{Intuitionistic calculus}

\emph{Intuitionistic \NLseq} (\INLseq) is based on the
multiple-conclusion intuitionistic calculus
$\mathbf{G3im}$~\cite{negri01structural}, in which certain rules are
restricted to discard alternative conclusions (see \refFig{G3im}).  It
is straightforward to show that all of the structural
properties including cut-elimination hold for \INLseq; the same
arguments as given above in the classical case apply.  We will show
in \refSec{intuitionistic-conservativity} that \INLseq corresponds to
a theory of first-order intuitionistic logic that is equivalent to
Pitts' axiomatization in classical \NL. 
\begin{theorem}
  In \INLseq, if $\Sigma;\Gamma \seq \Delta$ holds then there is a
  cut-free derivation of $\Sigma;\Gamma \seq \Delta$.
\end{theorem}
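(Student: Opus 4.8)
The plan is to replay the entire development of the two previous subsections --- structural properties and cut-elimination --- for \INLseq, checking at each step that passing from $\mathbf{G3c}$ to the multiple-conclusion intuitionistic base $\mathbf{G3im}$ (\refFig{G3im}) introduces no new difficulty. First I would observe that the height-preserving lemmas Weakening, Context Weakening, and Substitution transfer verbatim, since none of their proofs inspects the succedent and since the new rules $\newL$, $\newR$, $\Sigma\#$, $F$, $A_2$, $A_3$, and $Ax$ are literally the same in both calculi. The admissibility of $EVL$, $EVR$, and $hyp^*$, together with \refLem{swapping-fresh-names}, likewise carry over: the only succedent-sensitive steps are the $\impL$/$\impR$ cases of $EVL$ and $EVR$, which behave exactly as in Negri and von Plato's treatment of $\mathbf{G3im}$~\cite{negri01structural}, while the $\new$ case of $hyp^*$ uses only $\newL$, $\newR$, and \refLem{swapping-fresh-names}.

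Next I would restate \refLem{inversion} with the intuitionistic right rules $\impR$ and $\allR$ (and whatever restriction \INLseq places on $\newR$) in their $\mathbf{G3im}$ forms, again appealing to~\cite{negri01structural} for those clauses; the clauses for $\newL$ and $\newR$ are proved exactly as before, the only non-routine cases being the rules $\allR$, $\exL$, $A_3$, $F$ that extend $\Sigma$, which are handled by the same equivariance-and-$F$ argument used in \refLem{inversion}. \refLem{contraction} then goes through unchanged, its one new case --- a contracted formula introduced by $\newL$ (dually $\newR$) --- being dispatched via intuitionistic inversion, \refLem{swapping-fresh-names}, and the $F$ rule, precisely as written in the classical case.

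Finally, for cut-elimination I would run the same induction as in the proof of the Admissibility of Cut lemma: on the structure of the cut-formula, with a sub-induction on the sizes of the two subderivations $\Pi$ and $\Pi'$. Negri and von Plato's analysis shows that the nonlogical rules, acting only on $\Gamma$, introduce no new principal cases and are unaffected by the intuitionistic restriction, so the only genuinely new cases are the left- and right-commuting cases in which $\Pi$ or $\Pi'$ ends in $\newL$ or $\newR$, and the single new principal case pairing $\newR$ with $\newL$; each is treated exactly as in the classical proof, applying the induction hypothesis to premises in the extended context $\Sigma\#\Aa{:}\nu$ and then eliminating $\Aa$ by the $F$ rule. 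I expect the only point needing real attention to be the interplay of $\mathbf{G3im}$'s context-discarding right rules with the commuting conversions: when such a rule (the intuitionistic $\impR$ or $\allR$, or a restricted $\newR$) pushes the cut-formula out of the succedent of $\Pi$ or $\Pi'$, the cut must be removed outright by weakening rather than by permutation --- but this is exactly the bookkeeping already present in the standard $\mathbf{G3im}$ cut-elimination argument, and it does not interact with the $\new$-rules. Once Admissibility of Cut is established, the theorem follows by induction on the number of cuts, removing an uppermost cut at each step, just as in the classical case.
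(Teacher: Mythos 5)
Your proposal is correct and matches the paper's own treatment: the paper simply observes that \INLseq is based on $\mathbf{G3im}$ and that all the structural properties and cut-elimination carry over by the same arguments as in the classical case, which is exactly the replay you carry out. Your additional bookkeeping about the context-discarding right rules of $\mathbf{G3im}$ is the standard Negri--von Plato material and introduces no divergence from the paper's route.
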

It is also straightforward to
show that \INLseq is equivalent to a single-conclusion intuitionistic
calculus, since the nonlogical and $\new$-quantifier rules preserve
the single-conclusion property.  

\begin{theorem}
If $\Sigma;\Gamma \seq \Delta$ holds in
\INLseq  then $\Sigma;\Gamma \seq \Orr \Delta$ holds in the
single-conclusion variant of \INLseq.
\end{theorem}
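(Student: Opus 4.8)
The plan is to induct on the height of the given \INLseq derivation of $\Sigma;\Gamma\seq\Delta$, transforming each multiple-conclusion inference into a single-conclusion derivation of $\Sigma;\Gamma\seq\Orr\Delta$, where $\Orr\Delta$ denotes the disjunction of the (finitely many) formulas in $\Delta$, with the convention $\Orr\emptyset=\false$ and with $\Orr\Delta$ read up to associativity, commutativity, and idempotence of $\orr$ (all of which the single-conclusion calculus validates). A preliminary observation is that the single-conclusion variant of \INLseq inherits, by exactly the proofs already given, all the structural results of the previous subsections --- weakening, admissibility of $hyp^*$ (\refLem{admissibility-of-hyp}), inversion, contraction, and admissibility of cut --- since every rule invoked in those proofs respects the single-conclusion discipline. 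I will use cut freely below, together with the trivial derivable facts $\Sigma;\chi\seq\chi\orr\chi'$, $\Sigma;\phi[t/x]\seq\exists x{:}\tau.\phi$, and the like.

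Most cases are routine. For the initial sequents $hyp$, $\trueR$, $\falseL$ the target is derived directly; e.g.\ $\Sigma;\Gamma,p(\vec t)\seq p(\vec t)\orr\Orr\Delta$ follows from $hyp$ by one $\orR$ step, and $\Sigma;\Gamma,\false\seq\Orr\Delta$ is an instance of $\falseL$. For every rule acting only on the antecedent or on the typing context --- the left logical rules, $\exL$, $\allL$, the equational rules, and the nonlogical rules $Ax$, $A_2$, $A_3$, $F$, $\Sigma\#$, and $\newL$ --- the succedent disjunction $\Orr\Delta$ is carried along unchanged and one applies the corresponding single-conclusion rule to the premises supplied by the induction hypothesis; for the rules that bind a fresh variable or name-symbol ($\exL$, $A_3$, $F$, $\newL$) one uses that this eigenvariable or fresh name does not occur in $\Delta$. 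For the non-restricted right rules $\orR$ and $\exR$, the induction hypothesis yields a single-conclusion proof of $\Sigma;\Gamma\seq\Orr\Delta'$ where $\Delta'$ is the premise succedent, and since $\Sigma;\Orr\Delta'\seq\Orr\Delta$ is easily derivable (for $\exR$ using $\phi[t/x]\impp\exists x{:}\tau.\phi$), a cut produces $\Sigma;\Gamma\seq\Orr\Delta$. The right rules that discard the side succedent in $\mathbf{G3im}$ --- $\impR$, and, depending on the presentation in \refFig{G3im}, $\allR$ --- are in fact the simplest: the induction hypothesis applied to the premise (whose succedent is already a singleton) supplies a single-conclusion derivation directly, the single-conclusion rule then yields $\Sigma;\Gamma\seq\phi\impp\psi$ (resp.\ $\Sigma;\Gamma\seq\forall x{:}\tau.\phi$), and one $\orR$ step injects this into $\Orr\Delta$.

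The one case genuinely specific to \NLseq is $\newR$. If $\newR$ is restricted in \INLseq, it is handled exactly as $\impR$. If instead $\newR$ retains its side succedent, so the premise is $\Sigma\#\Aa{:}\nu;\Gamma\seq\phi,\Delta$ and the induction hypothesis gives $\Sigma\#\Aa{:}\nu;\Gamma\seq\phi\orr\Orr\Delta$, I would commute $\new$ past the disjunction as follows. First derive $\Sigma\#\Aa{:}\nu;\Gamma,\phi\orr\Orr\Delta\seq(\new\Aa{:}\nu.\phi)\orr\Orr\Delta$ by $\orL$: the right branch is immediate, and the left branch $\Sigma\#\Aa{:}\nu;\Gamma,\phi\seq(\new\Aa{:}\nu.\phi)\orr\Orr\Delta$ follows by $\orR$ and $\newR$ from $\Sigma\#\Aa{:}\nu\#\Ab{:}\nu;\Gamma,\phi\seq\phi[\Ab/\Aa]$, which is an instance of \refLem{swapping-fresh-names} applied to $hyp^*$. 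Cutting this against the induction hypothesis gives $\Sigma\#\Aa{:}\nu;\Gamma\seq(\new\Aa{:}\nu.\phi)\orr\Orr\Delta$, and since $\Aa$ no longer occurs anywhere in this sequent, one application of the $F$ rule removes it, yielding $\Sigma;\Gamma\seq(\new\Aa{:}\nu.\phi)\orr\Orr\Delta$ as required.

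I expect this $\newR$ case (in its side-succedent-retaining form) to be the only real obstacle: every other case is the familiar bookkeeping behind the equivalence of multi-succedent and single-succedent intuitionistic sequent calculi, whereas here one must move $\new$ through $\orr$ while keeping the freshness information in $\Sigma$ consistent, which is precisely what \refLem{swapping-fresh-names} and the strengthening effected by the $F$ rule are for. A secondary point requiring a little care is the empty-succedent case $\Delta=\emptyset$, where $\Orr\Delta=\false$: one simply checks that every rule which can leave the succedent empty in \INLseq translates correctly with $\false$ in place of $\Orr\Delta$, which it does by the same reasoning.
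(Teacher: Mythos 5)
Your proposal is correct and follows essentially the same route as the paper: the routine $\mathbf{G3i}$/$\mathbf{G3im}$ bookkeeping for the standard and nonlogical rules (which leave the succedent untouched), and for $\newR$ exactly the paper's construction --- cut the induction hypothesis $\Sigma\#\Aa{:}\nu;\Gamma\seq\phi\orr\Orr\Delta$ against an $\orL$-derivation of $\Sigma\#\Aa{:}\nu;\Gamma,\phi\orr\Orr\Delta\seq(\new\Aa{:}\nu.\phi)\orr\Orr\Delta$, whose left branch uses $\newR$ together with the single-conclusion form of \refLem{swapping-fresh-names} and $hyp^*$, and then discharge the now-unused name $\Aa$ with the $F$ rule. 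The only difference is presentational (you additionally cover the possibility of a succedent-restricted $\newR$), so nothing further is needed.
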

\begin{proof}
Most cases of the proof are analogous to the usual proof relating $\mathbf{G3i}$
and $\mathbf{G3im}$~\cite{negri01structural}.  The additional cases involve the nonlogical and
$\new$-quantifier rules.  Of these, the nonlogical rules are
straightforward because nothing changes on the right-hand side of the
sequent in these rules.  The case for $\newL$ is also straightforward
for the same reason.  

We show the case for $\newR$.  Suppose the derivation is of the form:
\[\infer[\newR]{\Sigma;\Gamma \seq \new
  \Aa{:}\nu.\phi,\Delta}{\Sigma\#\Aa{:}\nu;\Gamma \seq \phi,\Delta}
\]
By induction on the subderivation we know that
$\Sigma\#\Aa{:}\nu;\Gamma \seq \phi \orr \Orr\Delta$.  
We reason as follows:
\[\infer[F]{\Sigma;\Gamma \seq \new \Aa{:}\nu.\phi \orr \Orr \Delta}{
\infer[cut]{\Sigma\#\Aa{:}\nu; \Gamma \seq \new \Aa{:}\nu.\phi \orr \Orr \Delta}{
\Sigma\#\Aa{:}\nu;\Gamma \seq \phi \orr \Orr\Delta
& 
\infer[\orL]{\Sigma\#\Aa{:}\nu; \Gamma, \phi \orr \Orr\Delta \seq \new
  \Aa{:}\nu.\phi \orr \Orr \Delta}{
\infer[\orR_1]{\Sigma\#\Aa{:}\nu; \Gamma, \phi \seq \new
  \Aa{:}\nu.\phi \orr \Orr \Delta}
{
\infer[\newR]{\Sigma\#\Aa{:}\nu; \Gamma, \phi \seq \new
  \Aa{:}\nu.\phi}{
\Sigma\#\Aa{:}\nu\#\Ab{:}\nu; \Gamma, \phi \seq \phi[\Ab/\Aa]
}
}
& 
\infer[\orR_2]{\Sigma\#\Aa{:}\nu; \Gamma, \Orr \Delta \seq \new
  \Aa{:}\nu.\phi \orr \Orr \Delta
}{\infer[hyp^*]{\Sigma\#\Aa{:}\nu; \Gamma, \Orr \Delta \seq \Orr \Delta}{}}
}
}
}
\]
We can use the intuitionistic (single-conclusion) variant of
\refLem{swapping-fresh-names} to conclude
$\Sigma\#\Aa{:}\nu\#\Ab{:}\nu; \Gamma, \phi \seq \phi[\Ab/\Aa]$.
\end{proof}
\begin{figure*}[tb]
\[\begin{array}{cc}
\infer[\impR]{\Sigma;\Gamma \seq \phi \impp \psi,\Delta}{\Sigma;\Gamma,\phi \seq\psi}
& 
\infer[\impL]{\Sigma;\Gamma,\phi \impp \psi \seq \Delta}
      {\Sigma;\Gamma, \phi \impp \psi \seq \phi& 
        \Sigma;\Gamma,\psi \seq \Delta}\smallskip\\
\infer[\allR]{\Sigma;\Gamma \seq \forall x{:}\sigma.\phi,\Delta}
        {\Sigma,x{:}\sigma;\Gamma \seq \phi & 
        (x \not\in \Sigma)}
&
\infer[\allL]{\Sigma;\Gamma, \forall x{:}\sigma.\phi\seq \Delta}
        {\Sigma \vdash t : \sigma & 
        \Sigma;\Gamma, \forall x{:}\sigma.\phi,\phi\subs{t}{x} \seq \Delta}
\smallskip\\
\infer[\exR]{\Sigma;\Gamma \seq \exists x{:}\sigma.\phi,\Delta}
        {\Sigma \vdash t : \sigma
&       \Sigma;\Gamma \seq \exists x{:}\sigma.\phi,\phi\subs{t}{x},\Delta}
&
\infer[\exL]{\Sigma;\Gamma,\exists x{:}\sigma.\phi \seq \Delta}
        {\Sigma,x{:}\sigma;\Gamma, \phi\seq \Delta & 
          (x \not\in \Sigma)}\smallskip\\
\end{array}\]
\hrule
\caption{Variant rules for the intuitionistic multiple-conclusion
  calculus ($\mathbf{G3im}$)}\labelFig{G3im}
\end{figure*}

\section{Applications}\labelSec{applications}

\subsection{Syntactic Consistency}

For pure first-order logic, cut-elimination immediately implies
consistency, since by inspection of the rules there can be no shortest
proof of $\cdot;\cdot\seq \bot$.  However, in the presence of general nonlogical
rules, only a weaker result holds.  We say that an atomic formula is a
\emph{constraint} if it is an equality or freshness formula, and
$\Gamma$ is a constraint set of it contains only constraints.
\begin{proposition}\labelProp{partial-consistency}
  If $\cdot;\cdot\seq\bot$ has a cut-free derivation, then it has one using only
  nonlogical rules, in which each sequent is of the form $\cdot; \Gamma \seq
  \bot$, where $\Gamma$ is a constraint set.
\end{proposition}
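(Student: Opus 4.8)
The plan is to combine the cut-elimination theorem just proved with a top-down analysis of the shape of cut-free derivations, using induction on cut-free derivations to establish the slightly more explicit claim: every cut-free derivation of $\cdot;\Gamma\seq\bot$ in which $\Gamma$ is a constraint set can be replaced by one using only nonlogical rules in which every sequent has the form $\cdot;\Gamma'\seq\bot$ with $\Gamma'$ a constraint set. The invariant maintained up the derivation tree is exactly: the context stays $\cdot$, the succedent is the singleton $\{\bot\}$, and the antecedent is a multiset of equality and freshness atoms.

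The case analysis of the last rule is forced by this invariant. Since the succedent is $\{\bot\}$ and \NLseq has no right rule for $\bot$, and since $\bot$ is not an atom $p(\vec t)$, neither $hyp$, nor $\trueR$, nor any right logical rule can be the last rule; since the antecedent is a constraint set it contains no $\bot$ and no compound formula, so neither $\falseL$ nor any left logical rule applies. Hence the last rule is nonlogical. For $\eq R$, $\eq S$, the $Ax$-instances for $S_1$--$S_3$, $E_1$--$E_3$, $F_1$--$F_4$, $A_1$, and for $A_2$, I would check case by case that every premise again meets the invariant: the context is unchanged, the succedent remains $\{\bot\}$, and the formulas added to the antecedent are again equality or freshness atoms. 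The one point needing a remark is $E_3$ (and $\eq S$): the principal formula $p(\vec t)$ already lies in the antecedent, which is a constraint set, so $p$ is $eq$ or $fresh$, and the formula $p(\swap{a}{b}{\vec t})$ added in the premise---respectively the rewritten atom $P(u)$---is again a constraint. Applying the induction hypothesis to each subderivation and reapplying the same rule closes all of these cases; leaves are instances of $F_3$.

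The remaining cases are the two nonlogical rules that enlarge the context, $A_3$ and $F$, and this is where the real content lies. For $A_3$ there is in fact nothing to prove: its side condition requires a term $t$ with $\cdot\vdash t:\abs{\nu}{\sigma}$, but from the signature conventions---constants and declared function symbols all have a data type as codomain, and the only built-in term formers producing a non-data type are $swap$, which returns the type of its third argument, and $abs$, which takes an argument of name type---a straightforward induction shows the empty context has no closed term of any name type, hence none of any abstraction type, so $A_3$ cannot occur here. The genuine obstacle is $F$, whose application would adjoin a fresh name-symbol $\Aa{:}\nu$ and so break the requirement that the context stay empty. I would handle it by permuting every use of $F$ downward---it commutes with the remaining nonlogical rules, since the introduced name is fresh for everything below it---so that all uses of $F$ sit at the bottom of the derivation, and then arguing that the name-symbols so introduced may be deleted outright, because under the invariant above they are never needed to close a branch: the only closing rule is $F_3$, which demands an antecedent atom $a\fresh a$, and one must show that over a context consisting only of name-symbols, starting from the empty antecedent, such an atom cannot be obtained except in a way that makes the name dispensable. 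Making this last point precise---in effect, that fresh-name introduction carries no consistency strength in this setting---is the step I expect to be the main difficulty, and it is also where this proposition shades into the consistency argument it is meant to support.
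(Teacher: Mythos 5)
The forced case analysis in your first two paragraphs is exactly the paper's (very short) argument: with $\bot$ as the sole succedent and only equality/freshness atoms in the antecedent, neither $hyp$ nor any logical rule can apply, and every nonlogical rule leaves the succedent alone and adds only equality/freshness atoms on the left (the $F_3$ instance of $Ax$ being premise-free, $\bot$ itself never needs to enter $\Gamma$). Where your proposal goes wrong is in insisting that the typing context remain literally empty, and then trying to restore this invariant by permuting $F$ downward and ``deleting the name-symbols outright'' (once names are present you would also have to absorb the $\Sigma\#$ steps they enable, which you do not address). That deletion step is precisely what you admit you cannot make precise, and it is not a finishing touch: a syntactic proof that fresh-name introduction contributes nothing to a derivation of $\bot$ is essentially the consistency result this proposition is feeding into, so your route is both unfinished and close to circular.

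The paper neither needs nor proves the empty-context invariant. The content of \refProp{partial-consistency} that is actually used downstream is only that all rules are nonlogical, every antecedent is a constraint set, and every succedent is $\bot$; the context $\Sigma$ is allowed to grow through $F$ and $A_3$ (after which $\Sigma\#$ may fire). This is why \refThm{unsatisfiable-derives-false} is stated for an arbitrary $\Sigma$ and its proof has explicit cases for $F$, $\Sigma\#$ and $A_3$, handled semantically in the Herbrand model $NTm$ rather than by eliminating those rules from the derivation. With that weaker invariant your ``hard'' cases evaporate: $F$ and $A_3$ preserve it trivially, and your (correct) observation that $A_3$ cannot fire over the empty context becomes unnecessary. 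So if you drop the requirement that $\Sigma$ stay empty --- reading the ``$\cdot$'' in the statement as loosely as the subsequent theorem does --- your first two paragraphs already constitute the entire proof; as written, however, the proposal has a genuine gap at exactly the step you flagged.
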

The proof is immediate by observing that only nonlogical rules are
applicable to a derivation of $\cdot;\Gamma \seq \bot$ where $\Gamma$ is a
constraint set.  In particular, note that the instance of the $Ax$
rule scheme for $a \fresh a \impp \bot$ (axiom $F_3$) has no hypotheses:
\[
\infer[F_3]{\Sigma;\Gamma ,a\fresh a\seq \Delta}{}\]
so it is not
necessary to allow $\bot$ as a constraint (though this would not do
any harm either).

This means that nominal logic is consistent if and only if the
nonlogical rules are consistent.  We know that classical nominal logic
is consistent with respect to the semantics given by Pitts using
nominal sets~\cite{pitts03ic}, and we will show in the next section
that the two systems are equivalent, however, here we would like to
give a direct syntactic proof that applies to both classical and
intuitionistic variants of \NLseq.  To prove the consistency of the
nonlogical rules, it is necessary to exhibit a model.  We review how
to define a Herbrand-style semantics in terms of the syntax of nominal
terms (see e.g. Cheney~\cite{cheney06jsl} for more details).

\begin{definition}[Syntactic Swapping, Equality and Freshness]
Let $Tm$ be the set of swapping-free nominal terms generated by the grammar
\[t ::= \Aa \mid c \mid f(\vec{t}) \mid \abs{\Aa}{t}\]
We define the \emph{swapping function} on such terms as follows:
\begin{eqnarray*}
  \swap{\Aa}{\Ab}{\Aa} &=& \Ab\\
  \swap{\Aa}{\Ab}{\Ab} &=& \Aa\\
  \swap{\Aa}{\Ab}{\Ac} &=& \Ac \quad (\Aa, \Ab \neq \Ac)\\
  \swap{\Aa}{\Ab}{c} &=& c
\\
  \swap{\Aa}{\Ab}{f(\vec{t})} &=& f(\swap{\Aa}{\Ab}{\vec{t}})\\
  \swap{\Aa}{\Ab}{\abs{\Ac}{t}} &=& \ab{\swap{\Aa}{\Ab}{\Ac},\swap{\Aa}{\Ab}{t}}\end{eqnarray*}

We define the \emph{freshness} relation on ground terms
using the rules:
\[\begin{array}{c}
\infer{\Aa \fresh \Ab}{(\Aa \neq \Ab)}\quad
\infer{\Aa \fresh c}{}\quad
\infer{\Aa \fresh f(\vec{t})}{\Aa \fresh t_1 & \ldots & \Aa\fresh t_n}\quad
\infer{\Aa \fresh \abs{\Aa}{t}}{}\quad
\infer{\Aa \fresh \abs{\Ab}{t}}{\Aa \fresh t & (\Aa \neq \Ab)}\\
\end{array}\]

The \emph{nominal equality} relation is defined as follows:
\[\begin{array}{c}
\hyp{\Aa \eq \Aa}\quad
\hyp{c \eq c}\quad
\infer{f(\vec{t}) \eq f(\vec{u}) }{t_1 \eq u_1 &\ldots& t_n \eq u_n}\quad
\infer{\abs{\Aa}{t} \eq \abs{\Aa}{u}}{t \eq u}\quad
\infer{\abs{\Aa}{t} \eq \abs{\Ab}{u}}{t \eq \swap{\Aa}{\Ab}{u} & \Aa \fresh u & (\Aa \neq \Ab)}
\end{array}\]
\end{definition}

The following properties of syntactic freshness and equality are a
special case of more general properties established elsewhere, e.g. by
Urban et al.~\cite{urban04tcs}:
\begin{proposition}
  The nominal equality relation $\eq$ is an equivalence relation.
  Hence, $NTm = Tm/_{\eq}$ is well-defined.  Moreover, both $\eq$ and
  $\fresh$ are equivariant relations on $Tm$.
\end{proposition}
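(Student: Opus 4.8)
The plan is to prove each clause by a suitable induction, with all the real work concentrated in a handful of structural lemmas about syntactic swapping on $Tm$. First I would record: $\swap{\Aa}{\Aa}{t}=t$; involutivity, $\swap{\Aa}{\Ab}{\swap{\Aa}{\Ab}{t}}=t$; size-invariance, $\swap{\Aa}{\Ab}{t}$ has the same size as $t$; and the conjugation identity $\swap{\Ac}{\Ad}{\swap{\Aa}{\Ab}{t}}=\swap{\swap{\Ac}{\Ad}{\Aa}}{\swap{\Ac}{\Ad}{\Ab}}{\swap{\Ac}{\Ad}{t}}$. Each is a routine induction on $t$ using the defining clauses of syntactic swapping; only the conjugation identity needs a short case analysis on coincidences among the names.

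Next come the ``commutes-with-swapping'' facts. I would prove, by induction on derivations: (i) equivariance of $\fresh$, i.e.\ $\Aa\fresh t$ implies $\swap{\Ac}{\Ad}{\Aa}\fresh\swap{\Ac}{\Ad}{t}$, where the abstraction clauses use conjugation; (ii) equivariance of $\eq$, i.e.\ $t\eq u$ implies $\swap{\Ac}{\Ad}{t}\eq\swap{\Ac}{\Ad}{u}$, where the renaming clause of $\eq$ is discharged using conjugation and (i); and (iii) that $t\eq u$ implies $\Aa\fresh t\iff\Aa\fresh u$, by induction on the derivation of $t\eq u$, pushing the freshness obligation through the swap in the renaming premise by (i) and involutivity. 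Alongside these I would check reflexivity of $\eq$ (a direct structural induction, the abstraction case invoking the structural rule on the subterm) and the ``two fresh names'' lemma: $\Aa\fresh t\andd\Ab\fresh t$ implies $\swap{\Aa}{\Ab}{t}\eq t$ (induction on $t$; the only interesting case, $t=\abs{\Aa}{t'}$ or $\abs{\Ab}{t'}$, is closed by the $\eq$ renaming rule together with reflexivity). Symmetry of $\eq$ is then an induction on derivations: in the renaming case, from $t_1\eq\swap{\Aa}{\Ab}{t_2}$ and $\Aa\fresh t_2$ the inductive hypothesis gives $\swap{\Aa}{\Ab}{t_2}\eq t_1$, hence $t_2\eq\swap{\Aa}{\Ab}{t_1}$ by involutivity and (ii), and $\Ab\fresh t_1$ follows from (i) and (iii).

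The main obstacle is transitivity, which I would prove by well-founded induction on the sum of the sizes of the three terms involved; this is legitimate because swapping is size-invariant and every premise of an abstraction rule mentions strictly smaller terms. The delicate case is when $\abs{\Aa}{t_1}\eq\abs{\Ab}{t_2}$ and $\abs{\Ab}{t_2}\eq\abs{\Ac}{t_3}$ are both concluded by the renaming rule with $\Aa,\Ab,\Ac$ pairwise distinct. Here one takes the premises $t_1\eq\swap{\Aa}{\Ab}{t_2}$ and $t_2\eq\swap{\Ab}{\Ac}{t_3}$, applies (ii) to the latter, rewrites $\swap{\Aa}{\Ab}{\swap{\Ab}{\Ac}{t_3}}$ as $\swap{\Aa}{\Ac}{\swap{\Aa}{\Ab}{t_3}}$ by conjugation, uses the inductive hypothesis on strictly smaller terms to obtain $t_1\eq\swap{\Aa}{\Ac}{\swap{\Aa}{\Ab}{t_3}}$, and then cancels the inner swap using the ``two fresh names'' lemma, since $\Aa\fresh t_3$ (from $\Aa\fresh t_2$ via (iii) and (i)) and $\Ab\fresh t_3$ (a premise of the second equality); the same fact $\Aa\fresh t_3$ meets the side condition of the renaming rule for the conclusion. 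The remaining cases --- where some of $\Aa,\Ab,\Ac$ coincide, where one of the two equalities is a structural abstraction equality, or where the outermost term former is a function symbol or a constant --- are all easier, and keeping track of the index collisions in the delicate case is the only genuinely fiddly part. Once $\eq$ is known to be reflexive, symmetric and transitive, $NTm = Tm/_{\eq}$ is a well-defined quotient set, and the equivariance of $\eq$ and $\fresh$ is precisely what (i) and (ii) assert.
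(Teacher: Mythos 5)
Your proposal is correct, but it is worth noting that the paper does not actually prove this proposition at all: it simply observes that these facts are a special case of more general properties established elsewhere (Urban et al.'s work on nominal unification) and cites that reference. What you have written is essentially a self-contained reconstruction of that standard development, and it is the right one: structural lemmas for syntactic swapping (identity, involutivity, size-invariance, conjugation), equivariance of $\fresh$ and $\eq$ by rule induction, the interaction lemma $t \eq u \impp (\Aa \fresh t \iff \Aa \fresh u)$, the two-fresh-names lemma $\Aa \fresh t \andd \Ab \fresh t \impp \swap{\Aa}{\Ab}{t} \eq t$, symmetry by rule induction, and transitivity by induction on size, with the only delicate case being two renaming-rule instances with pairwise distinct binders --- exactly the case you isolate, and your handling of it (conjugation, cancelling the inner swap via the two-fresh-names lemma, recovering $\Aa \fresh t_3$ via equivariance and the interaction lemma) is sound. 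One small point to tighten: in the transitivity case your second appeal to the inductive hypothesis is on the triple $t_1,\ \swap{\Aa}{\Ac}{\swap{\Aa}{\Ab}{t_3}},\ \swap{\Aa}{\Ac}{t_3}$, and for the size measure to decrease you implicitly need that $\eq$-related terms have equal size (so that $|t_3| = |t_2|$); this follows by a one-line induction from size-invariance of swapping, but it should be stated, since size-invariance of swapping alone does not immediately justify the measure. With that lemma recorded, your argument goes through; the trade-off relative to the paper is simply self-containedness versus brevity, since the paper buys the result by citation to a more general setting (terms with variables), whereas your proof covers exactly the ground case needed here.
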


We now show how to interpret arbitrary nominal terms in $NTm$. 
\begin{definition}
  Let $\theta : V \to NTm$ be a substitution of ground nominal terms
  for variables, called an \emph{interpretation}.  We lift $\theta$ to
  a function from arbitrary terms to $NTm$ as follows:
  \begin{eqnarray*}
    \theta(\Aa) &=& \Aa\\
    \theta(c) &=&c\\
    \theta(f(\vec{t})) &=& f(\theta(t_1),\ldots,\theta(t_n))\\
    \theta(\swap{a}{b}{t}) &=& \swap{\theta(a)}{\theta(b)}{\theta(t)}\\
    \theta(\abs{a}{t}) &=& \abs{\theta(a)}{\theta(t)}
  \end{eqnarray*}

  We say that $\theta : FV(\Sigma) \to NTm$ \emph{satisfies} $\Sigma$
  (written $\theta : \Sigma$) if $\theta(x) : \Sigma(x)$ for each $x$
  and $\Aa \fresh \theta(x)$ for each constraint $\Aa \fresh x \in
  |\Sigma|$.  

  We write $\theta \models t \eq u$ or $\theta \models a \fresh t$ to
  indicate that $\theta(t) \eq \theta(u)$ or $\theta(a) \fresh
  \theta(t)$ respectively.  Similarly, $\theta \models \Gamma$
  indicates that $\theta \models A$ for each constraint $A$ in
  constraint set $\Gamma$.  We say that a constraint $A$ (or
  constraint set $\Gamma$) is \emph{satisfiable} if there is an
  interpretation $\theta : \Sigma$ such that $\theta\models A$
  (respectively, $\theta\models \Gamma)$ holds in $NTm$.
\end{definition}

\begin{proposition}\label{prop:nonlogical-valid}
  The axioms listed in \refFig{nonlogical-axioms} are valid for $NTm$, in the
  sense that for each axiom $\Andd P \impp \Orr Q$, if $\theta \models
  \Andd P$ then $\theta \models Q_i$ for some $Q_i \in \Orr Q$.
\end{proposition}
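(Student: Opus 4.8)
The plan is to verify each axiom in \refFig{nonlogical-axioms} in turn, checking that if an interpretation $\theta : \Sigma$ satisfies the premises $\Andd P$ then it satisfies one of the disjuncts $Q_i$. Since $\theta$ maps everything into $NTm$ via the syntactic swapping, equality, and freshness operations, each axiom becomes a purely combinatorial fact about ground nominal terms modulo $\eq$. I would group the work exactly as the figure does: swapping axioms $(S_1)$--$(S_3)$, equivariance axioms $(E_1)$--$(E_3)$, freshness axioms $(F_1)$--$(F_4)$, and the abstraction axiom $(A_1)$.

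First I would handle the swapping axioms by induction on the structure of ground terms. For $(S_1)$, $\swap{\Aa}{\Aa}{t} \eq t$: the base cases ($t = \Ab$, $t = c$) are immediate from the defining equations of syntactic swapping, and the inductive cases ($f(\vec{t})$, $\abs{\Ac}{t}$) follow since swapping distributes over the constructors. For $(S_2)$, involutivity $\swap{\Aa}{\Ab}{\swap{\Aa}{\Ab}{t}} \eq t$ follows by the same kind of induction, using that $\swap{\Aa}{\Ab}{\cdot}$ is its own inverse on names. For $(S_3)$, $\swap{\Aa}{\Ab}{\Aa} \eq \Ab$ is literally the first defining clause. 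Note that in the interpretation, $\theta$ pushes the object-level $swap$ symbol onto the syntactic swapping function, so these identities at the level of $NTm$ are exactly what is needed.

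Next, the equivariance axioms. $(E_1)$, $\swap{\Aa}{\Ab}{c} \eq c$, and $(E_2)$, $\swap{\Aa}{\Ab}{f(\vec t)} \eq f(\swap{\Aa}{\Ab}{\vec t})$, are again immediate from the defining equations for syntactic swapping. For $(E_3)$, $p(\vec t) \impp p(\swap{\Aa}{\Ab}{\vec t})$: here the relevant relations are $\eq$ and $\fresh$ (the only predicate symbols with a fixed interpretation in $NTm$), so this reduces to the equivariance of $\eq$ and $\fresh$ on $Tm$, which is exactly the second proposition quoted just above (from Urban et al.). The freshness axioms come next: $(F_1)$ requires that if $\Aa \fresh t$ and $\Ab \fresh t$ then $\swap{\Aa}{\Ab}{t} \eq t$, proved by induction on $t$ using the syntactic freshness rules (in the abstraction case one splits on whether the bound name equals $\Aa$, $\Ab$, or neither); $(F_2)$, $\Aa \fresh \Ab$ for names of distinct types $\nu \not\equiv \nu'$, holds because distinct name-types draw from disjoint name-sets so $\Aa \neq \Ab$ and the first freshness rule applies; $(F_3)$, $\Aa \fresh \Aa \impp \false$, holds vacuously since $\Aa \fresh \Aa$ is never derivable (the side condition $\Aa \neq \Ab$ on the name-name rule fails); and $(F_4)$, $\Aa \fresh \Ab \orr \Aa \eq \Ab$, holds by case analysis on whether $\Aa = \Ab$.

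The main obstacle is the abstraction axiom $(A_1)$: if $\Aa \fresh y$ and $x \eq \swap{\Aa}{\Ab}{y}$ then $\abs{\Aa}{x} \eq \abs{\Ab}{y}$. If $\Aa = \Ab$ this is just congruence for $\abs{\cdot}{\cdot}$ together with $(S_1)$ (so $x \eq y$), but if $\Aa \neq \Ab$ one must invoke the last clause of the nominal equality definition, namely $\abs{\Aa}{x} \eq \abs{\Ab}{u}$ whenever $x \eq \swap{\Aa}{\Ab}{u}$ and $\Aa \fresh u$ --- which matches the hypotheses once we take $u = y$. The subtlety is that $x$, $y$ here are values of $NTm$, i.e. equivalence classes, so one needs that the clauses of $\eq$ respect $\eq$ (so that $\abs{\Aa}{[x]} \eq \abs{\Ab}{[y]}$ is well-defined and follows from representative-level facts); this is covered by the first proposition quoted above ($\eq$ is an equivalence relation, $NTm$ well-defined) together with a small compatibility check that $\fresh$ and $\swap{\cdot}{\cdot}{\cdot}$ are congruences for $\eq$, which is again part of the Urban et al. development. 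Once these congruence facts are in hand, each axiom is a short structural induction or direct case split, and the proposition follows.
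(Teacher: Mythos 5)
Your overall strategy is the same as the paper's: verify each axiom of \refFig{nonlogical-axioms} directly against the syntactic definitions of swapping, freshness and equality on ground terms, by structural induction or case analysis, with the same case splits for $(F_1)$ (bound name equal to $\Aa$, to $\Ab$, or to neither) and for $(A_1)$ ($\Aa = \Ab$ versus $\Aa \neq \Ab$), and using the equivariance of $\eq$ and $\fresh$ for $(E_3)$. Your additional remark about well-definedness modulo $\eq$ is fine, and if anything more explicit than the paper, which delegates those congruence facts to the cited general development.

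The one real gap is the claim that $(E_2)$ is ``immediate from the defining equations for syntactic swapping.'' Axiom $(E_2)$ ranges over all function symbols of the signature, and the signature contains $swap_{\nu\tau}$ itself. For $f = swap$, interpreting both sides under $\theta$ produces nested applications of the syntactic swapping function, and the identity you need is the equivariance law
\[\swap{\Aa}{\Ab}{\swap{\Ac}{\Ad}{t}} \;=\; \swap{\swap{\Aa}{\Ab}{\Ac}}{\swap{\Aa}{\Ab}{\Ad}}{\swap{\Aa}{\Ab}{t}},\]
which is not a defining clause of the swapping function (terms in $Tm$ are swapping-free, so there is no clause for a swap applied to a swap); it requires its own induction on $t$ with a case analysis on how $\Ac,\Ad$ interact with $\Aa,\Ab$. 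The paper singles out exactly this point: it notes that abstraction and the other function symbols are equivariant by the defining clauses, and then separately states and uses the displayed equivariance property of the swapping function. Once you add that lemma (an easy induction), your argument goes through and coincides with the paper's proof.
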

\begin{proof}
  For $S_1$ and $S_2$, the proof is by induction on the definition of
  swapping for ground terms.  The validity of $S_3$ is immediate.

  For the equivariance axioms, the definition of swapping makes plain
  that abstraction and other function symbols besides
  swapping itself are equivariant.  In addition, it is not difficult
  to show that
  \[\swap{\Aa}{\Aa'}{\swap{\Ab}{\Ab'}{x}} =
  \swap{\swap{\Aa}{\Aa'}{\Ab}}{\swap{\Aa}{\Aa'}{\Ab'}}\swap{\Aa}{\Aa'}{x}\] 
  that is, that the syntactic swapping function is equivariant.  For
  the equivariance axioms for formulas, we only need to consider
  $E_{\eq}$ and $E_{\fresh}$.  But clearly equality is equivariant
  since
  \[x \eq y \impp \swap{\Aa}{\Ab}{x} \eq \swap{\Aa}{\Ab}{y}\] can be shown by
  induction on the derivation of $x \eq y$; similarly,
  \[\Aa \fresh x \impp \swap{\Ab}{\Ab'}{\Aa} \fresh \swap{\Ab}{\Ab'}{x}\]
  can be shown valid by induction on the derivation of $\Aa \fresh x$.

  For the axiom $F_1$, we must show that if $\Aa \fresh x$ and
  $\Ab\fresh x$ then $\swap{\Aa}{\Ab}{x} \eq x$.  The proof is by
  induction on the structure of $x$.  For $x = c$ the result is
  immediate; similarly, for $x = f(y\vec{t}$  the
  induction step is straightforward.  For $x = \Ac$, we have $\Aa,
  \Ab \neq \Ac$ so $\swap{\Aa}{\Ab}{\Ac} = \Ac \eq \Ac$.  For $x =
  \abs{\Ac}{y}$, there are two cases.  If $\Aa,\Ab \neq \Ac$ then we
  have $\Aa,\Ab \fresh y$ and
  \[\swap{\Aa}{\Ab}{\abs{\Ac}{y}} = \abs{\swap{\Aa}{\Ab}\Ac}{\swap{\Aa}{\Ab}{y}} \eq \abs{\Ac}{y}\]
  since by induction $\swap{\Aa}{\Ab}{y} \eq y$.  Otherwise, without
  loss of generality suppose $\Ab = \Ac$ (the case where $\Aa = \Ac$
  is symmetric).  We need to show that $\swap{\Aa}{\Ab}{\abs{\Ab}{y}} \eq
  \abs{\Ab}{y}$, or equivalently that $\abs{\Aa}{\swap{\Aa}{\Ab}{y}} \eq
  \abs{\Ab}{y}$.  If $\Aa = \Ab$, this is trivial.  Otherwise, it is
  sufficient to show that $\swap{\Aa}{\Ab}{y} \eq \swap{\Aa}{\Ab}{y}$
  (which is immediate) and $\Aa \fresh y$.  But since $\Aa \fresh
  \abs{\Ab}{y}$ and $\Aa \neq \Ab$, we know that $\Aa \fresh y$ holds.

  For $F_2$, clearly any two name symbols $\Aa{:}\nu$ and $\Ab{:}\nu'$
  of different sorts are distinct, so $\Aa \fresh \Ab$.

  For $F_3$, we need to show that $\Aa \fresh \Aa$ is not derivable.
  This is immediate from the definition of the freshness relation.

  For $F_4$, we need to show that either $\Aa \fresh \Ab$ or $\Aa \eq
  \Ab$ is derivable.  If $\Aa = \Ab$ then $\Aa \eq \Ab$ is derivable;
  otherwise $\Aa \neq \Ab$ so $\Aa \fresh \Ab$ is derivable.

  Finally, for $A_1$ we need to show that if $\Aa \fresh y$ and $x \eq
  \swap{\Aa}{\Ab}{y}$ then $\abs{\Aa}{x} \eq \abs{\Ab}{y}$.  There are
  two cases.  If $\Aa \neq \Ab$ then the last rule in the definition
  of nominal equality applies to show $\abs{\Aa}{x} \eq \abs{\Ab}{y}$.
  Otherwise, $\Aa = \Ab$ so $x \eq \swap{\Aa}{\Ab}{y} = y$ and so 
  $\abs{\Aa}{x} \eq \abs{\Ab}{y}$.
\end{proof}

\begin{proposition}\labelProp{cases-abstractions-equal}
  If $\theta \models \abs{a}{x} \eq \abs{b}{y}$ then either $\theta
  \models a \eq b ,x \eq y$ or $\theta \models a \fresh y, x \eq
  \swap{a}{b}{y}$.
\end{proposition}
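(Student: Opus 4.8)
The plan is to unfold the definition of $\theta \models \abs{a}{x} \eq \abs{b}{y}$ and then do a case analysis on the last rule of the derivation of the resulting nominal equality. First I would record a small observation about the syntax: since the signature assigns only data types $\delta$ to constants and to the codomains of function symbols, the only well-typed ground terms of a name type are name-symbols. Hence $\theta(a)$ and $\theta(b)$ are the $\eq$-classes of name-symbols; write $\Aa$ for $\theta(a)$ and $\Ab$ for $\theta(b)$, and write $t,u$ for representatives of $\theta(x),\theta(y)$. Since $\eq$ is a congruence, the interpretations $\theta(\abs{a}{x}) = \abs{\Aa}{t}$ and $\theta(\abs{b}{y}) = \abs{\Ab}{u}$ are well-defined in $NTm$, so the hypothesis $\theta \models \abs{a}{x}\eq\abs{b}{y}$ says exactly that $\abs{\Aa}{t} \eq \abs{\Ab}{u}$ is derivable.

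Next I would inspect which rules of the nominal equality relation can conclude a judgment of the form $\abs{\Aa}{t} \eq \abs{\Ab}{u}$: only the two abstraction rules apply. In the first case, the rule $\abs{\Ac}{v}\eq\abs{\Ac}{w}$ from $v \eq w$ forces $\Aa = \Ab$ and yields $t \eq u$; thus $\theta(a) = \Aa = \Ab = \theta(b)$ and $\theta(x) \eq \theta(y)$, i.e.\ $\theta \models a \eq b$ and $\theta \models x \eq y$, which is the first disjunct. In the second case, the rule $\abs{\Ac}{v}\eq\abs{\Ad}{w}$ from $v \eq \swap{\Ac}{\Ad}{w}$, $\Ac \fresh w$ and $\Ac \neq \Ad$ gives $t \eq \swap{\Aa}{\Ab}{u}$ and $\Aa \fresh u$. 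Since by construction $\theta(\swap{a}{b}{y}) = \swap{\theta(a)}{\theta(b)}{\theta(y)} = \swap{\Aa}{\Ab}{u}$, this says $\theta \models x \eq \swap{a}{b}{y}$ and $\theta \models a \fresh y$, which is the second disjunct. Reading off the appropriate disjunct in each case completes the argument.

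I do not expect a substantive obstacle: the result is essentially immediate once the definitions are unwound. The only points needing a little care are (i) checking that ground terms of name type are exactly name-symbols, so that $\swap{a}{b}{y}$ is interpreted as a genuine swap of two name-symbols and so that the two abstraction rules are the only ones that can apply; (ii) noting that the two cases really are exhaustive — the side condition $\Aa \neq \Ab$ of the second rule is exactly complementary to the equality $\Aa = \Ab$ forced by the first, and every derivation of $\abs{\Aa}{t}\eq\abs{\Ab}{u}$ (including the reflexive instances where the two sides are syntactically identical) must end in one of these two rules; and (iii) checking well-definedness on the quotient $NTm = Tm/_{\eq}$, which follows since $\eq$ is a congruence and swapping is equivariant.
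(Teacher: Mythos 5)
Your proof is correct and follows essentially the same route as the paper: the paper's proof is exactly a case analysis on the last rule of the derivation of $\theta(\abs{a}{x}) \eq \theta(\abs{b}{y})$, noting that only the two abstraction rules of the nominal equality relation can apply, with each rule yielding one disjunct. Your additional care about ground terms of name type being name-symbols and well-definedness on the quotient $NTm$ is fine but not needed beyond what the paper leaves implicit.
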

\begin{proof}
  The proof is by case analysis of the possible derivations of $\theta
  (\abs{a}{x}) \eq \theta(\abs{b}{y})$.  There are only two cases,
  corresponding to the last two rules in the definition of structural
  equality.  The result is immediate.
\end{proof}

\begin{proposition}\labelProp{substitution-freshness}
  If $\theta :\Sigma$ then $\theta \models \Aa \fresh t$ for each $\Aa
  \fresh t \in |\Sigma|$.
\end{proposition}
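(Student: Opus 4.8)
The plan is to reduce the statement, by unfolding the definition of $|\Sigma|$, to a structural induction on terms. Suppose $\Aa\fresh t\in|\Sigma|$. By the clauses defining $|{\cdot}|$, this constraint can only be contributed by the unique $\#$-binding of $\Aa$ in $\Sigma$, so $\Sigma$ must have the shape $\Sigma_0\#\Aa{:}\nu,\Sigma_1$ with $t\in Tm_{\Sigma_0}$, i.e.\ $\Sigma_0\nd t:\tau$ for some $\tau$. Since the context $\Sigma_0\#\Aa{:}\nu$ is well-formed, $\Aa$ does not occur in $\Sigma_0$, so $\Aa\notin FVN(t)$; and $\theta$ is defined on $FV(\Sigma)\supseteq FV(\Sigma_0)\supseteq FV(t)$, so $\theta(t)\in NTm$ makes sense. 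It therefore suffices to prove, by induction on the structure of $s$, the uniform claim: for every $s\in Tm_{\Sigma_0}$ we have $\Aa\fresh\theta(s)$ in $NTm$.

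For the base cases, if $s$ is a variable $x$ then $x{:}\tau'\in\Sigma_0$, so $\Aa\fresh x\in|\Sigma|$ and $\Aa\fresh\theta(x)$ is one of the conditions already packaged in the hypothesis $\theta:\Sigma$; if $s$ is a name-symbol $\Ab$ then $\Ab{:}\nu'\in\Sigma_0$, hence $\Ab\neq\Aa$, and $\theta(\Ab)=\Ab$, so the first rule defining syntactic freshness applies; if $s=c$ then $\theta(c)=c$ and $\Aa\fresh c$ holds by the corresponding rule. For the composite cases other than swapping, if $s=f(\vec{s})$ then each $s_i\in Tm_{\Sigma_0}$, so the induction hypothesis gives $\Aa\fresh\theta(s_i)$, and since $\theta(f(\vec{s}))=f(\theta(s_1),\ldots,\theta(s_n))$ the congruence rule for $\fresh$ concludes; if $s=\abs{a}{s'}$ then $a,s'\in Tm_{\Sigma_0}$, $\theta(a)$ is a name-symbol $\Ac$ (because the signature produces only datatypes, the only elements of $NTm$ of a name-type are classes of name-symbols), the induction hypothesis gives $\Aa\fresh\Ac$, hence $\Aa\neq\Ac$, together with $\Aa\fresh\theta(s')$, and $\theta(\abs{a}{s'})=\abs{\Ac}{\theta(s')}$, so the rule deriving $\Aa\fresh\abs{\Ab}{u}$ from $\Aa\fresh u$ and $\Aa\neq\Ab$ applies.

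The one case with any content is $s=\swap{a}{b}{s'}$. Here $a,b,s'\in Tm_{\Sigma_0}$ and $\theta(a),\theta(b)$ are name-symbols, say $\Ac$ and $\Ad$; the induction hypothesis gives $\Aa\fresh\Ac$, $\Aa\fresh\Ad$ and $\Aa\fresh\theta(s')$, so $\Aa\neq\Ac$ and $\Aa\neq\Ad$. By equivariance of syntactic freshness (established in the proof of Proposition~\ref{prop:nonlogical-valid}), from $\Aa\fresh\theta(s')$ we get $\swap{\Ac}{\Ad}{\Aa}\fresh\swap{\Ac}{\Ad}{\theta(s')}$; but $\swap{\Ac}{\Ad}{\Aa}=\Aa$ by the third clause of the definition of syntactic swapping, since $\Ac,\Ad\neq\Aa$. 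As $\theta(\swap{a}{b}{s'})=\swap{\Ac}{\Ad}{\theta(s')}$, this yields $\Aa\fresh\theta(s)$ and closes the induction. The main obstacle, such as it is, is precisely this swapping step: one has to route the freshness fact through the already-computed swap, using equivariance of $\fresh$ on $NTm$ plus the elementary observation that a swap of two names both distinct from $\Aa$ fixes $\Aa$; every other case is a direct appeal to a single clause of the definition of syntactic freshness, to the definition of $\theta:\Sigma$, or to the induction hypothesis.
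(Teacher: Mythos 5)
Your proof is correct and follows the same route as the paper: a structural induction on the term, with the variable case being the one that appeals directly to the definition of $\theta:\Sigma$. The paper states only this critical case and leaves the rest implicit; your extra detail on the swapping case (equivariance of $\fresh$ on $NTm$ plus the fact that a swap of two names distinct from $\Aa$ fixes $\Aa$) is exactly the right way to fill that gap.
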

\begin{proof}
  The proof is by induction on the structure of $t$.
  The critical case is for $t$ a variable; in this case, we need to
  use the fact that $\theta:\Sigma$ only if $\Aa \fresh \theta(x)$ for
  each $\Aa \fresh x \in |\Sigma|$.
\end{proof}

\begin{theorem}\labelThm{unsatisfiable-derives-false}
  Let $\Gamma$ be a set of freshness and equality formulas.  If
  $\Sigma;\Gamma \seq \bot$ is derivable then $\Gamma$ is
  unsatisfiable.
\end{theorem}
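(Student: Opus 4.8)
The plan is to combine cut-elimination with the Herbrand semantics on $NTm$ developed above, and to prove the contrapositive in the guise of a strengthened statement: first I would use cut-elimination to obtain a cut-free derivation of $\Sigma;\Gamma\seq\bot$. The next step, exactly as in the proof of \refProp{partial-consistency}, is to observe that in any cut-free derivation whose end-sequent has the shape $\Sigma;\Gamma\seq\bot$ with $\Gamma$ a constraint set, \emph{every} sequent has this shape (constraint antecedent, succedent the single formula $\bot$) and only nonlogical rules occur: no right rule can fire because $\bot$ is not principal in any right rule, the nonlogical rules never change the succedent, and no left logical rule can fire because a constraint antecedent contains only atoms. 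In particular $hyp$ and $\falseL$ cannot occur, the only zero-premise rule that can occur is $F_3$, the only axiom instances that occur are those whose $\bar P$ and $\bar Q$ are constraints (so $E_3$ for a non-constraint relation never appears), and the context can only grow, via $F$ and $A_3$.

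The heart of the argument is then an induction on this cut-free nonlogical derivation establishing: for every sequent $\Sigma';\Gamma'\seq\bot$ occurring in it and every interpretation $\theta:\Sigma'$, we have $\theta\not\models\Gamma'$. The base case is an $F_3$ instance with $a\fresh a\in\Gamma'$, and $\theta\not\models a\fresh a$ because $\theta(a)\fresh\theta(a)$ is never derivable in $NTm$ (equivalently, $F_3$ is valid, by \refProp{nonlogical-valid}). For each inductive case I would assume, toward a contradiction, some $\theta:\Sigma'$ with $\theta\models\Gamma'$, and exhibit an interpretation of one premise satisfying that premise's antecedent, contradicting the induction hypothesis. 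For an $Ax$ step this uses validity of the axiom in $NTm$ (\refProp{nonlogical-valid}) to select the true disjunct; for $\eq R$ and $\eq S$ it uses reflexivity of $\eq$ together with the fact that $\eq$ and $\fresh$ are compatible with $\eq$ (being relations on $NTm = Tm/_{\eq}$); for $A_2$ it uses \refProp{cases-abstractions-equal}; and for $\Sigma\#$ it uses \refProp{substitution-freshness}.

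The two cases where the context grows are the main obstacle and require the nominal-specific reasoning. In $A_3$, which passes to $\Sigma',a{:}\nu,x{:}\sigma;\Gamma',t\eq\abs{a}{x}\seq\bot$, I would use that every ground term of an abstraction type in $NTm$ is of the form $\abs{\Ab}{s}$ (since $abs$ is the only constructor producing such a type); writing $\theta(t)\eq\abs{\Ab}{s}$ and extending $\theta$ by $a\mapsto\Ab$, $x\mapsto s$ then gives an interpretation of the larger context satisfying $\Gamma',t\eq\abs{a}{x}$, noting that adjoining a variable does not change the associated set of freshness formulas. In $F$, which passes to $\Sigma'\#\Ac{:}\nu;\Gamma'\seq\bot$, the subtlety is that the eigen-name $\Ac$ may itself occur in the values $\theta(x)$, so I would first replace $\theta$ by $\swap{\Ac}{\Ad}{\theta}$ for a name $\Ad{:}\nu$ occurring neither in $\Sigma'$ nor in the finitely many $\theta(x)$; by equivariance of $\eq$ and $\fresh$ this still interprets $\Sigma'$ and satisfies $\Gamma'$ (which mentions neither $\Ac$ nor $\Ad$), and it now makes $\Ac$ fresh for all of $Tm_{\Sigma'}$, hence interprets $\Sigma'\#\Ac{:}\nu$, so the induction hypothesis applies and yields the contradiction. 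Instantiating the claim at the root, with $\Sigma' = \Sigma$ and $\Gamma' = \Gamma$, shows that no $\theta:\Sigma$ satisfies $\Gamma$, i.e., $\Gamma$ is unsatisfiable.
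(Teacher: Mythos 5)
Your proposal is correct and follows essentially the same route as the paper: an induction over the (automatically nonlogical-only) derivation, using the validity of the axioms in $NTm$ (\refProp{nonlogical-valid}), \refProp{cases-abstractions-equal} for $A_2$, \refProp{substitution-freshness} for $\Sigma\#$, grounding of abstraction-typed terms for $A_3$, and a freshness/renaming argument for $F$. The only differences are cosmetic: the preliminary appeal to cut-elimination is unnecessary since cut is not a primitive rule of \NLseq (the paper simply notes that only nonlogical rules are applicable), and your explicit swapping argument in the $F$ case spells out what the paper dispatches with ``without loss of generality we can assume $\Aa\fresh\theta$.''
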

\begin{proof}
  Proof is by induction on the structure of the derivation.  Note that
  the only applicable rules are nonlogical rules.  There is one case
  for each nonlogical rule.  Most cases are straightforward.  We
  present some interesting cases.
  
  All of the axioms in \refFig{nonlogical-axioms} hold in $NTm$, by
  Proposition~\ref{prop:nonlogical-valid}, so the cases in which these
  axioms are used are straightforward.  For example, for a derivation
  of the form
  \[\infer[F_3]{\Sigma;\Gamma,a\fresh a \seq \bot}{}\]
  clearly $\Gamma,a \fresh a$ is unsatisfiable.

  For a derivation of the form
  \[\infer[F_4]{\Sigma;\Gamma \seq \bot}{\Sigma;\Gamma, a \fresh b\seq
    \bot&\Sigma;\Gamma,a \eq b \seq \bot}\]
  we have $\Gamma,a \eq b$ and $\Gamma,a \fresh b$ unsatisfiable.  If
  $\theta: \Sigma$ then either $\theta(a) \eq \theta(b)$ or $\theta(a)
  \neq \theta(b)$, in which case $\theta(a) \fresh \theta(b)$.  In
  either case, $\theta$ cannot satisfy $\Gamma$.
  
  For a derivation ending with $F$, 
  \[\infer[F]{\Sigma;\Gamma\seq \bot}{\Sigma\#\Aa{:}\nu;\Gamma \seq \bot}\]
  if $\theta :\Sigma$, then without loss of generality we can assume
  $\Aa \fresh \theta$ so that $\theta:\Sigma\#\Aa{:}\nu$ and so $\theta
  \not\models\Gamma$ by induction.

  For $\Sigma\#$:
  \[\infer[\Sigma\#]{\Sigma;\Gamma\seq \bot}{\Sigma;\Gamma,\Aa\fresh
    t\seq \bot &  (\Aa \fresh t \in |\Sigma|)
    }\]
  if $\theta:\Sigma$ then $\theta \models
  \Aa \fresh t$ for any $\Aa \fresh t \in |\Sigma|$, by \refProp{substitution-freshness}.
  Consequently $\theta \not\models \Gamma$.

  For $A_2$, 
  \[\infer[A_2]{\Sigma;\Gamma,\abs{a}{x} \eq \abs{b}{y} \seq
    \bot}{\Sigma;\Gamma,a \eq b ,x \eq y \seq \bot&\Sigma;\Gamma,a
    \fresh y, x \eq \swap{a}{b}y \seq \bot}\]
  suppose $\theta:\Sigma$.  By induction $\theta \not\models \Gamma,a
  \eq b ,x \eq y$ and $\theta \not\models\Gamma ,a \fresh y, x \eq
  \swap{a}{b}y$.  There are three cases.  If $\theta(a) \eq \theta(b)$
  and $\theta(x) \eq \theta(y)$, then $\theta \not\models \Gamma$.
  Similarly, if $\theta(a) \fresh \theta(y)$ and $\theta(x) \eq
  \swap{\theta(a)}{\theta(b)}\theta(y)$ then $\theta\not\models
  \Gamma$.  Otherwise, by the contrapositive of
  \refProp{cases-abstractions-equal}, $\theta \not\models \abs{a}{x}
  \eq \abs{b}{y}$.  In any case, $\theta \not\models \Gamma,\abs{a}{x}
  \eq \abs{b}{y}$.

For $A_3$,
\[\infer[A_3]{\Sigma;\Gamma\seq \bot}{\Sigma\nd t:\abs{\nu}{\tau} & \Sigma,a{:}\nu,x{:}\tau;\Gamma,t\eq \abs{a}{x} \seq \bot}\]
if $\theta :\Sigma$ then $\theta(t) = \abs{\Aa}{v}$ for some $\Aa:\nu$
and $t:\tau$, so let $\theta' = \theta[a \mapsto \Aa,x \mapsto t]$.
Clearly $\theta' : \Sigma,a{:}\nu,x{:}\tau$ and $\theta' \models t \eq \abs{a}{x}$
so by induction $\theta' \not\models \Gamma$.  Since $\Gamma$ does not
mention $a$ or $x$, we can conclude $\theta \not\models \Gamma$.
\end{proof}

\begin{corollary}[Syntactic consistency]
  There is no derivation of $\cdot;\cdot\seq \bot$.
\end{corollary}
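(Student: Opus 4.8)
The plan is to obtain a contradiction from the assumption that $\cdot;\cdot\seq\bot$ is derivable, by chaining together the three results just established: the cut-elimination theorem, \refProp{partial-consistency}, and \refThm{unsatisfiable-derives-false}.

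First I would apply the cut-elimination theorem to get a cut-free derivation of $\cdot;\cdot\seq\bot$. By \refProp{partial-consistency} this cut-free derivation may be taken to use only nonlogical rules, with every sequent of the form $\cdot;\Gamma\seq\bot$ for $\Gamma$ a constraint set; in particular the endsequent has $\Gamma=\emptyset$, which is vacuously a set of equality and freshness formulas. Hence \refThm{unsatisfiable-derives-false} applies and tells us that the empty constraint set $\emptyset$ is unsatisfiable.

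The remaining step, and the only thing that genuinely needs checking, is that $\emptyset$ is in fact satisfiable, yielding the desired contradiction. This holds because an interpretation for the empty context $\cdot$ is just the empty function $\emptyset\to NTm$ (note $FV(\cdot)=\emptyset$ and $|\cdot|=\emptyset$), which trivially satisfies $\cdot$ and which vacuously satisfies the empty set of constraints. So $\emptyset$ is satisfiable, contradicting the previous paragraph, and therefore no derivation of $\cdot;\cdot\seq\bot$ exists.

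I do not expect any real obstacle in this corollary: the substantive work is all in the preceding lemmas — cut-elimination and the construction of the Herbrand-style model $NTm$ together with \refThm{unsatisfiable-derives-false}. The one pitfall to be careful about is the degenerate possibility that $NTm$ might be empty, which would make \emph{every} constraint set unsatisfiable and collapse the argument; this is avoided because satisfiability of $\emptyset$ only requires the empty interpretation on the empty context, independently of whether $NTm$ has any elements.
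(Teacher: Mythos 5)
Your proposal is correct and follows essentially the same route as the paper, which derives the corollary from \refProp{partial-consistency} and \refThm{unsatisfiable-derives-false} together with the observation that $\emptyset$ is a satisfiable constraint set. Your extra remarks (the empty interpretation witnessing satisfiability of $\emptyset$, and the non-issue of $NTm$ possibly being empty) are harmless elaborations of what the paper leaves implicit.
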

\begin{proof}
  This follows from \refProp{partial-consistency} and
  \refThm{unsatisfiable-derives-false}, since $\emptyset$ is a
  satisfiable constraint set.
\end{proof}

\subsection{Orthogonality of abstraction}

Using cut-elimination, we can also show that some parts of the
equational theory are ``orthogonal extensions'', that is, derivable
sequents not mentioning abstraction 
can be
derived without using the special properties of these symbols.

\begin{theorem}[Conservativity]\labelThm{conservativity-abs}
  Suppose $\Sigma$ has no variables mentioning abstraction
 and $\Sigma;\Gamma \seq \Delta$ and $\Gamma, \Delta$ have no
  subterms of the form $\abs{a}{t}$.
  Then there is a derivation of $\Sigma;\Gamma \seq \Delta$ that does
  not use any nonlogical rules involving abstraction.
\end{theorem}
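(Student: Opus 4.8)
The plan is to derive the result from cut-elimination together with a surgical analysis of a cut-free derivation. First I would apply the cut-elimination theorem to obtain a cut-free derivation $\Pi$ of $\Sigma;\Gamma\seq\Delta$. The nonlogical rules ``involving abstraction'' are $A_1$ (as an instance of the $Ax$ schema), $A_2$, $A_3$, and any instance of $E_2$ or an equality rule in which $abs$ or a term $\abs{a}{t}$ occurs; the goal is to transform $\Pi$ into a derivation using none of these. I would read the hypothesis that $\Gamma,\Delta$ do not mention abstraction as also excluding quantifiers ranging over abstraction types --- otherwise $\allR$ or $\exL$ could adjoin an abstraction-typed variable and $A_3$ might become applicable, and one would have to argue separately that such a variable is inert.

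The easy half is that under these hypotheses no \emph{term} of abstraction type, hence no subterm of the form $\abs{a}{t}$, ever occurs in $\Pi$. Contexts along $\Pi$ extend $\Sigma$ only by variables bound by $\allR$, $\exL$ or supplied to $\allL$/$\exR$ (all of abstraction-free type, by the subformula property of $\mathbf{G3c}$ and the convention just fixed) and by name-symbols adjoined by $F$, $\newL$, $\newR$, which carry name-types; and the only term formers yielding an abstraction type are $abs$ (excluded as a subterm) and $swap$ applied to an abstraction-typed argument (which recursively needs such a subterm). Hence no $t$ with $\Sigma' \nd t : \abs{\nu}{\sigma}$ is available at any node, so $A_3$ is never applicable, and for the same reason no $E_2$-, ${\eq}S$- or $hyp$-instance mentioning abstraction can occur (each presupposes an abstraction subterm already present; in particular the succedent stays abstraction-free all the way up, since the nonlogical rules never touch it). Only $A_1$ and $A_2$ remain: only $A_1$ can actually \emph{introduce} an abstraction equation $\abs{a}{x}\eq\abs{b}{y}$, manufacturing it from the abstraction-free antecedent $a\fresh y\andd x\eq\swap{a}{b}{y}$, and $A_2$ can act only on a formula so introduced, so it suffices to eliminate the uses of $A_1$.

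The main obstacle is this elimination. I would take the lowest occurrence of $A_1$, with premise $\Sigma;\Gamma,a\fresh y\andd x\eq\swap{a}{b}{y},\abs{a}{x}\eq\abs{b}{y}\seq\Delta$, and show that the added hypothesis is never productively used, so it can be weakened away; the conclusion of that $A_1$ is then derived without it, and iterating downward finishes. An abstraction equation can be principal only in ${\eq}S$ (which merely rewrites one such equation or abstraction term into another, staying inside a self-contained ``abstraction fragment''), in ${\eq}R$ (only if it is literally a reflexivity, a trivial case), in $hyp$ (impossible, since the succedent is abstraction-free), or in $A_2$. In the last case the crucial point is that the second premise of $A_2$ --- where $a\fresh y$ and $x\eq\swap{a}{b}{y}$ are added --- is, modulo redundant hypotheses, exactly the conclusion of the $A_1$ being removed; so a derivation of that premise already yields, after an inner induction removing any abstraction equations still present and a cleanup using \refLem{contraction} and \refLem{inversion}, a derivation of the $A_1$'s conclusion, while the first $A_2$-premise and the equation itself are discarded. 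The delicate bookkeeping --- formalising ``productively used'' when several abstraction equations are live at once, and carrying out the contraction/inversion cleanup --- is where I expect the real work to lie; the remainder follows the standard template for conservativity via cut-elimination (cf.\ Negri and von Plato~\cite{negri01structural}).
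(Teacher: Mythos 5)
There is a genuine gap in your ``easy half''. You argue that in a cut-free derivation of an abstraction-free end-sequent no term of abstraction type is ever available, so that $A_3$, and any ${\eq}R$/${\eq}S$ instance mentioning abstraction, simply cannot occur, leaving only $A_1$/$A_2$ to eliminate. This is false: the nonlogical rules have no subterm or subformula property. The rule ${\eq}R$ may introduce $t \eq t$ for an \emph{arbitrary} well-typed $t$, including $\abs{\Aa}{x}$; and $A_3$ only needs \emph{some} term of type $\abs{\nu}{\sigma}$ to be formable in the current context, which is almost always the case --- after a single $F$ step adjoining $\Aa{:}\nu$ (or whenever any name is in scope), $\abs{\Aa}{x}$ is well typed for any $x{:}\sigma$ in $\Sigma$, since $abs$ can be applied to any name and any term. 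Moreover, once such equations are present, ${\eq}S$ can propagate abstraction terms further. So a cut-free derivation of an abstraction-free sequent can perfectly well contain $A_3$, ${\eq}R$ on abstraction terms, and ${\eq}S$ steps involving them; the paper's proof devotes explicit (and non-trivial) cases to exactly these rules. Your reduction to ``only $A_1$ manufactures abstraction equations'' therefore collapses, and with it the plan of peeling off lowest $A_1$ occurrences.

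The paper instead proves a strengthened statement by induction on the derivation: it splits the antecedent as $\Gamma,\Gamma'$ where $\Gamma'$ collects the abstraction equations, and maintains the invariant that $\Gamma'$ is \emph{redundant} relative to $\Gamma$, i.e.\ for each $\abs{a}{t}\eq\abs{b}{u}\in\Gamma'$ one of the two $A_2$-disjuncts ($a\eq b$, $t\eq u$, or $a\fresh u$, $t\eq\swap{a}{b}{u}$) is already derivable abstraction-free from $\Gamma$. Then $A_1$, $A_3$ (using the hypothesis that $\Sigma$ has no abstraction-typed variables, so $t=\abs{u}{v}$ and one can substitute), and ${\eq}R$ each \emph{preserve} redundancy, while $A_2$ is discharged by cutting against the abstraction-free derivations witnessing redundancy (cut-elimination within the abstraction-free fragment introduces no abstraction rules). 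Your treatment of the $A_2$ case --- matching its second premise against the hypotheses of the $A_1$ that created the equation --- is the special case of this invariant where the equation came from $A_1$, but your proposal has no mechanism for equations created by $A_3$, ${\eq}R$, or rewritten by ${\eq}S$, and the notion of ``never productively used'' is left at the level where the real work would have to start. (Your explicit extra assumption that quantifiers do not range over abstraction types is a reasonable reading and not the problem.)
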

\begin{proof}
  We say that a context, formula, formula multiset, or sequent is
  abstraction-free if the abstraction function symbol and type
  constructor do not appear in it.  A derivation is abstraction-free
  if the rules $A_1,A_2,A_3$ do not appear in it.  We write
  $\nd^{-A}$ for abstraction-free derivability.

  The proof is by induction on the structure of cut-free derivations.
  We need a stronger induction hypothesis.  We say $\Gamma$ is
  \emph{good} if abstraction is only mentioned in equations and
  freshness formulas.  Note that if $\Sigma$ is abstraction-free and
  there are no constants whose types mention abstraction then the only
  well-formed closed terms of type $\abs{\nu}{\tau}$ are of the form
  $\abs{a}{t}$.  Hence, any equations among abstraction-typed terms
  are of the form $\abs{a}{t} \eq \abs{b}{u}$; we call such formulas
  abstraction equations.  Any context can be partitioned into
  $\Gamma,\Gamma'$ such that $\Gamma'$ contains all the abstraction
  equations.  We say that $\Gamma'$ is \emph{redundant} relative to $\Gamma$
  if whenever $\abs{a}{t} \eq \abs{b}{u} \in \Gamma'$, we have either
  $\nd^{-A}\Sigma;\Gamma \seq a \eq b$ and $t \eq u$ or
  $\nd^{-A}\Sigma;\Gamma \seq a \fresh u$ and $t \eq \swap{a}{b}{u}$.
  
  We will show that if $\Sigma,\Delta$ are abstraction-free and
  $\Gamma,\Gamma'$ is good and $\Gamma'$ is redundant relative to
  $\Gamma$, then if $\nd\Sigma;\Gamma,\Gamma' \seq \Delta$ then
  $\nd^{-A}\Sigma;\Gamma \seq \Delta$.  An abstraction-free $\Gamma$
  is obviously good and redundant relative to $\emptyset$, so the
  main theorem is a special case.

  The proof is by structural induction on the derivation.  The cases
  involving left or right rules are straightforward because such rules
  act only on $\Gamma$ and do not affect goodness and redundancy.  The
  case for $hyp$ is easy since the hypothesis cannot be in $\Gamma'$.

  For $A_1$, we have 
  \[\infer[A_1]{\Sigma;\Gamma,a\fresh x, x \eq \swap{a}{b}{y},\Gamma' \seq \Delta}{\Sigma;\Gamma,a\fresh x, x \eq \swap{a}{b}{y},\Gamma',\abs{a}{x} \eq \abs{b}{y}\seq \Delta}\]
  Clearly, $\Gamma',\abs{a}{x} \eq \abs{b}{y}$ is redundant relative
  to $\Gamma,a\fresh x, x \eq \swap{a}{b}{y}$.  Also,
  goodness is preserved.  So by induction we have
  $\Sigma;\Gamma,a\fresh x, x \eq \swap{a}{b}{y} \seq \Delta$, as
  desired.

  For $A_2$, we have
  \[\infer[A_2]{\Sigma;\Gamma,\Gamma',\abs{a}{x} \eq \abs{b}{y} \seq \Delta}
  {\Sigma;\Gamma,\Gamma',\abs{a}{x} \eq \abs{b}{y},a \eq b, x \eq y
    \seq \Delta & \Sigma;\Gamma,\Gamma',\abs{a}{x} \eq \abs{b}{y},a
    \fresh y,x \eq \swap{a}{b}{y} \seq \Delta}\] 
Since  $\Gamma',\abs{a}{x} \eq \abs{b}{y}$  is
  redundant relative to  $\Gamma$, there are
  two cases.  If $\Sigma;\Gamma \seq a \eq b$ and $ x \eq y$, then by
  induction we have a derivation of $\Sigma;\Gamma,a \eq b, x \eq y
  \seq \Delta$, and using cut we can derive $\Sigma;\Gamma \seq
  \Delta$ as desired.  Otherwise, if $\Sigma;\Gamma \seq a \fresh y$ and $ x
  \eq \swap{a}{b}y$, then by induction we have a derivation of
  $\Sigma;\Gamma, a \fresh y,x \eq \swap{a}{b}y\seq \Delta$, and using
  cut we can derive $\Sigma;\Gamma \seq \Delta$ as desired.
  Cut-elimination does not introduce uses of the abstraction rules, so
  the resulting derivations are abstraction-free.

  For $A_3$, we have
  \[\infer[A_3]{\Sigma;\Gamma,\Gamma' \seq \Delta}{\Sigma\nd t:\abs{\nu}{\tau} & \Sigma,a{:}\nu,x{:}\tau;\Gamma,t \eq \abs{a}{x},\Gamma' \seq \Delta}\]
  Since $\Sigma$ has no variables of abstraction type, we must have $t
  = \abs{u}{v}$ for some terms $\Sigma \nd u:\nu$ and $\Sigma \nd v:\tau$.  Therefore,
  we can substitute into the derivation $\Sigma,a{:}\nu,x{:}\tau;\Gamma,\Gamma',t
  \eq \abs{a}{x} \seq \Delta$ to get $\Sigma;\Gamma,\Gamma',\abs{u}{v}
  \eq \abs{u}{v}\seq \Delta$.  Clearly $\Sigma;\Gamma \seq u \eq u$
  and $v \eq v$, and $\Gamma',\abs{u}{v} \eq \abs{u}{v}$ is redundant
  relative to $\Gamma$, so by induction, we have a derivation of
  $\Sigma;\Gamma \seq \Delta$.

  For the reflexivity rule ${\eq}R$, we have
  \[\infer[{\eq}R]{\Sigma;\Gamma,\Gamma'\seq \Delta}{\Sigma;\Gamma,\Gamma',t\eq t\seq \Delta}\]
  If $t = \abs{a}{x}$, then clearly $\Gamma \seq a \eq a$ and $x \eq
  x$, so $\Gamma', \abs{a}{x} \eq \abs{a}{x}$ is redundant relative to
  $\Gamma$, and we have $\Sigma;\Gamma \seq \Delta$ by induction.
  Otherwise, $\Gamma,\Gamma',t\eq t$ is obviously still good and
  $\Gamma'$ redundant with respect to $\Gamma,t\eq t$, so we can again
  conclude $\Sigma;\Gamma \seq \Delta$ by induction.

  For ${\eq}S$-derivations, we have
  \[\infer[{\eq}S]{\Sigma;\Gamma,\Gamma',t\eq u, P(t) \seq \Delta}{\Sigma;\Gamma,t\eq u, P(t),P(u) \seq \Delta}\]
  If $P(u)$ is not an equation among abstraction-typed terms then the
  induction step is easy.  There are many cases depending on the
  structure of $P(x)$, but in each case we can show that $P(u)$ is
  also redundant relative to $\Gamma,t \eq u$ (if $t \eq u$ is not an
  abstraction equation) or $\Gamma$ (if $t \eq u$ is an abstraction
  equation).
  
  The remaining nonlogical rules do not involve formulas of the form
  $\abs{a}{x} \eq \abs{b}{y}$, so the induction step is immediate for
  these rules.

\end{proof}

\subsection{Equivalence to Nominal Logic}

In this section we discuss the relationship between the sequent
calculi \NLseq and \INLseq and classical and intuitionistic variants of
Nominal Logic respectively.  We aim to show that, modulo a
straightforward syntactic translation, formulas are provable in one
system if an only if they are provable in the other.  This in turn
suggests that they are equally expressive in a model-theoretic sense
(provided models for \NLseq are defined in an appropriate way for its
slightly different syntax, as done for example for
\FL~\cite{gabbay07jal}); however, in this article we will not pursue
the model theory of \NLseq.

\subsubsection{Classical Nominal Logic}
We first consider the classical case.  We write $NL$ for the set of
all axioms of Pitts' axiomatization of nominal logic, as reviewed in
\refSec{background-nl}.  For ordinary variable contexts $\Sigma$ and
$NL$-formula multisets $\Gamma,\Delta$, we write
$\nd_{NL}\Sigma;\Gamma \seq \Delta$ to indicate that
$\Sigma;\Gamma,\Gamma' \seq_{\mathbf{G3c}}  \Delta$ for some $\Gamma' \subseteq NL$.
Without loss of generality, a finite $\Gamma'$ can always be used.
We write $\nd_{\NLseq}$ for derivability in \NLseq.

There is one technical point to
address.  Our system contains explicit name-constants quantified by
$\new$ and appearing in typing contexts, whereas in Pitts' system
$\new$ quantifies ordinary variables.  To bridge this gap, we
translate \NL formulas to \NLseq formulas by replacing $\new$-bound
variables with fresh name-symbols.  For example, the \NL formula $\new
a{:}\nu.\new b{:}\nu'.  p(a,b)$ translates to the \NLseq formula $\new \Aa{:}\nu.\new
\Ab{:}\nu'.p(\Aa,\Ab)$.  We write $\phi^*$ for the translation of $\phi$, which is defined as follows:
\begin{eqnarray*}
A^* &= &A\\
\false^* &=& \false\\
(\phi \impp \psi)^*&=& \phi^* \impp \psi^*\\
(\forall x{:}\tau.\phi)^* &=& \forall x{:}\tau.\phi^*\\
(\new a{:}\nu.\phi)^* &=& \new \Aa{:}\nu. (\phi^*[\Aa/a]) \quad (\Aa = \iota(a))
\end{eqnarray*}
Technically, we translate sequents or derivation mentioning variables
in $\Var \cup \Name'$, to sequents or derivations mentioning variables
in $\Var \cup \Name'$, where $\Name'$ is an isomorphic copy of the set
of names $\Name$.  We assume that before translation, formulas are
renamed so that $\new$-bound variables are in $\Name'$, and we fix an
isomorphism $\iota : \Name' \to \Name$. In what follows, we will
sometimes leave $\iota$ implicit and assume that $\iota(a) = \Aa$
whenever we encounter a $\new$-quantifier or context of the form
$\Sigma\#\Aa{:}\nu$.

The
omitted cases for $\true,\andd,\orr,\exists$ are derivable via de
Morgan identities.  The translation of a judgment $\Sigma;\Gamma \seq
\Delta$ is $\Sigma;\Gamma^* \seq \Delta^*$, where $\Gamma^*, \Delta^*$
is the result of translating each element of $\Gamma, \Delta$
respectively.

We first show that every theorem of $NL$ translates to a theorem of
\NLseq.  
\begin{theorem}
  If $\nd_{NL}\Sigma;\Gamma \seq \Delta$ then $\nd_{\NLseq}
  \Sigma;\Gamma^*\seq \Delta^*$.
\end{theorem}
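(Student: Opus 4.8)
The plan is to prove the statement in three moves. First, transport the given $\mathbf{G3c}$ derivation of $\Sigma;\Gamma,\Gamma'\seq\Delta$ along the $*$-translation to obtain an \NLseq derivation of $\Sigma;\Gamma^*,(\Gamma')^*\seq\Delta^*$. Second, show independently that the translation of every axiom of $NL$ is derivable in \NLseq. Third, use the admissible cut rule (together with Context Weakening) to remove the translated axioms $(\Gamma')^*$ from the left, leaving $\Sigma;\Gamma^*\seq\Delta^*$.

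For the first move, recall that $\mathbf{G3c}$ (\refFig{seqrules-fole}) is a subsystem of \NLseq and has no rules mentioning $\new$; hence in any $\mathbf{G3c}$ derivation a $\new$-formula is never principal. The $*$-translation is the identity on terms and atoms, commutes with all the first-order connectives and quantifiers, and satisfies $(\new a{:}\nu.\phi)^*[t/x] = \new\Aa{:}\nu.(\phi^*[\Aa/a][t/x])$ because the name-symbol $\Aa=\iota(a)$ does not occur in any term $t$ of the source derivation. Consequently, applying $*$ sequent-by-sequent turns each $\mathbf{G3c}$ rule instance into a valid instance of the same rule — the eigenvariable side conditions of $\allR$ and $\exL$ are preserved since $*$ fixes ordinary variables — so we obtain the claimed \NLseq derivation.

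The bulk of the work is the second move: for each axiom $\chi\in NL$ (each a closed sentence) I would derive $\cdot;\cdot\seq\chi^*$ in \NLseq, treating the axiom groups in turn. The swapping axioms $CS_1$--$CS_3$, the equivariance axioms $CE_1$--$CE_5$, the freshness axioms $CF_1$--$CF_3$, and $CA_1$ translate to the universal closures of the matrix axioms of \refFig{nonlogical-axioms}, and are obtained by stripping the leading $\forall/\impp$ prefix with $\allR$ and $\impR$ and then applying the appropriate instance of the $Ax$ schema together with the equality rules. The freshness principle $CF_4 = \forall\vec x.\exists a{:}\nu.\,a\fresh\vec x$ is derived with $\allR$'s, then the $F$ rule introducing a fresh $\Aa{:}\nu$, then $\exR$ with witness $\Aa$, then repeated $\Sigma\#$ to discharge each $\Aa\fresh x_i\in|\Sigma\#\Aa{:}\nu|$. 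Axiom $CA_2$ uses $\allR$ and the $A_3$ rule applied to the bound variable, followed by $\exR$'s and $hyp$. For the $\new$-axiom scheme $CQ = \forall\vec x.(\new a{:}\nu.\phi \iff \exists a{:}\nu.\,a\fresh\vec x\andd\phi)$, whose translation is $\forall\vec x.(\new\Aa{:}\nu.\phi^*[\Aa/a] \iff \exists a{:}\nu.\,a\fresh\vec x\andd\phi^*)$, I would prove the two implications separately after $\allR$: forward, apply $\newL$ to introduce $\Aa$, then $\exR$ with witness $\Aa$, closing the conjunction with $\Sigma\#$ (for $\Aa\fresh\vec x$) and $hyp^*$ (for $\phi^*[\Aa/a]$); backward, apply $\exL$ to introduce a variable $a$ with hypothesis $a\fresh\vec x$, then $\andL$ and $\newR$ (introducing a fresh name-symbol $\Aa$), leaving the obligation to derive $\phi^*[\Aa/a]$ from $\phi^*$ in a context in which $\Sigma\#$ also gives $\Aa\fresh\vec x$.

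For the third move, each $\chi\in\Gamma'$ gives $\cdot;\cdot\seq\chi^*$, hence $\Sigma;\cdot\seq\chi^*$ by Context Weakening, and cutting once for each element of the finite set $\Gamma'$ eliminates $(\Gamma')^*$, giving $\Sigma;\Gamma^*\seq\Delta^*$. I expect the real obstacle to be the backward direction of $CQ$: deriving $\phi^*[\Aa/a]$ from $\phi^*$ when both the ordinary variable $a$ and the fresh name-symbol $\Aa$ are fresh for the free variables $\vec x$ of $\new a.\phi$. This is a renaming-under-freshness argument, essentially \refLem{swapping-fresh-names} but with an ordinary variable standing in for one of the name-symbols and with the key freshness fact supplied by the hypothesis $a\fresh\vec x$ rather than by $\Sigma\#$. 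After $\alpha$-renaming $\phi$ so that $\Aa$ does not occur in it, one applies $EVR$ to replace the goal $\phi^*[\Aa/a]$ by $\swap{a}{\Aa}{\phi^*[\Aa/a]}$, pushes the swap inward using the clauses of \refFig{formula-swapping}, and then uses $S_3$ (to rewrite $\swap{a}{\Aa}{\Aa}$ to $a$), $E_1$--$E_3$, and $F_1$ applied to $a\fresh x_i$ and $\Aa\fresh x_i$ (to rewrite $\swap{a}{\Aa}{x_i}$ to $x_i$) to reduce the formula, by equational reasoning with $\eq S$, to $\phi^*$, which is available as a hypothesis. The same care with $\alpha$-renaming and with the formula-swapping side conditions (inner $\new$-bound name-symbols must differ from $\Aa$) is needed whenever a $\new$-subformula is pushed past a swap.
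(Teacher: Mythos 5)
Your proposal is correct and follows essentially the same route as the paper: derive the $*$-translation of each $NL$ axiom in \NLseq (handling $CF_4$ via $F$, $\exR$ and $\Sigma\#$, and both directions of $(CQ)$ via $\newL$/$\newR$ plus an equivariance-and-$F_1$ renaming argument in the spirit of \refLem{swapping-fresh-names}), and then eliminate the finitely many axiom occurrences by cut. Your explicit treatment of transporting the $\mathbf{G3c}$ derivation along the $*$-translation, and your use of $EVR$ on the right rather than $EVL$ on the left in the backward $(CQ)$ case, are only cosmetic refinements of the paper's argument.
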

\begin{proof}
  We defined $\nd_{\NL} \Sigma;\Gamma \seq
  \Delta$ to mean $\nd_{\mathbf{G3c}}\Sigma;\Gamma,\Gamma' \seq
  \Delta$ for some finite subset $\Gamma' \subseteq \NL$.  Any
  $\mathbf{G3c}$ derivation is an $\NLseq$ derivation, so we just need
  to show that in $\NLseq$, all of the uses of \NL axioms are
  redundant.  We will show that each axiom $\phi \in \NL$ is derivable
  in \NLseq.  Thus, using $cut$ finitely many times, we can derive
  $\Sigma;\Gamma \seq \Delta$ in \NLseq.

  For most of the axioms, this is straightforward.  All of the axioms
  of the form $\forall \vec{x}.\Andd \vec{P} \impp \Orr \vec{Q}$ are
  clearly derivable from the corresponding nonlogical rules as
  follows:
  \[\infer[\allR]{\cdot;\cdot \seq \forall \vec{x}{:}\vec{\tau}.\Andd \vec{P} \impp \Orr
    \vec{Q}}{\infer[\impR,\andR]{\vec{x}{:}\vec{\tau};\cdot \seq \Andd \vec{P} \impp \Orr
      \vec{Q}}{\infer[Ax]{\vec{x}{:}\vec{\tau} ; \vec{P} \seq \Orr
        \vec{Q}}{\vec{x}{:}\vec{\tau};\vec{P},Q_1 \seq \Orr \vec{Q} & \cdots
        &\vec{x}{:}\vec{\tau};\vec{P},Q_n \seq \Orr \vec{Q} }}}\]
  with the topsequents all derivable using $\orR$ and $hyp$.

  This leaves axioms not fitting this pattern, including $(CF_2)$, $(CF_4)$,
  $(CA_1)$, $(CA_2)$, and $(CQ)$.  $(CA_1)$ and $(CA_2)$ can be derived using the
  nonlogical rules $A_1,A_2,A_3,{\eq}S$ of \NLseq, and $(CF_2)$ using
  $F_3$ and $F_4$ of \NLseq.
  We will show the cases for $(CF_4)$ and both directions of $(CQ)$ in
  detail.

  For an instance $\forall \vec{x}.\exists a.a\fresh \vec{x}$ of
  $CF_4$, the derivation is of the form
  \[\infer[\allR]{\cdot;\cdot\seq \forall \vec{x}{:}\vec{\tau}.\exists a{:}\nu.a\fresh \vec{x}}
  {\infer[F]{\vec{x}{:}\vec{\tau};\cdot \seq \exists a{:}\nu.a \fresh \vec{x}}
    {\infer[\exR,\Sigma\#]{\vec{x}{:}\vec{\tau}\#\Aa{:}\nu;\cdot \seq \exists a{:}\nu.a\fresh \vec{x}}
      {\hyp{\vec{x}{:}\vec{\tau}\#\Aa{:}\nu:\Aa \fresh \vec{x} \seq \Aa \fresh
          \vec{x}}}}}
  \]

  For a translated instance of $(CQ)$ of the form $\forall \vec{x}.(\new
  \Aa{:}\nu.\phi(\Aa,\vec{x}) \iff \exists a{:}\nu.a \fresh \vec{x} \andd
  \phi(a,\vec{x}))$, we will prove the two directions individually.
  For the forward direction, after some syntax-directed applications
  of right-rules we have
  \[
  \infer[\allR^n,\impR]{\cdot;\cdot \seq\forall \vec{x}{:}\vec{\tau}.(\new \Aa{:}\nu.\phi(\Aa,\vec{x})
    \impp \exists a{:}\nu.a \fresh \vec{x} \andd \phi(a,\vec{x}))}
  {\infer[\newL,\exR]{\vec{x}{:}\vec{\tau};\new \Aa{:}\nu.\phi(\Aa,\vec{x}) \seq \exists
      a.a \fresh \vec{x} \andd \phi(a,\vec{x})}
    {\infer[\andR]{\vec{x}{:}\vec{\tau}\#\Aa{:}\nu;\phi(\Aa,\vec{x}) \seq \Aa \fresh
        \vec{x} \andd \phi(\Aa,\vec{x})}
      {\infer[\Sigma\#^n]{\vec{x}{:}\vec{\tau}\#\Aa{:}\nu;\phi(\Aa,\vec{x})\seq \Aa \fresh \vec{x}}
        {\hyp[hyp]{\vec{x}{:}\vec{\tau}\#\Aa{:}\nu;\phi(\Aa,\vec{x}),\Aa \fresh \vec{x} \seq \Aa \fresh
            \vec{x}}} & \hyp{\vec{x}{:}\vec{\tau}\#\Aa{:}\nu;\phi(\Aa,\vec{x}) \seq
          \phi(\Aa,\vec{x})}}}}\]

  For the reverse direction, we need to show $\forall \vec{x}. \exists
  a{:}\nu.a \fresh \vec{x} \andd \phi(a,\vec{x}) \impp \new
  \Aa{:}\nu.\phi(\Aa,\vec{x})$.
  \[
  \infer[\allR,\impR]{\cdot;\cdot\seq \forall \vec{x}{:}\vec{\tau}.(\exists a{:}\nu. a \fresh
    \vec{x} \andd \phi(a,\vec{x}) \impp \new \Aa{:}\nu.\phi(\Aa,\vec{x}))}
  {\infer[\exL,\andL]{\vec{x}{:}\vec{\tau};\exists a{:}\nu. a \fresh \vec{x} \andd
      \phi(a,\vec{x}) \seq \new \Aa{:}\nu.\phi(\Aa,\vec{x})}
    {\infer[\newR]{\vec{x}{:}\vec{\tau},a{:}\nu; a \fresh \vec{x}, \phi(a,\vec{x}) \seq
        \new \Aa{:}\nu.\phi(\Aa,\vec{x})}
      {\infer[\Sigma\#^*,EVL]{\vec{x}{:}\vec{\tau},a{:}\nu\#\Ab{:}\nu; a \fresh \vec{x},
          \phi(a,\vec{x}) \seq \phi(\Ab,\vec{x})}
        {\infer[Ax^*]{\vec{x}{:}\vec{\tau},a{:}\nu\#\Ab{:}\nu; a \fresh \vec{x},\Ab \fresh
            \vec{x}, \swap{a}{\Ab}\phi(a,\vec{x}) \seq
            \phi(\Ab,\vec{x})}{\hyp{\vec{x}{:}\vec{\tau},a{:}\nu\#\Ab{:}\nu;
              \phi(\Ab,\vec{x}) \seq \phi(\Ab,\vec{x})}}
        }}}}
  \]
  Since both $a$ and $\Ab$ are fresh for all the other free variables
  of $\phi$, we have $\phi(a,\vec{x}) \iff
  \phi(\swap{\Ab}{a}{a},\swap{\Ab}{a}\vec{x}) \iff \phi(\Ab,\vec{x})$
  using equivariance and the fact that $a \fresh x\andd \Ab \fresh x
  \impp \swap{a}{\Ab}{x} \eq x$.

  Consequently, all the translations of axioms of $NL$ can be derived
  in \NLseq.  As a result, if $\Gamma' \subset NL$ is a finite set of
  axioms such that $\nd_{\NLseq} \Sigma;\Gamma,\Gamma'\seq\Delta$,
  then using the derivations of the axioms and finitely many instances
  of $cut$, we can obtain a derivation of $\nd_{\NLseq}
  \Sigma;\Gamma\seq\Delta$.
\end{proof}

Observe that this means that any closed theorem of $\NL$ can be
derived in \NLseq.  For example, from Pitts~\cite[Prop. 3 and 4]{pitts03ic}
we can show:
\begin{proposition}\labelProp{freshness-new-nl}
  \begin{itemize}
  \item If $FV(t) \subseteq \vec{x}$ and $FN(t) = \emptyset$ then we
    can derive
    $\Sigma;\Gamma \seq \forall a{:}\nu. \forall
    \vec{x}{:}\vec{\tau}. a \fresh x_1 \andd \cdots \andd a \fresh x_n
    \impp a \fresh t$.
\item If $FV(\phi) \subseteq \{a,\vec{x}\}$ then we can derive
  $\Sigma;\Gamma \seq \exists a{:}\nu. a \fresh \vec{x} \andd
  \phi(a,\vec{x}) \iff \forall a{:}\nu. a \fresh \vec{x} \impp \phi(a,\vec{x})$
  \end{itemize}
\end{proposition}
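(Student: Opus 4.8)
The quick route is to note that both displayed formulas are closed theorems of Pitts' axiomatization: the first is the standard freshness-propagation lemma, and the second is a form of the self-duality of the $\new$-quantifier, both established by Pitts in \cite[Prop.~3,~4]{pitts03ic}. Since neither formula contains $\new$, the translation $(\cdot)^*$ acts as the identity on it; hence by the theorem just proved (if $\nd_{NL}\Sigma;\Gamma\seq\Delta$ then $\nd_{\NLseq}\Sigma;\Gamma^*\seq\Delta^*$), each such formula $\phi$ satisfies $\nd_{\NLseq}\cdot;\cdot\seq\phi$, and Context Weakening (using $\cdot\leq\Sigma$) together with Weakening then yields the stated sequents for arbitrary $\Sigma$ and $\Gamma$.

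It is instructive, however, to give the derivations directly in \NLseq. For the first item, peeling off the right rules ($\allR$ for $a$ and each $x_i$, then $\impR$ and $\andL$, with $\Gamma$ passively carried along) reduces the goal to deriving $\Sigma';a\fresh x_1,\ldots,a\fresh x_n\seq a\fresh t$ in a context $\Sigma'$ containing $a{:}\nu$ and $\vec x{:}\vec\tau$, where $FV(t)\subseteq\vec x$ and $FN(t)=\emptyset$, so that $t\in Tm_{\Sigma'}$. Now adjoin a fresh name-symbol $\Aa{:}\nu$ by the $F$ rule; the $\Sigma\#$ rule then supplies $\Aa\fresh t$, $\Aa\fresh a$, and each $\Aa\fresh x_i$ (all of $t$, $a$, $x_i$ lie in $Tm_{\Sigma'}$), $F_1$ gives $\swap{\Aa}{a}{x_i}\eq x_i$ from $a\fresh x_i$ and $\Aa\fresh x_i$, a routine induction on the structure of $t$ using $E_1$ and $E_2$ upgrades these to $\swap{\Aa}{a}{t}\eq t$, and finally equivariance of freshness ($E_3$) applied to $\Aa\fresh t$, combined with $S_3$ (so that $\swap{\Aa}{a}{\Aa}\eq a$) and rewriting by ${\eq}S$, yields $a\fresh t$; popping the $F$ completes the branch.

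For the second item, the forward direction destructures $\exists a{:}\nu.\,a\fresh\vec x\andd\phi(a)$ with $\exL$ and $\andL$, introduces the universally bound $a'$ by $\allR$ and $a'\fresh\vec x$ by $\impR$, and then transports $\phi(a)$ to $\phi(a')$ exactly as in \refLem{swapping-fresh-names}: since both $a$ and $a'$ are fresh for every $x_i$, $F_1$ gives $\swap{a}{a'}{x_i}\eq x_i$, and equivariance ($E_3$, used as $EVL$/$EVR$) together with $S_3$ and ${\eq}S$ rewrites $\phi(a)$ into $\phi(a')$. The reverse direction adjoins a fresh name-symbol $\Aa{:}\nu$ by the $F$ rule, derives $\Aa\fresh\vec x$ by $\Sigma\#$, instantiates the universal at $\Aa$ via $\allL$, discharges the freshness premise with $\impL$ to obtain $\phi(\Aa)$, and closes with $\exR$ using the witness $\Aa$. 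The one delicate point in either direct proof is the equivariance bookkeeping---essentially a repetition of the argument behind \refLem{swapping-fresh-names}---whereas via the quick route there is nothing to do beyond the citation.
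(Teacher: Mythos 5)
Your quick route is exactly the paper's own justification: the paper offers no proof beyond observing that both formulas are theorems of \NL (Pitts~\cite[Prop.~3 and 4]{pitts03ic}) and hence, by the preceding translation theorem, derivable in \NLseq. Your supplementary direct \NLseq derivations are also sound at the paper's level of rigor; the only slight imprecision is that in the second item $\phi$ may itself contain $\new$, so $(\cdot)^*$ need not act as the identity there --- one instead applies the translation theorem to the \NL instance built from $\phi^\dagger$, whose $(\cdot)^*$-image is the desired formula.
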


Now we consider the converse: showing that there are
no ``new theorems'', that any $\NL$ sequent derivable in \NLseq is also
derivable in $\NL$.  This is not as straightforward because
subderivations of translated \NL judgments may involve name-symbols.
However, we can show that such name-symbols can always be removed.

We also introduce a converse translation mapping \NLseq formulas to
\NL formulas:
\begin{eqnarray*}
A^\dagger &= &A\\
\false^\dagger &=& \false\\
(\phi \impp \psi)^\dagger&=& \phi^\dagger \impp \psi^\dagger\\
(\forall x{:}\tau.\phi)^\dagger &=& \forall x{:}\tau.\phi^\dagger\\
(\new \Aa{:}\nu.\phi)^\dagger &=& \new a{:}\nu. (\phi^\dagger[a/\Aa]) \quad
(\iota(a) = \Aa)
\end{eqnarray*}
Technically, we translate \NL formulas over variables $\Var$ to \NLseq
formulas over $\Var \cup \Name'$, again using the bijection $\iota$  between name-variables $\Name'$
and names in $\Name$.  Note that
(up to $\alpha$-equivalence) the $(-)^*$-translation and
$(-)^\dagger$-translation are inverses.  We also define
the set $\|\Sigma\|$ as follows:
\[\|\Sigma\| = \{a \fresh x \mid \iota(a) \fresh x \in
|\Sigma|\} \cup \{a \fresh b \mid \iota(a) \fresh \iota(b) \in
|\Sigma|\} \]
that is, $\|\Sigma\|$ is the finite subset of $|\Sigma|$ consisting of
constraints whose right-hand sides are variables or names,
but with names  replaced by the corresponding name-variables according
to the bijection $\iota$.  

We can now show the desired result.  
\begin{theorem}\labelThm{conservativity-if}
  If $\Sigma;\Gamma \seq \Delta$ is
  derivable in \NLseq then $\Sigma^\dagger;\Gamma^\dagger,\|\Sigma\|\seq\Delta^\dagger$ is derivable in
  \NL.
\end{theorem}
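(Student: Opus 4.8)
The plan is to argue by induction on a cut-free \NLseq derivation of $\Sigma;\Gamma\seq\Delta$ (which exists by cut-elimination), translating each rule application into an \NL derivation; concretely, we show $\nd_{NL}\Sigma^\dagger;\Gamma^\dagger,\|\Sigma\|\seq\Delta^\dagger$, which unfolds to a $\mathbf{G3c}$ derivation from $\Gamma^\dagger,\|\Sigma\|$ together with finitely many instances of the axioms of \refFig{nl-ax}. The two book-keeping facts used throughout are $\|\Sigma,x{:}\tau\|=\|\Sigma\|$ and $\|\Sigma\#\Aa{:}\nu\|=\|\Sigma\|\cup\{a\fresh z\mid z\in FVN(\Sigma)\}$, where as always $a=\iota^{-1}(\Aa)$; both are immediate from the definitions of $|{\cdot}|$ and $\|{\cdot}\|$.

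The purely first-order rules of $\mathbf{G3c}$ and the equality rules ${\eq}R,{\eq}S$ translate to themselves, so those cases are immediate. Each nonlogical rule obtained from the $Ax$-scheme on an axiom of \refFig{nonlogical-axioms} is simulated by cutting against the corresponding universally quantified axiom of \refFig{nl-ax}: $S_i$ against $(CS_i)$, $E_i$ against $(CE_i)$, $F_1$ against $(CF_1)$, $F_2$ against $(CF_3)$, and $A_1$ against the left-to-right implication of $(CA_1)$; $F_3$ and $F_4$ follow from $(CF_2)$ together with reflexivity (and classical reasoning for $F_4$). Rule $A_2$ is simulated using the right-to-left implication of $(CA_1)$, and $A_3$ using the surjectivity axiom $(CA_2)$ instantiated at $y:=t^\dagger$ and then $\exists$-eliminated. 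Rule $F$ is simulated by cutting against the instance $\exists a{:}\nu.\,a\fresh\vec z$ of the freshness principle $(CF_4)$ (with $\vec z=FVN(\Sigma^\dagger)$) and applying $\exL$; the remaining obligation is exactly the induction hypothesis for the premise, since $\|\Sigma\#\Aa{:}\nu\|=\|\Sigma\|\cup\{a\fresh z\mid z\in\vec z\}$.

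Two cases are genuinely new. For $\Sigma\#$, the extracted constraint $\Aa\fresh t$ may have a compound right-hand side $t$, whereas $\|\Sigma\|$ records only constraints whose right-hand side is a variable or a name; we cut against $a\fresh t^\dagger$, taking the right premise from the induction hypothesis and discharging the left premise $\Sigma^\dagger;\Gamma^\dagger,\|\Sigma\|\seq a\fresh t^\dagger,\Delta^\dagger$ via \refProp{freshness-new-nl}(1) (an \NL theorem): since $FN(t^\dagger)=\emptyset$ and every variable of $t^\dagger$ already occurs in $\Sigma$, all the constraints $a\fresh z$ for $z\in FV(t^\dagger)$ lie in $\|\Sigma\|$, so $a\fresh t^\dagger$ is \NL-derivable from $\|\Sigma\|$. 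The $\newR$ case is routine: the induction hypothesis gives $\Sigma^\dagger,a{:}\nu;\Gamma^\dagger,\|\Sigma\|,a\fresh\vec z\seq\phi^\dagger,\Delta^\dagger$, and introducing a fresh $a$ via $(CF_4)$, building $\exists a.\,a\fresh\vec y\andd\phi^\dagger$ by $\andR$ and $\exR$ where $\vec y=FV(\new a.\phi^\dagger)\subseteq\vec z$ (the conjunct $a\fresh\vec y$ coming from $a\fresh\vec z$), and cutting against $(CQ)$ produces $\new a.\phi^\dagger$ on the right.

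The main obstacle is the $\newL$ case. From a hypothesis $\new a_0.\psi_0$, the characterization $(CQ)$ only supplies the body $\psi_0$ instantiated at a name fresh for $\vec y=FV(\new a_0.\psi_0)$, whereas the induction hypothesis for the premise of $\newL$ requires a name fresh for all of $\vec z=FVN(\Sigma^\dagger)\supseteq\vec y$. I resolve this exactly as in the reverse-$(CQ)$ step in the proof of the preceding theorem: choose via $(CF_4)$ a name $a'$ fresh for $\vec z$, use the ``some/any'' equivalence $\new a_0.\psi_0\iff\forall a_0.\,a_0\fresh\vec y\impp\psi_0$ of \refProp{freshness-new-nl}(2) (combined with $(CQ)$) to specialize the body at $a'$, obtaining $\psi_0[a'/a_0]$ with $a'\fresh\vec z$ now available, and apply the induction hypothesis to the premise with its fresh name-symbol taken to be $\iota(a')$. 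Equivalently, one can transport $\psi_0$ from the $(CQ)$-supplied name to the freshly chosen $a'$ using $(CF_1)$ and equivariance of formulas, as in \refLem{swapping-fresh-names}, since all free variables of $\psi_0$ other than $a_0$ lie in $\vec y$. Assembling these cases completes the induction; together with the preceding theorem, this shows that \NLseq and \NL prove the same formulas up to the $(-)^*$- and $(-)^\dagger$-translations.
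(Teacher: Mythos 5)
Your proposal is correct and follows essentially the same route as the paper's proof: induction over the \NLseq derivation, simulating each nonlogical rule by cutting against the corresponding axiom of \NL, handling $\Sigma\#$ via \refProp{freshness-new-nl}(1) together with the fact that the relevant constraints $a \fresh z$ already lie in $\|\Sigma\|$, handling $F$ via $(CF_4)$, and handling $\newL$/$\newR$ via $(CQ)$ and the some/any property. The only deviations are cosmetic: you have swapped which direction of $(CA_1)$ simulates $A_1$ versus $A_2$, and your explicit $(CF_4)$-plus-some/any transport in the $\newL$ case achieves (a bit more carefully) what the paper does by instantiating the $(CQ)$ scheme at the full variable list $FV(\Sigma^\dagger)$, which its side condition permits.
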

\begin{proof}
  The proof is by induction on the logical height of this
  derivation, with secondary induction on the total height.  For the cases corresponding to first-order/equational
  proof rules, the induction step is straightforward.

  For the cases corresponding to nonlogical rules corresponding to
  universal axioms $\forall \vec{x}.\Andd \vec{P} \impp \Orr\vec{Q}$,
  suppose that we have derivations of the form
  \[\infer[Ax]{\Sigma;\Gamma,\vec{P} \seq \Delta}{\Sigma ;
    \Gamma,\vec{P},Q_1 \seq \Delta & \Sigma ;\Gamma,\vec{P},Q_n \seq
    \Delta}\]
  Then by induction, we have $NL$ derivations of the $NL$ sequents
  $\Sigma^\dagger ; \Gamma^\dagger,\vec{P},Q_i,\|\Sigma\| \seq
  \Delta^\dagger$ for $i \in \{1,\ldots,n\}$.  It is straightforward
  to show that each of the
  axioms in \refFig{nonlogical-axioms} is provable in $\NL$, hence we
  can cut against each axiom instance:
  \[\infer[cut]{\Sigma^\dagger;\Gamma^\dagger,\vec{P},\|\Sigma\|\seq
    \Delta^\dagger}{
\Sigma^\dagger;\cdot \seq\forall
    \vec{x}.\Andd \vec{P} \impp \Orr\vec{Q} & \infer[\allR]{\Sigma^\dagger; \Gamma^\dagger,\vec{P},\forall
      \vec{x}.\Andd \vec{P} \impp \Orr\vec{Q} ,\|\Sigma\|\seq \Delta^\dagger}{
      \infer[\impL]{\Sigma^\dagger;\Gamma^\dagger,\vec{P},\Andd \vec{P} \impp \Orr\vec{Q},\|\Sigma\|\seq
        \Delta^\dagger}{\Sigma^\dagger;\Gamma^\dagger,\vec{P} ,\|\Sigma\|\seq \Andd \vec{P} & \infer[\orL^n]{\Sigma^\dagger
          ;\Gamma^\dagger,\vec{P},\Orr \vec{Q},\|\Sigma\| \seq \Delta^\dagger}
        {\Sigma^\dagger;\Gamma^\dagger,\vec{P},Q_1 ,\|\Sigma\|\seq\Delta^\dagger & \cdots &
          \Sigma^\dagger;\Gamma^\dagger,\vec{P},Q_n ,\|\Sigma\|\seq\Delta^\dagger }}}}\]

  The cases for $F_3,F_4,F,A_2, A_3, \Sigma\#,\newL,\newR$ remain.

  For $F_3$, we have a derivation 
  \[\infer[F_3]{\Sigma;\Gamma,a \fresh a \seq \Delta}{}\]
  In $\NL$ we can derive $ \Sigma^\dagger;\Gamma^\dagger,a \fresh a,\|\Sigma\| \seq \Delta^\dagger$ using
  the $a \fresh b \impp a \not\eq b$ direction of $(CF_2)$ since $a
  \not\eq a$ is contradictory.

  For $F_4$, we have a derivation
  \[\infer[F_4]{\Sigma;\Gamma\seq \Delta}{\Sigma;\Gamma, a \eq b \seq
    \Delta&\Sigma;\Gamma, a \fresh b\seq \Delta}\]
  By induction, we have derivations of $\Sigma^\dagger;\Gamma^\dagger,
  a \eq b,\|\Sigma\| \seq \Delta^\dagger$ and
  $\Sigma^\dagger;\Gamma^\dagger, a \fresh b,\|\Sigma\| \seq
  \Delta^\dagger$.  Since $a \fresh b \iff a \not\eq b$ and $a \eq b
  \orr a \not\eq b$ is a tautology in classical logic, $a \fresh b
  \orr a \not\eq b$ is also a tautology.  We can cut against a
  derivation of this formula to derive $\Sigma;\Gamma\seq \Delta$ in
  $NL$.

  For $F$, suppose we have a derivation of the form
  \[\infer[F]{\Sigma;\Gamma \seq \Delta}{\Sigma\#\Aa{:}\nu ;\Gamma\seq\Delta}\]
By induction, we can derive the $NL$ sequent $\Sigma^\dagger,a{:}\nu
;\Gamma^\dagger,\|\Sigma\#\Aa{:}\nu\|\seq\Delta^\dagger$.  Note that
$\|\Sigma\#\Aa{:}\nu\| = \|\Sigma\|, a\fresh \vec{x}$ where
$\vec{x}= FV(\Sigma^\dagger)$.  
Using the freshness axiom $(CF_4)$ of \NL, we can derive
  \[\infer[cut]{\Sigma^\dagger;\Gamma^\dagger,\|\Sigma\| \seq
    \Delta^\dagger}{
\Sigma^\dagger;\cdot\seq \forall
    \bar{x}{:}\vec{\tau}.\exists a{:}\nu.a \fresh \bar{x} &
    \infer[\allL,\exL]{\Sigma^\dagger;\Gamma^\dagger,\forall \bar{x}.\exists a{:}\nu.
      a \fresh \vec{x} ,\|\Sigma\| \seq \Delta^\dagger}{\Sigma^\dagger,a{:}\nu;\Gamma^\dagger,\|\Sigma\| ,a \fresh \vec{x}
      \seq \Delta^\dagger}}\]

  It is likewise easy to derive
  rules $A_2,A_3$ from axioms $(CA_1),(CA_2)$ of $\NL$ using $cut$.
  
  For $\Sigma\#$, suppose we have a derivation of the form:
\[
\infer[\Sigma\#]{\Sigma_1\#\Aa{:}\nu,\Sigma_2;\Gamma\seq \Delta}{
\Sigma_1\#\Aa{:}\nu,\Sigma_2;\Gamma,\Aa \fresh t\seq \Delta 
&
(\Aa \fresh t \in  |\Sigma_1|)}\]
By induction, we have
$\Sigma_1^\dagger,a{:}\nu,\Sigma_2^\dagger;\Gamma^\dagger,\Aa\fresh t,
\|\Sigma_1\#\Aa{:}\nu,\Sigma_2\|\seq \Delta^\dagger$.  Observe that $a
\fresh \vec{x} \subseteq \|\Sigma_1 \#\Aa{:}\nu,\Sigma_2\|$.  
Using \refProp{freshness-new-nl}(1), we can derive as follows:
\[
\infer[cut]{\Sigma_1^\dagger,a{:}\nu,\Sigma_2^\dagger;\Gamma^\dagger,
\|\Sigma_1\#\Aa{:}\nu,\Sigma_2\|\seq \Delta^\dagger}{
\Sigma_1^\dagger,a{:}\nu,\Sigma_2^\dagger;\cdot \seq \forall
a{:}\nu.\forall \vec{x}{:}\vec{\tau}. a \fresh \vec{x} \impp a \fresh
t& 
\infer[\allL^*,\impL^*]{\Sigma_1^\dagger,a{:}\nu,\Sigma_2^\dagger;\Gamma^\dagger,
\|\Sigma_1\#\Aa{:}\nu,\Sigma_2\|,\forall
a{:}\nu.\forall \vec{x}{:}\vec{\tau}. a \fresh \vec{x} \impp a \fresh
t\seq \Delta^\dagger
}{
\Sigma_1^\dagger,a{:}\nu,\Sigma_2^\dagger;\Gamma^\dagger,
\|\Sigma_1\#\Aa{:}\nu,\Sigma_2\|, a \fresh
t\seq \Delta^\dagger
}
}
\]

  Finally, we consider the cases for $\newL$ and $\newR$.
  For $\newL$, we have
  \[\infer[\newL]{\Sigma;\Gamma,\new \Aa{:}\nu.\phi(\Aa,\vec{x}) \seq
    \Delta}{\Sigma\#\Aa{:}\nu;\Gamma,\phi(\Aa,\vec{x}) \seq \Delta}\]
  From the upper derivation, by induction, we have a derivation of
  $\Sigma^\dagger,a{:}\nu ;
  \Gamma^\dagger,\|\Sigma\#\Aa{:}\nu\|,\phi^\dagger(a,\vec{x}) \seq
  \Delta^\dagger$. Since $\|\Sigma\#\Aa{:}\nu\| = \|\Sigma\|, a \fresh
  \vec{y}$ where $\vec{y} = FV(\Sigma^\dagger) \supseteq \vec{x}$, we can also derive $\Sigma^\dagger;\Gamma^\dagger,\|\Sigma\|,\exists a{:}\nu. a \fresh \vec{x} \andd
  \phi(a,\vec{x}) \seq \Delta$ using $\exL$ and $\allL$.  Finally, we
  can cut against the axiom instance $\forall
  \vec{x}{:}\vec{\tau}.\exists a{:}\nu.a \fresh\vec{x} \andd
  \phi^\dagger(a,\vec{x}) \iff \new a{:}\nu.\phi^\dagger(a,\vec{x})$ to prove that
  $\Sigma^\dagger;\Gamma^\dagger,\new a{:}\nu.\phi^\dagger(a,\vec{x}) \seq \Delta^\dagger$.

  For $\newR$, we have 
  \[\infer[\newR]{\Sigma;\Gamma \seq
    \new \Aa{:}\nu.\phi(\Aa,\vec{x}),\Delta}{\Sigma\#\Aa{:}\nu;\Gamma \seq \phi(\Aa,\vec{x}),\Delta}\]
  The argument is similar to the previous case: by induction, we
  can derive $\Sigma^\dagger,a{:}\nu ;
  \Gamma^\dagger,\|\Sigma\#\Aa{:}\nu\| \seq
  \phi^\dagger(a,\vec{x}),\Delta^\dagger$ in $NL$.  Thus, since
  $\|\Sigma\#\Aa{:}\nu\| = \|\Sigma\|,a \fresh \vec{y}$ where $\vec{y}
  = FV(\Sigma^\dagger)$, we can
  conclude $\Sigma^\dagger;\Gamma^\dagger,\|\Sigma\| \seq \forall a{:}\nu. a \fresh
  \vec{y} \impp \phi(a,\vec{x}),\Delta^\dagger$.
 Using \refProp{freshness-new-nl}(2) and the axiom $(CQ)$
  defining $\new$ in \NL we can cut against the formula
  \[\forall \vec{y}{:}\vec{\tau}.(\forall a{:}\nu.a\fresh \vec{y} \impp \phi^\dagger(a,\vec{x})) \iff \new
  a{:}\nu.\phi^\dagger(a,\vec{x})\]
where $\vec{y} \supseteq \vec{x}$.  
 We can conclude that $\Sigma^\dagger;\Gamma^\dagger,\|\Sigma\| \seq
  \Delta^\dagger,\new a{:}\nu.\phi^\dagger(a,\vec{x})$.
\end{proof}

\begin{corollary}
  If $\Sigma$ only contains variables and $\nd_{\NL}\Sigma;\Gamma^* \seq
  \Delta^*$ then $\Sigma;\Gamma \seq \Delta$ is derivable in \NLseq.
\end{corollary}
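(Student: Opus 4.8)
This corollary is the converse of the translation theorem proved just above (that $\nd_{NL}\Sigma;\Gamma\seq\Delta$ implies $\nd_{\NLseq}\Sigma;\Gamma^*\seq\Delta^*$), and I would obtain it essentially for free by running that theorem on the backward translations of the data at hand. Concretely: assume $\Sigma$ is an ordinary variable context and that the \NL sequent $\Sigma;\Gamma^\dagger\seq\Delta^\dagger$ is derivable, where $\Gamma^\dagger,\Delta^\dagger$ are the \NL formula multisets obtained by applying $(\cdot)^\dagger$ formula-by-formula. Since $\Sigma$ mentions no name-symbols it is a legitimate \NL context, so the translation theorem applies with $\Gamma^\dagger,\Delta^\dagger$ in the role of its premises and produces an \NLseq-derivation of $\Sigma;(\Gamma^\dagger)^*\seq(\Delta^\dagger)^*$.

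It then remains to identify $(\Gamma^\dagger)^*\seq(\Delta^\dagger)^*$ with $\Gamma\seq\Delta$. For this I would appeal to the remark that $(\cdot)^*$ and $(\cdot)^\dagger$ are mutually inverse up to $\alpha$-equivalence: applied to every formula of $\Gamma$ and of $\Delta$ this gives $(\Gamma^\dagger)^*\equiv_\alpha\Gamma$ and $(\Delta^\dagger)^*\equiv_\alpha\Delta$, so the derived sequent differs from $\Sigma;\Gamma\seq\Delta$ only by a renaming of $\new$-bound name-symbols. Because the $\newL$ and $\newR$ rules --- like all the rules of \NLseq --- impose no constraint on \emph{which} fresh name-symbol is chosen, \NLseq-derivability is closed under such renaming (this can also be read off \refLem{swapping-fresh-names} together with the weakening and inversion lemmas), and hence $\Sigma;\Gamma\seq\Delta$ is itself derivable in \NLseq. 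Combined with \refThm{conservativity-if} --- whose extra hypotheses $\|\Sigma\|$ and $\Sigma^\dagger$ collapse to $\emptyset$ and $\Sigma$ once $\Sigma$ carries no name-symbols, since then $|\Sigma|=\emptyset$ --- this packages the full two-way equivalence of \NLseq and \NL on ordinary variable contexts.

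I do not anticipate any genuine difficulty here; the corollary is really just a repackaging. The only points needing (routine) attention are lifting ``mutually inverse up to $\alpha$-equivalence'' from single formulas to the multisets $\Gamma$ and $\Delta$, and confirming that $\alpha$-equivalent \NLseq sequents are interderivable --- which is immediate, since a renaming of bound name-symbols turns a derivation into a derivation of the renamed sequent with the same structure and logical height, by the structural results of \refSec{sequent-calculus}.
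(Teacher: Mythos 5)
There is a genuine mismatch here, and it concerns the \emph{direction} of the statement rather than any local step. The printed corollary is garbled (the subscripts $\nd_{\NL}$ and ``derivable in \NLseq'' are evidently transposed), but both the surrounding text (``showing that there are no new theorems, that any \NL sequent derivable in \NLseq is also derivable in \NL'') and the paper's own one-line proof make the intended content unambiguous: for $\Gamma,\Delta$ multisets of \NL formulas and $\Sigma$ a pure variable context, \emph{if $\Sigma;\Gamma^*\seq\Delta^*$ is derivable in \NLseq, then $\Sigma;\Gamma\seq\Delta$ is derivable in \NL}. That is the conservativity direction, and its proof is a direct specialization of \refThm{conservativity-if}: apply that theorem to the \NLseq derivation to get an \NL derivation of $\Sigma^\dagger;(\Gamma^*)^\dagger,\|\Sigma\|\seq(\Delta^*)^\dagger$, then note $(\Gamma^*)^\dagger=\Gamma$, $(\Delta^*)^\dagger=\Delta$, and, because $\Sigma$ carries no name-symbols, $\Sigma^\dagger=\Sigma$ and $\|\Sigma\|=\emptyset$. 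Your proposal instead reads the corollary in the opposite direction (from \NL derivability to \NLseq derivability) and derives it from the \emph{earlier} soundness theorem composed with the fact that $(-)^*$ and $(-)^\dagger$ are mutually inverse up to renaming of $\new$-bound name-symbols. That argument never touches \refThm{conservativity-if} and cannot yield the conservativity conclusion; it only repackages a result already proved, which is also a sign that the reading is off --- a corollary placed immediately after \refThm{conservativity-if} should be expected to use it.

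Within the direction you chose, the argument is essentially fine but one point is stated too casually: it is not true that the $\newL$/$\newR$ rules ``impose no constraint on which fresh name-symbol is chosen'' --- as written, the eigenname in the premise is the very name-symbol bound in the conclusion, so closure of derivability under renaming of $\new$-bound name-symbols is not immediate from the rules; it needs the equivariance/fresh-name-swapping machinery (\refLem{swapping-fresh-names}, $EVL$/$EVR$) that you only cite parenthetically. That is repairable. The substantive issue remains that the statement the paper intends, and proves, is the converse of the one you established, so the key content --- that \NLseq proves no new \NL theorems over variable-only contexts --- is not addressed by your argument.
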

\begin{proof}
  By \refThm{conservativity-if}, we know that
  $\Sigma^\dagger;(\Gamma^*)^\dagger,\|\Sigma\| \seq
  (\Delta^*)^\dagger$.  By definition of the $(-)^*$ and $(-)^\dagger$
  translations, we know that $
  (\Gamma^*)^\dagger = \Gamma$ and $(\Delta^*)^\dagger = \Delta$.
  Moreover, since $\Sigma$ contains no name-symbols, by definition
  $\Sigma^\dagger = \Sigma$ and $\|\Sigma\| = \emptyset$. Hence,
  $\Sigma;\Gamma \seq \Delta$.
\end{proof}

\subsubsection{Intuitionistic Nominal Logic}
\labelSec{intuitionistic-conservativity}

\begin{figure*}[tb]
  \[
  \begin{array}{rc}
    \multicolumn{2}{l}{\text{Swapping}}\\
    (IS_1)& \forall a{:}\nu, x{:}\tau.\; (a~a)\act x \eq x\\
    (IS_2)& \forall a,a'{:}\nu,x{:}\tau.\; (a~a')\act(a~a')\act x \eq x\\
    (IS_3)& \forall a,a'{:}\nu.\; (a~a')\act a \eq a'\\
    \multicolumn{2}{l}{\text{Equivariance}}\\
    (IE_1)& \forall a,a'{:}\nu,b,b'{:}\nu',x{:}\tau.\; (a~a')\act (b~b')\act x \eq ((a~a')\act b~(a~a')\act b')\act(a~a')\act x\\
    (IE_2)& \forall a,a'{:}\nu,b{:}\nu',x{:}\tau.\; b\fresh x \impp (a~a')\act b \fresh (a~a') \act x\\
    (IE_3)& \forall a,a'{:}\nu,\vec{x}:\vec{\tau}.\; (a~a')\act f(\bar{x}) \eq f((a~a')\act \vec{x}) \\
    (IE_4)& \forall a,a'{:}\nu,\vec{x}:\vec{\tau}.\; p(\vec{x}) \impp p((a~a')\act \vec{x}) \\
    (IE_5) & \forall b,b'{:}\nu', a{:}\nu, x{:}\tau.\; (b~b')\act (\abs{a}x) \eq \abs{(b~b')\act a} ((b~b')\act x)\\
    \multicolumn{2}{l}{\text{Freshness}} \\
    (IF_1) & \forall a,a'{:}\nu, x{:}\tau.\; a\fresh x \andd a' \fresh x \impp (a~a')\act x \eq x\\
    (IF_2) & \forall a{:}\nu.\; \nott(a \fresh a)\\
    (IF_3) & \forall a,a'{:}\nu.\; a \fresh a' \orr a \eq a'\\
    (IF_4) & \forall a{:}\nu,a'{:}\nu'.\; a\fresh a'\\
    (IF_5) & \forall \vec{x}:\vec{\tau}.\; \exists a{:}\nu.~ a \fresh \vec{x}\\
    \multicolumn{2}{l}{\text{$\new$-quantifier}} \\
    (IQ)& \forall \vec{x}. (\new a{:}\nu.\;\phi) \iff (\exists a{:}\nu.\; a\fresh \vec{x} \andd \phi)\\
    \multicolumn{2}{l}{\text{where $FV(\new a.\phi) \subseteq \{\vec{x}\}$}}\\
    \multicolumn{2}{l}{\text{Abstraction}} \\
    (IA_1)& \begin{array}{rcl}
      \forall a,a'{:}\nu, x,x'{:}\tau.\; \abs{a}x \eq \abs{a'}x' &\iff& (a \eq a' \andd x \eq x')\\
      &\orr& (a' \fresh x \andd x' \eq (a~a')\act x)
    \end{array}\\
    (IA_2)& \forall y:\abs{\nu}{\tau}.\exists a{:}\nu,x{:}\tau.~ y \eq
    \abs{a}{x}
  \end{array}
  \]
\hrule
  \caption{Axioms of Intuitionistic Nominal Logic}\labelFig{inl-ax}
\end{figure*}

We wish to argue that the intuitionistic calculus \INLseq is really
``intuitionistic nominal logic''.  However, Pitts only considered
classical nominal logic.  There is a subtlety having to do with Pitts'
axiom $(CF_2)$ in the intuitionistic case.

Pitts' original axiom $(CF_2)$ stated that freshness among names is the
same as inequality:
\[(CF_2)\quad \forall a,a'{:}\nu.\; a \fresh a' \iff \nott(a \eq a')\] 
However, this axiom does not fit the scheme for nonlogical rules given
by Negri and von Plato~\cite{negri01structural}.
Instead, in \INLseq we use two nonlogical rules $F_3$ and $F_4$
asserting that no name is fresh for itself and that two names (of the
same type) are
either equal or fresh.  These two axioms are equivalent to $(CF_2)$ in
classical logic, but in intuitionistic logic, Pitts' axiom is weaker,
since $a \eq b \orr a \not\eq b$ does not follow from $(CF_2)$.  (Recall that
for the $F_4$ case of \refThm{conservativity-if}, we used excluded
middle for name-equality).

We have modified Pitts' axiomatization slightly by replacing the
original axiom $(CF_2)$ with two rules, $(IF_2)$ asserting that no
name is fresh for itself, and $(IF_3)$ stating that two names are
either fresh or equal.  In classical logic, these are equivalent
axiomatizations, whereas $(IF_3)$ is not provable in intuitionistic
logic from Pitts' axioms.  Moreover, it is computationally plausible that
equality and freshness among names are both decidable, since names are
typically finite, discrete data structures.

For this reason, we introduce an alternative axiomatization $INL$,
shown in \refFig{inl-ax}, differing in the replacement of $(CF_2)$
with two axioms $(IF_2)$ and $(IF_3)$.  These axioms are equivalent in
classical logic to $(CF_2)$, but better-behaved from a proof-theoretic 
perspective.

Let $\nd_{INL}$ indicate derivability in intuitionistic logic from the
axioms in $INL$. Using essentially the same proof techniques as for the
classical case, we have:
\begin{theorem}
  If $\Sigma$ contains only variables, then $\nd_{INL}\Sigma;\Gamma
  \seq \Delta$ is derivable if and only if $\nd_{\INLseq}
  \Sigma;\Gamma^*\seq\Delta^*$.
\end{theorem}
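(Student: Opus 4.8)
The plan is to mirror the two classical results already established — the forward-translation theorem and \refThm{conservativity-if} — working with $\mathbf{G3im}$ in place of $\mathbf{G3c}$ throughout, and to verify that the single place where the classical arguments genuinely appealed to classical logic is precisely the one repaired by the new axiom $(IF_3)$.

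For the ``only if'' half, that $\nd_{INL}\Sigma;\Gamma\seq\Delta$ implies $\nd_{\INLseq}\Sigma;\Gamma^*\seq\Delta^*$, I would follow the proof of the first classical theorem. Unfolding the definition of $\nd_{INL}$ gives a $\mathbf{G3im}$ derivation of $\Sigma;\Gamma,\Gamma'\seq\Delta$ for some finite $\Gamma'\subseteq INL$; such a derivation is already an $\INLseq$ derivation, so after the $*$-translation it suffices to show that the translation of each axiom of $INL$ is derivable in $\INLseq$, whence finitely many uses of the admissible $cut$ rule conclude. Every axiom except $(IF_2)$ and $(IF_3)$ is handled exactly as in the classical proof: the universal axioms from the corresponding nonlogical rules, $(IA_1)$ and $(IA_2)$ from $A_1,A_2,A_3,{\eq}S$, both directions of $(IQ)$ from $\newL,\newR$ together with equivariance and \refLem{swapping-fresh-names}, and $(IF_5)$ from $F$ and $\Sigma\#$. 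For $(IF_2)$, namely $\forall a{:}\nu.\,(a\fresh a\impp\bot)$, one $\allR$ and one $\impR$ reduce the goal to $a{:}\nu;a\fresh a\seq\bot$, an instance of the $F_3$ rule. For $(IF_3)$, namely $\forall a,a'{:}\nu.\,(a\fresh a'\orr a\eq a')$, two applications of $\allR$, one of $\orR$, and the $F_4$ rule reduce the goal to two instances of $hyp$.

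For the ``if'' half I would prove the intuitionistic analogue of \refThm{conservativity-if}: if $\Sigma;\Gamma\seq\Delta$ is derivable in $\INLseq$ then $\Sigma^\dagger;\Gamma^\dagger,\|\Sigma\|\seq\Delta^\dagger$ is derivable in $INL$. The proof proceeds by induction on logical height with a secondary induction on total height, and all cases — the first-order and equational rules, the universal nonlogical axiom rules, $F$, $A_2$, $A_3$, $\Sigma\#$, $\newL$, and $\newR$ — transcribe verbatim, relying on the intuitionistic versions of weakening, inversion, contraction and admissibility of $cut$ (which hold by the same arguments), of \refLem{swapping-fresh-names}, and of \refProp{freshness-new-nl} (whose proofs use only the $(IQ)$ axiom and intuitionistically valid reasoning). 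The only case that changes is $F_4$: from $\INLseq$ derivations of $\Sigma;\Gamma,a\eq b\seq\Delta$ and $\Sigma;\Gamma,a\fresh b\seq\Delta$ the induction hypothesis yields $INL$ derivations of $\Sigma^\dagger;\Gamma^\dagger,\|\Sigma\|,a\eq b\seq\Delta^\dagger$ and $\Sigma^\dagger;\Gamma^\dagger,\|\Sigma\|,a\fresh b\seq\Delta^\dagger$; instead of invoking excluded middle for name-equality as in the classical proof, I would cut against an instance of $(IF_3)$ and apply $\orL$. Likewise the $F_3$ case appeals to $(IF_2)$ in place of the $(CF_2)$ argument. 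Finally, when $\Sigma$ contains only variables we have $\Sigma^\dagger=\Sigma$, $\|\Sigma\|=\emptyset$, $(\Gamma^*)^\dagger=\Gamma$ and $(\Delta^*)^\dagger=\Delta$ up to $\alpha$-equivalence, so the two halves combine into the stated equivalence.

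The work here is bookkeeping rather than a single deep step, and the only subtle point is exactly the $F_4$/$(CF_2)$ interaction. In \refThm{conservativity-if} the classical argument explicitly used $a\eq b\orr a\not\eq b$, which is precisely what Pitts' $(CF_2)$ fails to supply intuitionistically; replacing $(CF_2)$ by $(IF_2)$ and $(IF_3)$ in the definition of $INL$ is what makes \INLseq match $INL$ on the nose, and the care required is to confirm that with this single modification no other appeal to classical logic survives anywhere in the adapted proofs — in particular in the $INL$-provability of the nonlogical axioms and of the formulas used in \refProp{freshness-new-nl}.
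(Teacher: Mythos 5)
Your overall plan is the paper's own (the paper merely asserts the result follows ``using essentially the same proof techniques as for the classical case''), and the point you single out is indeed the one the paper highlights: in the forward half, $(IF_2)$ and $(IF_3)$ are discharged by the nonlogical rules $F_3$ and $F_4$, and in the conservativity half the classical appeal to excluded middle for name-equality in the $F_4$ case is replaced by a cut against $(IF_3)$ followed by $\orL$, with $(IF_2)$ handling the $F_3$ case. That part of your argument is correct.

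However, your claim that every remaining case of the conservativity half ``transcribes verbatim'' hides a genuine gap in the $\newR$ case. The classical argument passes from the induction hypothesis $\Sigma^\dagger,a{:}\nu;\Gamma^\dagger,\|\Sigma\|,a\fresh\vec{y}\seq\phi^\dagger,\Delta^\dagger$ to $\Sigma^\dagger;\Gamma^\dagger,\|\Sigma\|\seq(\forall a{:}\nu.\,a\fresh\vec{y}\impp\phi^\dagger),\Delta^\dagger$ by applying $\impR$ and $\allR$ while retaining $\Delta^\dagger$ in the succedent. In $\mathbf{G3im}$ both rules discard the rest of the succedent, and the inference being smuggled in is an instance of the constant-domain schema $\forall x(\phi\orr\psi)\impp(\forall x.\phi)\orr\psi$, which is not intuitionistically valid; so this step is unavailable and \refProp{freshness-new-nl}(2) cannot be invoked as in the classical proof. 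The case can be repaired, but by a different route that never forms the universal formula: cut against the $(IF_5)$ instance $\exists a{:}\nu.\,a\fresh\vec{y}$, apply $\exL$, and then cut the induction hypothesis (cut formula $\phi^\dagger$) against the sequent $\Sigma^\dagger,a{:}\nu;\Gamma^\dagger,\|\Sigma\|,a\fresh\vec{y},\phi^\dagger\seq\new a.\phi^\dagger,\Delta^\dagger$, which is derivable by $\exR$, $\andR$ and the right-to-left direction of $(IQ)$; since only left rules, $\exR$, $\andR$, cut and contraction are used, $\Delta^\dagger$ is retained throughout and one obtains $\Sigma^\dagger;\Gamma^\dagger,\|\Sigma\|\seq\new a.\phi^\dagger,\Delta^\dagger$ directly. (That $\newR$ is the delicate rule intuitionistically is also visible in the paper's proof relating multiple- and single-conclusion \INLseq, where it is the only logical rule requiring a special argument.) With this one local repair your proof goes through.
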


\section{Conclusions}\labelSec{concl}

Nominal logic provides powerful techniques for reasoning about fresh
names and name-binding.  One of the most interesting features of
nominal logic is the $\new$-quantifier.  However, the techniques used
for reasoning with $\new$ offered by previous formalizations of
nominal logic are highly (but unnecessarily) complex.

In this article we have introduced a new sequent calculus $\NLseq$ for nominal
logic which uses typing contexts extended with freshness information to deal with
the $\new$-quantifier.  Its rules for $\new$ are symmetric and
rationalize a proof-search semantics for $\new$ that seems natural and
intuitive (inspired by the treatment of $\new$ in nominal logic
programming).  We proved cut-elimination in detail.  In addition, we
used \NLseq to provide a syntactic proof of consistency and a detailed
proof of equivalence to Pitts' axiomatization modulo ordinary
first-order (classical/intuitionistic) logic.  These results are the
first of their kind to be shown in detail.

\NLseq has also been used in other work:
\begin{itemize}
\item \NLseq provides a proof-search reading of $\new$ which is much
  closer to the approach taken in the \aprolog nominal logic
  programming language~\cite{cheney04iclp,cheney08toplas}.  While
  Gabbay and Cheney gave a proof-theoretic semantics of nominal logic
  programming based on \FLseq, this analysis does not seem relevant to
  \aprolog because it suggests a quite different (and, for typical
  programs, much more computationally intensive) proof-search
  technique for $\new$-quantified formulas.  In contrast, \NLseq seems
  to provide a proof-theoretic foundation for \aprolog's existing
  search technique. 
\item Gabbay and Cheney~\cite{gabbay04lics} showed that \FOLNabla,
  another logic due to Miller and Tiu~\cite{miller05tocl} possessing a
  self-dual ``fresh value'' quantifier, can be soundly interpreted in
  a higher-order variant of \FLseq via a proof-theoretic translation.
  However, the translation they developed was incomplete, and the
  possibility of finding a faithful translation was left open.
  Cheney~\cite{cheney05fossacs} showed how to translate to a
  higher-order variant of \NLseq and proved a completeness result.  In
  this paper we have focused on \NLseq only over first-order terms.
  It would be interesting to further explore \NLseq over higher-order
  terms and compare its expressiveness to more recent variations of
  Miller and Tiu's approach, such as the ``nominal
  abstraction'' system of Gacek et al.~\cite{gacek11ic}.
\item Miculan, Scagnetto and Honsell~\cite{miculan05merlin} have shown
  how to translate derivable judgments from (a natural-deduction
  variant of) \NLseq to the Theory of Contexts~\cite{honsell01tosca},
  an extension of the Calculus of Inductive Constructions with a
  theory axiomatizing a type of names with decidable equality,
  freshness, and name-binding encoded as second-order function
  symbols.  It may be interesting to consider the reverse direction,
  e.g. translating a first-order fragment of the Theory of Contexts to nominal logic.
\end{itemize}

Additional directions for future work include the development of
natural deduction calculi and type theories using the ideas of \NLseq.
One particularly interesting direction is the possibility of
developing a type system and confluent term rewriting system that
could be used to decide equality of nominal terms and proof terms.  In
such a system, the explicit equality and freshness theory that
necessitates the many nonlogical rules in \NLseq could be dealt with
implicitly via traditional rewriting and syntactic side-conditions,
leading to an even simpler proof theory for nominal logic.  However,
work in this direction by Sch\"opp and Stark~\cite{schoepp04csl}
indicates that there may be significant obstacles to this approach;
the system introduced in this article may be viewed as a well-behaved
fragment of their system.  Further development of the proof theory and
type theory of nominal logic (for example, building on nominal type
theories by Pitts~\cite{pitts11jfp}, Cheney~\cite{cheney12lmcs}, or
Crole and Nebel~\cite{crole13mfps}) seems possible and desirable.

\bibliographystyle{plain} 
\bibliography{nominal,ac,logicprog,logic,paper}

\begin{thebibliography}{10}

\bibitem{cheney05fossacs}
J.~Cheney.
\newblock A simpler proof theory for nominal logic.
\newblock In {\em FOSSACS 2005}, volume 3441 of {\em LNCS}, pages 379--394.
  Springer-Verlag, 2005.

\bibitem{cheney06jsl}
J.~Cheney.
\newblock Completeness and {Herbrand} theorems for nominal logic.
\newblock {\em Journal of Symbolic Logic}, 71(1):299--320, 2006.

\bibitem{cheney04iclp}
J.~Cheney and C.~Urban.
\newblock {Alpha}-{Prolog}: A logic programming language with names, binding
  and alpha-equivalence.
\newblock In {\em Proceedings of the 20th International Conference on Logic
  Programming ({ICLP} 2004)}, number 3132 in LNCS, pages 269--283, St. Malo,
  France, 2004. Springer-Verlag.

\bibitem{cheney12lmcs}
James Cheney.
\newblock A dependent nominal type theory.
\newblock {\em Logical Methods in Computer Science}, 8(1), 2012.

\bibitem{cheney08toplas}
James Cheney and Christian Urban.
\newblock Nominal logic programming.
\newblock {\em ACM Transactions on Programming Languages and Systems},
  30(5):26, August 2008.

\bibitem{crole13mfps}
Roy~L. Crole and Frank Nebel.
\newblock Nominal lambda calculus: An internal language for {FM}-cartesian
  closed categories.
\newblock In {\em MFPS}, 2013.
\newblock In press.

\bibitem{gabbay04lics}
M.~J. Gabbay and J.~Cheney.
\newblock A sequent calculus for nominal logic.
\newblock In {\em LICS 2004}, pages 139--148. IEEE, 2004.

\bibitem{gabbay02fac}
M.~J. Gabbay and A.~M. Pitts.
\newblock A new approach to abstract syntax with variable binding.
\newblock {\em Formal Aspects of Computing}, 13:341--363, 2002.

\bibitem{gabbay07jal}
Murdoch Gabbay.
\newblock Fresh logic: proof-theory and semantics for {FM} and nominal
  techniques.
\newblock {\em J. Applied Logic}, 5(2):356--387, 2007.

\bibitem{gacek11ic}
Andrew Gacek, Dale Miller, and Gopalan Nadathur.
\newblock Nominal abstraction.
\newblock {\em Inf. Comput.}, 209(1):48--73, 2011.

\bibitem{honsell01tosca}
Furio Honsell, Marino Miculan, and Ivan Scagnetto.
\newblock The theory of contexts for first order and higher order abstract
  syntax.
\newblock In {\em TOSCA}, volume~62 of {\em Electronic Notes on Theoretical
  Computer Science}, 2001.

\bibitem{miculan05merlin}
M.~Miculan, I.~Scagnetto, and F.~Honsell.
\newblock Translating specifications from nominal logic to {CIC} with the
  theory of contexts.
\newblock In R.~Pollack, editor, {\em MERLIN}, pages 41--49, Tallinn, Estonia,
  September 2005. ACM Press.

\bibitem{miller05tocl}
Dale Miller and Alwen Tiu.
\newblock A proof theory for generic judgments.
\newblock {\em ACM Trans. Comput. Logic}, 6(4):749--783, 2005.

\bibitem{negri01structural}
Sara Negri and Jan {von Plato}.
\newblock {\em Structural Proof Theory}.
\newblock Cambridge University Press, 2001.

\bibitem{pitts03ic}
A.~M. Pitts.
\newblock Nominal logic, a first order theory of names and binding.
\newblock {\em Information and Computation}, 183:165--193, 2003.

\bibitem{pitts11jfp}
A.~M. Pitts.
\newblock Structural recursion with locally scoped names.
\newblock {\em Journal of Functional Programming}, 21(3):235--286, 2011.

\bibitem{schoepp04csl}
Ulrich Sch\"opp and Ian Stark.
\newblock A dependent type theory with names and binding.
\newblock In {\em CSL 2004}, number 3210 in LNCS, pages 235--249, Karpacz,
  Poland, 2004.

\bibitem{troelstra00basic}
A.~S. Troelstra and H.~Schwichtenberg.
\newblock {\em Basic Proof Theory}.
\newblock Number~43 in Cambridge Tracts in Theoretical Computer Science.
  Cambridge University Press, second edition, 2000.

\bibitem{urban04tcs}
C.~Urban, A.~M. Pitts, and M.~J. Gabbay.
\newblock Nominal unification.
\newblock {\em Theoretical Computer Science}, 323(1--3):473--497, 2004.

\end{thebibliography}

\end{document}